\newtheorem{observation}{Observation}
\newtheorem{theorem}{Theorem}
\newtheorem{lemma}{Lemma}
\newcommand{\sapta}[1]{{\color{black} #1}}
\newcommand{\remove}[1]{}
\newcommand{\AlgName}{{\sc ABCC}}
\newcommand{\knows}[3]{$\mbox{\it SysInfo}^{{#3}} \subseteq \mbox{\it Server\_Changes}_{{#1}}^{{#2}}$}
\def\nats{{\mathbb N}}
\def\BibTeX{{\rm B\kern-.05em{\sc i\kern-.025em b}\kern-.08em
    T\kern-.1667em\lower.7ex\hbox{E}\kern-.125emX}}
\begin{document}

\title{Byzantine-Tolerant Register in a System with Continuous Churn\\
}

\author{\IEEEauthorblockN{ Saptaparni Kumar}
\IEEEauthorblockA{\textit{Department of Computer Science} \\
\textit{Boston College}\\
Chestnut Hill, USA \\
saptaparni.kumar@bc.edu}
\and
\IEEEauthorblockN{ Jennifer L. Welch}
\IEEEauthorblockA{\textit{Department of Computer Science and Engineering} \\
\textit{Texas A\&M University}\\
College Station, USA \\
welch@cse.tamu.edu}
}

\maketitle

\begin{abstract}

A shared read/write register emulation provides the illusion of shared-memory on top of message-passing models. 
The main hurdle with such emulations is dealing with server faults in the system. 
Several crash-tolerant register emulations in static systems require algorithms to replicate the value of the shared register onto a majority of servers. Majority correctness is necessary for such emulations. Byzantine faults are considered to be the worst kind of faults that can happen in any distributed system. Emulating a Byzantine-tolerant register requires replicating the register value on to more than two-thirds of the servers. Emulating a register in a dynamic system where servers and clients can enter and leave the system and be faulty is harder than in static systems. 
There are several crash-tolerant register emulations for dynamic systems. 

This paper presents the first emulation of a multi-reader multi-writer atomic register 
in a system that can withstand nodes continually entering and leaving, imposes no upper bound on the system size and can tolerate Byzantine servers. 
The algorithm works as long as the number of servers entering and leaving during a fixed time interval is at most a constant fraction $\alpha$ of the system size at the beginning of the interval, and as long as the number of Byzantine servers in the system is at most $f$. Although our algorithm requires that there be a constant 
known upper bound on the number of Byzantine servers, this restriction 
is unavoidable, as we show 
that it is impossible to emulate an atomic register 
if the system size and maximum number of servers that can be Byzantine in the system is unknown to the nodes.

\end{abstract}

\section{Introduction}\label{section:introduction}

A long-standing vision in distributed systems is to build reliable systems from unreliable components. We are increasingly dependent on services provided by distributed systems resulting in added vulnerability when it comes to failures in computer systems.  In a dependable computing system, the term ``Byzantine" fault is used to represent the worst kind of failures imaginable. Malicious attacks, operator mistakes, software errors and conventional crash faults are all encompassed by the term Byzantine faults~\cite{LamportSP82}. The growing reliance of industry and government on distributed systems and increased connectivity to the Internet exposes systems to malicious attacks. Operator mistakes are also a very common cause of Byzantine faults~\cite{MurphyL00}. The growth in the size of software in general leads to an increased number of software errors. Naturally, over the past four decades, there has been a significant work on consensus and replication techniques that tolerate Byzantine faults~\cite{LamportSP82, CastroL02, CastroL99, AlvisiPMRW00} as it promises dependable systems that can tolerate any type of bad behavior.
The shared-memory model is a more convenient programming model than message-passing, and shared register emulations provide the illusion of shared-memory on top of message-passing models.
	\begin{center}
		\fbox{
			\parbox{220pt}{
				In this paper, we emulate the first Byzantine-tolerant atomic register on top of a dynamic message-passing system that never stops changing. 
			}
		}
	\end{center}
 \subsection{Related Work}
Typically,  crash-tolerant emulations~\cite{AttiyaBD1995, LynchS1997} of a shared read/write register replicate the value of the register
in multiple servers and require
readers and writers to communicate with majority of servers. For instance, the ABD emulation~\cite{AttiyaBD1995}
replicates the value of the shared register in a static set of servers.  It assumes that a minority of the servers may fail 
by crashing. 
This problem of emulating a shared register has been extended to static systems with servers subject to Byzantine faults and these  \sapta{emulations typically assume that} 
two-thirds~\cite{AttiyaB2003}  or three-fourths~\cite{AbrahamCKM2007} of the servers are non-faulty.  It is shown in ~\cite{MartinAD2002} that more than two-third correctness is necessary for Byzantine-tolerant register simulation.  
Byzantine quorum systems (BQS)~\cite{AlvisiPMRW00, MalkhiR98, MalkhiRW00, MartinAD2002} are a well known tool for ensuring consistency and availability of a shared register.  A BQS is a collection of subsets of servers, each pair of which intersect in a set containing sufficiently many correct servers to guarantee consistency of the replicated register as seen by clients. Vukolic~\cite{Vukolic2012} provides an extensive overview of the evolution of quorum systems for distributed storage and consensus. Dantas et. al~\cite{DantasBFC2007} present a comparative evaluation of several Byzantine quorum based storage algorithms in the literature.

The success of this replicated approach for static systems, where the set of readers, writers, and servers
is fixed, has motivated several similar emulations for {\em dynamic} systems, where nodes may enter
and leave. Change in system composition due to nodes entering and leaving is called {\em churn}. Ko et. al~\cite{KoHG2008} provide a detailed discussion of churn behavior in practice.
Most existing emulations of atomic registers in dynamic systems deal with crash-faults and rely either on the assumption that churn eventually stops for a long enough period
(e.g., DynaStore~\cite{AguileraKMS2011} and RAMBO~\cite{LynchS2002})
or on the assumption that the system size is bounded (e.g.,~\cite{BaldoniBKR2009},~\cite{BaldoniBR12}). 
Attiya et al.~\cite{AttiyaCEKW2015, AttiyaCEKW2018} proposed an emulation of a crash-tolerant shared register in a system that does not require churn to ever stop. 
Bonomi et al.~\cite{BonomiPPT16} present an emulation of a server based regular read/write storage in a synchronous message-passing system that is subject to ``mobile Byzantine failures". They prove that the problem is impossible to solve in an asynchronous setting. 
The system size, however, is fixed and mobility, in this paper refers to the Byzantine agents that can be moved from server to server. 
Baldoni et al.~\cite{BaldoniBN13} provide the first emulation of a Byzantine-tolerant {\em safe}~\cite{Lamport86a} register in an eventually synchronous system with churn but the size of the system is upper bounded by a known parameter. To the best of our knowledge, there is \sapta{no work} on implementing an \sapta{atomic register} in a (i) dynamic system (ii) with no upper bound on the system size  where (iii) servers are subject to Byzantine faults and (iv) any number of clients can crash.

\subsection{Contributions}
The first contribution \sapta{of this paper is the first 
algorithm to emulate an} 
atomic multi-reader/multi-writer register that does not require churn to ever stop, does not have an upper bound on the system size and tolerates up to a constant number of Byzantine servers in the system. \sapta{It is a common practice to assume that clients cannot be Byzantine in a system~\cite{MartinAD02, BaldoniBN13} as a Byzantine client can maliciously contact separate sets of servers and write different values which
results in an inconsistent register thus violating safety. In our model, clients can only crash.} 

Although our algorithm requires that there be a constant known upper bound on the number of Byzantine servers that can be tolerated, this restriction 
is unavoidable as shown in our second contribution. Our second contribution a proof that it is impossible 
to emulate an atomic register in a system with churn if the maximum number of Byzantine servers is unknown to the nodes. 
 

Our system model is similar to the one in~\cite{AttiyaCEKW2018}. We 
assume that there exists a parameter $D$, an upper bound, unknown to the nodes, on the delay of any message (between correct nodes). There is no lower bound on message delays and nodes do not have
real-time clocks.  

Churn is modeled as follows: we assume that in any time interval of length $D$, the number of servers that enter or leave the system is at most a constant fraction, $\alpha$ (known to all nodes), of the number of
servers in the system at the beginning of the interval. 
We also assume the messages are authenticated with {\em digital signatures}~\cite{LindellLR06}. In real world systems digital signatures in messages are implemented using public-key signatures~\cite{RivestSA1978} and  message authentication codes~\cite{Tsudik1992}. Intuitively, this means that Byzantine servers cannot lie about the sender of a message.
\subsection{Challenges with Byzantine Servers}
Our algorithm is called \sapta{\AlgName{} register, for \emph{Atomic Byzantine-tolerant  Continuous Churn}.} 
Unlike crash faults, data may easily be corrupted by Byzantine servers by sending old information, modified information or even  different information to different sets of nodes while replying to a particular message. A Byzantine server may choose to not reply to a message at all, even if it is active or it may even choose to reply to a single message multiple times. 
Our algorithm uses a mechanism where at every stage of the algorithm, nodes wait for at least $f+1$ replies from distinct servers before taking any major steps, to make sure at least one reply from a non-faulty server was received. \sapta{The algorithm also has to make sure that a decision is not affected by multiple replies corresponding to a single message from Byzantine servers. 
 A Byzantine server can also ``lie" about its joined state and propagate misinformation by sending out messages after ``pretending" to leave the system. } 





\section{System Model and Problem Statement} \label{section:model}

We model each node $p$, whether client or server, as a state machine with a
set of states.
The set of states for $p$ contains two initial states, $s_p^i$ and $s_p^\ell$.
Initial state $s_p^i$ is used if node $p$ is in the system initially, whereas
$s_p^{\ell}$ is used if $p$ enters the system later.

State transitions are triggered by the occurrences of events.
Possible triggering events include node $p$ entering the system 
({\sc Enter$_p$}),
leaving the system ({\sc Leave}$_p$), and receiving a message $m$ 
({\sc Receive}$_p(m)$).
In addition, triggering events for a client $p$ include
the invocation of an operation ({\sc Read}$_p$ or {\sc Write}$_p(v)$) 
and the client crashing ({\sc Crash}$_p$).

A {\em step} of a node $p$ is a 5-tuple $(s',T,m,R,s)$, where $s'$ is the
old state, $T$ is the triggering event, $m$ is the message to be sent, $R$
is either a response ({\sc Return}$_p(v)$, {\sc Ack}$_p$, or {\sc Joined}$_p$)
or $\bot$, and $s$ is the new state.  
The message $m$ includes an indication as to whether it should be sent to
all servers or to all clients, indicated as ``s-bcast'' or ``c-bcast''.
{\sc Return}$_p$ is the response to {\sc Read}$_p$, {\sc Ack}$_p$ 
is the response to {\sc Write}$_p$,
and {\sc Joined}$_p$ is the response to {\sc Enter}$_p$; these responses are
only done by clients.
If $T$ is {\sc Crash}$_p$, then $m$ and $R$ are both $\bot$.

If the values of $m$, $R$, and $s$ are determined by
the node's transition function applied to $s'$ and $T$, then the step is said
to be {\em valid}. A step by a client is always valid. 
In an invalid step (taken by server $p$), the values of $m$, $R$, and $s$ can be arbitrary, with the restriction that $p$ cannot modify values containing information about node ids. There is more detail on how this assumptions applies to our algorithm in Section~\ref{section:algorithm}.

A {\em view} of a node $p$ is a sequence of steps such that:
\begin{itemize}
\item the old state of the first step is an initial state;
\item the new state of each step equals the old state of the next step
\item if the old state of the first step is $s_p^i$, then no {\sc Enter}$_p$
      event occurs
\item if the old state of the first step is $s_p^{\ell}$, then the triggering
      event of the first step is {\sc Enter}$_p$ and there is no other 
      occurrence of {\sc Enter}$_p$; 
\item at most one {\sc Leave}$_p$ occurs, and if it occurs there are no 
      later steps;
\item at most one {\sc Crash}$_p$ occurs;
      if it occurs, then $p$ is a client and there are no later steps.
\end{itemize}

A view is {\em valid} if every step in it is valid.

In our model, a node that leaves the system cannot re-enter and a client node
that crashes cannot recover.

{\em Time} is represented by nonnegative real numbers.  A {\em timed view}
is a view whose steps are labeled with nondecreasing times (the real times
when the steps occur) such that:
\begin{itemize}
\item for each node $p$, if the old state in $p$'s first step is $s_p^i$,
      then the time of $p$'s first step is 0 (such a node is said to be
      {\em in the system initially}), and
\item  if a view is infinite, the step times must increase without bound.  
\end{itemize}

Given a timed view of a node $p$, if $(s',T,m,R,s)$ is the
step with the largest time less than or equal to $t$, then $s$ is the 
{\em state of $p$ at time $t$}.
A node is said to be {\em present} at time $t$ if its first step has
time at most $t$ but has not left (i.e., {\sc Leave}$_p$ 
does not occur at or before time $t$).
The number of servers present at time $t$ is denoted by $NS(t)$.
A node is said to be {\em active} at time $t$ if it is present at $t$
and {\sc Crash}$_p$ has not occurred before time $t$. Since servers never 
experience crashes, a server that is present is also active.

We define the following system parameters that are used in the upcoming
definition of an execution:
\begin{itemize} 

\item $D > 0$ is the maximum message delay (the {\em delay}
  of a message sent at time $t$ and received at time $t'$ is $t'-t$).
\item $\alpha > 0$ is the {\em churn rate}, which bounds how fast servers
  can enter and leave; there are no bounds for clients.
\item $f \ge 1$ is the maximum number of Byzantine-faulty servers.
\item $NS_{min} > 0$ is the minimum number of servers. This value is unknown to all nodes in the system.
\end{itemize}
The parameters $\alpha$ and $f$ are known to the nodes, but $D$ is not.

An {\em execution} is a possibly infinite set of timed views, one for each
node that is ever present in the system, that satisfies the following
assumptions:
\begin{enumerate}
\item[A1:] For all $t \ge 0$, the number of nodes present at time $t$
    is finite and $NS(t) \ge NS_{min}$.
\item[A2:] Every message s-bcast (respectively, c-bcast) has at most one 
   matching receipt at each server (respectively, client)
  and every message receipt has exactly one matching bcast.
\item[A3:] If message $m$ is s-bcast (respectively, c-bcast) at time $t$ 
  and server (respectively, client) node $q$ is active
  throughout $[t,t+D]$, then $q$ receives $m$.   
  The delay of every received message is in $(0,D]$.
\item[A4:] Messages from the same sender are received in the order they
  are sent (i.e., if node $p$ sends message $m_1$ before sending message $m_2$,
  then no node receives $m_2$ before it receives $m_1$).  
\item[A5:] For all times $t > 0$, the number of {\sc Enter} and {\sc Leave} 
  events for servers in $[t,t+D]$ is at most $\alpha \cdot NS(t)$.  
\item[A6:] The timed view of every client is valid.  A client $p$
  whose timed view does not contain {\sc Crash}$_p$ is a {\em correct}
  client.  There is a set $F$ of servers, with $|F| \le f$, such that
  the timed view of every server not in $F$ is valid.  The servers not
  in $F$ are {\em correct} servers.  The servers in $F$ are 
  Byzantine-faulty.
\item[A7:] If a {\sc Read}$_p$ or {\sc Write}$_p$ invocation occurs at
  time $t$, then $p$ is an active client that has already joined ({\sc
  Join}$_p$ occurs before $t$).
  (no {\sc Leave}$_p$ or {\sc Crash}$_p$ occurs by time $t$).
\item[A8:] At each client node $p$, no {\sc Read}$_p$ or {\sc Write}$_p$ 
  occurs until there have been responses to all previous
  {\sc Read}$_p$ and {\sc Write}$_p$ invocations.
\end{enumerate}

Assumption A1 states that the system size is always finite and there
is always some minimum number of servers in the system.  Assumptions
A2 through A4 model a reliable broadcast communication service that
provides nodes with a mechanism to send the same message to all
servers or to all clients in the system with message delays in $(0,D]$. 
However, Byzantine servers
may choose not to s-bcast (respectively, c-bcast) the same message to
every server/client in the system and just send different unicast
messages to different nodes in the system.  Assumption A5 bounds the
server churn.  Assumption A6 bounds the number of Byzantine-faulty
servers and restricts the clients to experience only crash failures.  
Assumptions A7 and A8 ensure that operations are only
invoked by joined and active clients and at, any time, at most one
operation is pending at each client node.

We consider an algorithm to be {\em correct} if every execution of
the algorithm satisfies the following conditions : 
\begin{itemize}
\item[C1:]
  For every client $p$ that is in the system initially,
  {\sc Joined}$_p$ does not occur.
  Every client $p$ that enters the system later
  and does not leave or crash eventually joins.
\item[C2:] In the view of each client $p$, ignoring message-receipt events,
  each {\sc Read}$_p$ or {\sc Write}$_p$ is immediately followed by 
  either {\sc Leave}$_p$, {\sc Crash}$_p$, or a matching response
  ({\sc Return}$_p$ or {\sc Ack}$_p$), and each {\sc Return}$_p$ or 
  {\sc Ack}$_p$ is immediately preceded by a matching invocation 
  ({\sc Read}$_p$ or {\sc Write}$_p$).
\item[C3:] The read and write operations are {\em atomic} (also called
  {\em linearizable}) \cite{Lamport86, Lamport86a, HerlihyW1990}:
  there is an ordering of all completed reads and writes and some subset
  of the uncompleted writes such that every read returns the value of the
  latest preceding write (or the initial value of the register if there is
  no preceding write) and, if an operation $op_1$ finishes before another
  operation $op_2$ begins, then $op_1$ occurs before $op_2$ in the ordering.
\end{itemize}
It is the responsibility of the algorithm to complete joins, reads and
writes, and choose the right values for the reads, as long as Assumptions
A1--A8 are satisfied.

Although our model places an upper bound on message delays, it does
not place any lower bound on the message delays or on local
computation times.  Moreover, nodes cannot access clocks to measure
the passage of real time.  Consequently, the well-known consensus
problem is unsolvable in our model as proved in
~\cite{AttiyaCEKW2018}, just as it is unsolvable in a model with no
upper bound on message delays \cite{FLP}.


\section{Impossibility of a Uniform Algorithm with Byzantine Servers}
\label{section:impossibility}

\sapta{
In this section we show it is impossible to emulate an atomic register in our system model if nodes have no information about the maximum number of Byzantine-faulty servers or about the total number of servers.  We model the lack of such information through the notion of a ``uniform" algorithm. There have been similar results proved for the impossibility of self-stabilization~\cite{AngluinAFJ08} and consensus~\cite{AlchieriBGF18} in Byzantine-tolerant systems with unknown participants.  } 

An algorithm is called {\em uniform} if the code run by every node is
independent of both the system size and the maximum number of
Byzantine servers in the system. Thus in a uniform algorithm, 
 \sapta{for any 
particular node id $p$, there are only two
    possible state machines that $p$ can have.  One is for the situation 
when $p$ is in the system initially
    and the other is for the situation when $p$ enters the system later.  
Otherwise, the state machines
    are completely independent of the initial size of the system (or the 
size when $p$ enters) and of the
    maximum number of Byzantine servers. }


\begin{theorem}
It is impossible to emulate an atomic\footnote{This theorem and proof holds for a {\em safe}~\cite{Lamport86a} register as well}  read/write register in our
model with a uniform algorithm.
\end{theorem}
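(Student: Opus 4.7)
The plan is to prove this by contradiction using an indistinguishability argument. Suppose a uniform algorithm $A$ correctly emulates an atomic register. I will exhibit two executions $E_1$ and $E_2$ of $A$ that are indistinguishable to some reader $R$ up to the moment $R$ returns, yet for which atomicity (and even the weaker safety mentioned in the footnote) demands different return values at $R$.

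First I would construct $E_1$. Take a small set $S$ of servers, all correct, with no writes ever performed, so that the register holds its initial value $v_0$. A reader $R$ enters the system later, joins, and invokes \textsc{Read}; by condition (C2) the read completes, and by (C3) it must return $v_0$. Let $U$ be the finite sequence of messages $R$ receives between its invocation and its response. Because $A$ is uniform, $R$'s decision to return $v_0$ is a deterministic function of $U$ alone and does not depend on $f$ or on the system size.

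Next I would construct $E_2$ over a server set $S \cup T$, where the servers in $S$ are Byzantine and the servers in $T$ are correct and numerous enough that $A$'s liveness drives writes to completion using only $T$'s cooperation. In $E_2$ a writer $W$ joins, invokes $\textsc{Write}(v_1)$ with $v_1 \neq v_0$, and by liveness the write completes at some time $\tau_W$. Only after $\tau_W$ does $R$ enter, join, and invoke \textsc{Read}. I program the Byzantine servers in $S$ to replay to $R$ exactly the messages they sent in $E_1$; no signature forgery is needed, because $E_1$ contains no writes and thus $U$ carries no writer-signed artifacts. At the same time, every message from $T$ to $R$, and any client broadcast from $W$ that would otherwise reach $R$ after $R$ joins, is scheduled to arrive strictly after $R$ has returned, using the adversary's freedom to give correct senders arbitrary local processing time and to choose any delivery delay in $(0,D]$. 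Thus $R$'s local view is identical to its view in $E_1$ up to its response, so $R$ returns $v_0$; but $\textsc{Write}(v_1)$ completed before the read began, so atomicity requires $R$ to return $v_1$, a contradiction.

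The step I expect to be the main obstacle is verifying that $E_2$ is a legitimate execution satisfying assumptions (A1)--(A8). This breaks into: (i) declaring $f := |S|$ in $E_2$, which is permissible precisely because $A$ is uniform and does not consult $f$ in its code; (ii) scheduling entries so that the churn bound (A5) and minimum-size invariant (A1) hold throughout, e.g., by bringing all of $S$ and $T$ in initially; (iii) arranging $R$'s entry late enough that $R$ is not ``active throughout $[t,t+D]$'' for any of $W$'s earlier client broadcasts, so that (A3) does not oblige $R$ to receive them; and (iv) verifying that $T$'s eventual responses can be scheduled to arrive after $R$'s decision while staying within the $(0,D]$ delay window, which is possible because correct nodes may take unbounded local processing time before transmitting. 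Once these scheduling obligations are discharged, the indistinguishability argument closes the proof.
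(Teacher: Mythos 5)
Your overall strategy is the same as the paper's: build a small all-correct execution $E_1$ in which a read returns the initial value, build a larger execution $E_2$ with a completed write in which the servers of $E_1$ are declared Byzantine, and have those Byzantine servers replay their $E_1$ messages to the late-arriving reader so that its view is indistinguishable and it returns a stale value. Your use of uniformity (to let the same code run under two different values of $f$ and two different system sizes) and your handling of the writer's earlier broadcasts via late entry of $R$ also match the paper.

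However, there is a genuine gap in your step (iv), and it is exactly the step you flag as the main obstacle. You claim the correct servers in $T$ can have their replies arrive after $R$ returns ``because correct nodes may take unbounded local processing time before transmitting.'' The model does not permit this: a step is a single tuple $(s',T,m,R,s)$ carrying one time label, so a correct server that receives $R$'s enter or query message at time $t$ \emph{sends} its echo or reply at that same time $t$, and by A3 that message must then be delivered within $D$. The only slack the adversary has is the two delivery delays (each in $(0,D]$), i.e., at most $2D$ per round trip. If $R$'s enter--join--read lifetime in $E_1$ spans more than that slack --- and for an arbitrary algorithm $A$ you cannot assume it does not --- then $T$'s enter-echoes and replies are forced to reach $R$ before it returns, destroying indistinguishability. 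The paper closes this hole with a time-scaling step you are missing: it multiplies every event time in $e_1$ by $\min\{D/t'_r,1\}$, compressing $R$'s entire lifetime into an interval of length at most $D$, so that in the combined execution the correct servers outside $S$ need not even receive $R$'s broadcasts until after $R$ has already returned. (Equivalently, you could construct $E_1$ with all delays scaled down by the same factor from the start.) Add that compression and your argument goes through; without it, the construction of $E_2$ as a legal execution satisfying A3 fails.
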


\begin{proof} 
Suppose in contradiction, there is a uniform algorithm, $\mathbb{A}$,
which simulates an atomic register and 
\sapta{can tolerate $f$ Byzantine server failures for any  $f \in \mathbb{N}^+$, as long as the system size at all times is at least $NS_{min}(f)$ 
for some function $NS_{min}$.}
Consider the following executions of $\mathbb{A}$.

\textbf{Execution} $e_1$: The maximum number of Byzantine servers is
$1$. The set of servers in the system initially is $S_1$, where
$|S_1|= NS_{min}(1)$, and all servers are correct. All message delays
are $D$. The initial value of the simulated register is $v$. A new
client $p$ enters the system at time $t_e$ and joins by time
$t_j\ge t_e$. Client $p$ invokes a read on the simulated register at
time $t_r> t_j$. No other operation on the simulated register is
invoked. By assumed correctness of algorithm $\mathbb{A}$, the read
invoked by $p$ returns $v$ at some time $t'_r \ge t_r$.

\textbf{Execution} $e'_1$: Multiply the real time of every event in
$e_1$ by $\min\{\frac{D}{t'_r},1\} $. As a result, all events in the
time interval $[0,t'_r]$ in $e_1$ are compressed into the interval
$[0,D]$ in $e'_1$.

\textbf{Execution} $e_2$: The maximum number of Byzantine servers is
$f_2 = |S_1|$.  The set of servers in the system initially is
$S_2$, where $|S_2| = NS_{min}(f_2)$and $S_1$ is a subset of $S_2$. 

Note that the system sizes 
are chosen such that we have execution $e_1$ with $|S_1|$ servers, all correct, and
    execution $e_2$ with exactly $|S_1|$ Byzantine servers.

 
There is at least one client  in the system initially.  All
message delays are $D$.  The initial value of the simulated register
is $v$.  No churn happens in this execution.  Client $q$ that was
in the system at time 0 invokes a write of $v' \neq v$ at time
$t_w$. By the assumed correctness of algorithm $\mathbb{A}$, the write
completes at some time $t'_w\ge t_w$. No other operation on the
simulated register is invoked.
		 
Finally we construct a prefix $e_3$ of a new execution from
executions $e_1'$ and $e_2$.  First, we specify a set of timed 
views 
and then we show that this set indeed forms the prefix of an
execution.  Let $e_3^1$ be the set of timed views in $e_2$.  Note that
$e_3^1$ includes the write operation invoked by client $q$.
Truncate each timed view in $e_3^1$ immediately after the latest
step with associated time at most $t_w'$, i.e., just after the write
by client $q$ finishes.  Then append steps that result in the
immediate delivery of all messages that are in transit at $t_w'$. Call
the resulting set of timed views $e_3^2$.  Construct $e_3$ from
$e_3^2$ as follows.

\textbf{Execution prefix} $e_3$: Add to the set the prefix of the timed view
of client $p$ from $e_1'$ that ends at time $D$, but change the time
associated with each step by adding $t_w'$ to it.  For each server $s$
in $S_1$, append the prefix of $s$'s timed view in $e_1'$ that ends at
time $D$, but change the time associated with each step by adding $t_w'$ to
it.  Append nothing to the timed views for the remaining nodes (client
or server).

The idea behind $e_3$ is to have all the nodes behave correctly
through $q$'s write of $v'$, and then have a new client $p$ enter,
join, and invoke a read during which time it communicates only with
the servers in $S_1$.  However, the servers in $S_1$ are
Byzantine and start acting as they did in $e_1'$, causing $p$'s read to
incorrectly return the value $v$, instead of $v'$.  An important
technicality in the construction of $e_3$ is to adjust the time of
steps taken from $e_1'$. \sapta{The assumed uniform nature of the algorithm is what allows us to combine timed views from $e_1$, in which at most one server can be Byzantine,  with timed views from $e_2$, in which $f_2 > 1$ servers can be Byzantine.}

In order for the existence of the incorrect read by $p$ in $e_3$ to
contradict the assumed correctness of $\mathbb{A}$, we must show that
$e_3$ is the prefix of an execution (otherwise, bad behavior by
$\mathbb{A}$ is irrelevant).

We show that $e_3$ is a prefix of an execution by verifying properties
$A_1$ through $A_8$.  $A_1$, $A_5$, and $A_6$--$A_8$ are clear.  


$A_2$--$A_4$: Every message sent by a node (client or server) has exactly one
matching receipt. We show this in two parts: (i) If the message was
sent before $t'_w$, it was either delivered before $t'_w$ (from $e_2$)
or at $t'_w$ if it was pending at $t'_w$ (from construction of $e_3$).
(ii) If the message was sent after $t'_w$: Messages exchanged between
$p$ and $S_1$ after $t'_p$ are all delivered within $D$ time (from
$e'_1$) and all other messages in $e_3$ after $t'_w$ are delivered
with delay $D$.
All message delays in $e_1'$ are $\le D$ and the message delays
in $e_2$ are $D$. Therefore, the message delays in $e_3$ are at most $D$.  





In $e_3$, $p$'s read returns $v$, whereas the latest preceding write
wrote $v' \neq v$. The value returned by node $p$ is incorrect and as
$e_3$ is the prefix of an execution, this violates the safety property
of the register. Therefore, it is impossible to simulate a shared
register in dynamic systems where new nodes entering have no
information about the system size and no information on the maximum number of
Byzantine servers present in the system.
\end{proof}

\makeatletter
\newcommand{\setalglineno}[1]{%
  \setcounter{ALC@line}{\numexpr#1-1}}
\makeatother

\section{The \AlgName{} Algorithm}
\label{section:algorithm}
  
  The \AlgName{} algorithm is loosely based on the algorithm in~\cite{AttiyaCEKW2018} along with modifications to accommodate the new client-server model\sapta{, as opposed to the peer-to-peer model, 
  and Byzantine servers as opposed to crashes in~\cite{AttiyaCEKW2018}. The \AlgName{} algorithm} is divided into two main parts: Algorithm~\ref{algo:Server} for servers and Algorithm~\ref{algo:Client} for clients. Algorithm~\ref{algo:Common} contains a set of common procedures used by both servers and clients. The server algorithm contains a mechanism for tracking the composition of
the system with respect to servers and for assisting clients with reads and writes. The client algorithm is for newly entered clients to join the system and for joined clients to read from and write to the shared register. 

\sapta{Initially the system consists of a set of servers $S_0$ and a set of clients $C_0$ such that $|S_0| \ge NS_{min}$ and $|C_0| \ge 0$. A server $p\in S_0$ is joined at time $0$ and it knows about all other servers $q\in S_0$. In the following code description, we use the convention that local variables of node $p$ are subscripted with $p$.}  Each  node $p$ maintains a set of events, {\it Server\_Changes$_p$}, concerning the servers that have entered, joined and left the system. 
A node $p$ also maintains the set
$\mbox{\it Present}_p $ 
 that stores information about servers that have entered, but have not left, as far as $p$ knows.  A server $p$ is called a {\em member} if it has joined the system but not left. 
Client  $p$ maintains the derived variable $\mbox{\it Members}_p$ 
of servers that $p$ considers as members.  
The variables $val_p$, $num_p$ and $w\_id_p$ 
 store the latest register value and its timestamp, consisting of the ordered pair $(num_p, w\_id_p)$, known by $p$. 
  \sapta{The variable $w\_id_p$ stores the id of the writer (client) that wrote $val_p$}. The set {\it Known\_Writes}$[\cdot]_p$ stores an entry for all nodes $q$ that $p$ thinks have entered. An entry {\it Known\_Writes}$[q]_p$ stores all  the values written to the register that server $q$ has declared to know about.  
 So, at all times {\it Known\_Writes}$[p]_p$ stores the values of all writes (i) $p$ has heard of through an ``update" message from a client performing a write or (ii) that occur in more than $f$ entries in {\it Known\_Writes}$[\cdot]_p$.

 Algorithm~\ref{algo:Server}:  When a server $p$ enters the system, it 
  broadcasts to all the servers 
  an enter message requesting information about prior events. 
When a server $q$ finds out that node $p$ has entered (or joined or left) the system, 
$q$ updates {\it Server\_Changes$_q$} accordingly and  
 sends out an echo message with information about the system (stored in $\mbox{\it Server\_Changes}_q$) and the shared register (stored in the variable {\it Known\_Writes}$[q]_q$). 
 When node $p$ receives  at least $f+1$  enter-echo messages from  joined servers (to make sure at least one reply is from a correct server),  it calculates 
\sapta{
the number of replies it needs in 
order to join as a fraction of the
    number of servers it believes are present, i.e., $\gamma \cdot |Present_p|$.  
This value is stored in the
    $join\_bound_p$ local variable.} 
Setting $\gamma$ is a key
challenge in the algorithm as setting it too small might not propagate
updated information, whereas setting it too large might not guarantee
termination of the join. A server $q \notin S_0$ is considered joined once it has executed line number~\ref{line: server joined} of Algorithm~\ref{algo:Common}.

The algorithm sends out messages that are authenticated with digital signatures. As a result Byzantine servers can send out incorrect information about everything except for node ids.
Byzantine servers can modify information about anything sent out in 
messages of Algorithm~\ref{algo:Server} subject to the following restrictions: 
\begin{itemize}
\item   {\it Server\_Changes}$_p$: A Byzantine server $p$ can only send out subsets of the  {\it Server\_Changes}$_q$ set for some $q$ that has entered the system. Server $p$ cannot modify entries as each entry in this variable contains a server node id which was digitally signed by the sending server. 
\item $val_p$, $num_p$ and $w\_id_p$: A Byzantine server can modify variables $val_p$ and $num_p$ while sending them out. But it cannot modify  the $w\_id_p$ variable which is the id of the client that invoked the write or $\perp$.


\item {\it Known\_Writes}$[\cdot]_p$: A Byzantine server can send out subsets of {\it Known\_Writes}$[\cdot]_p$, but cannot add entries. For an entry 
$(val_q, (num_q, w\_id_q))\in $ {\it Known\_Writes}$[q]_p$, Byzantine servers can  modify the $val$ and $num$ variables of this entry for node $q$.   
\end{itemize}
The JoinProtocol$_p$ procedure in Algorithm~\ref{algo:Common} is used by both newly entered servers and clients to join the system. Once joined\footnote{Note that joining is not the same as entering. Once a node (client or server) enters the system, it has to complete running the JoinProtocol subroutine to become ``joined".}, servers can reply to read/write queries from clients. In addition to that, for all nodes $p$, there exists an in-built procedure, IsClient$_p$($q$) that can check, based on the 
node id $q$, if $q$ is a client or not. 
This procedure  
prevents Byzantine servers from pretending to be clients. 
\begin{algorithm*}[!hptb]
\begin{algorithmic}[1]
	\small
	\footnotesize
	\item[]{\bf In-built Procedure:}
	\item[] { IsClient($q$) }\COMMENT{returns $true$ if $q$ is a client and $false$ if $q$ is a server } 
	\item[] {\bf Local Variables:}
	\item[] $\mbox{\it Server\_Changes}$ \COMMENT{set that stores information about entering, leaving and joining of servers known by $p$ }	 		\item[] \hspace*{.5in}\COMMENT{initially $\{enter(q) ~|~ q \in S_0\} \cup \{ join(q) ~|~ q \in S_0\}$,  if $p$ is in the system at time 0  and $\emptyset$ otherwise }      			
	\item[] $join\_bound$ \COMMENT{if non-zero, the number of enter-echo messages $p$ should receive before joining; initially $0$}
	\item[] $enter\_echo\_counter$ \COMMENT{number of enter-echo messages received so far; initially $0$}
	\item[] $enter\_echo\_from\_joined\_counter$ \COMMENT{number of enter-echo messages from joined servers received so far; initially 0}
	\item[] $is\_joined$ \COMMENT{Boolean to check if $p$ has joined the system; initially $false$}
	 \item[] $val$ \COMMENT{latest register value known to $p$; initially $\perp$}
        \item[]  $num$ \COMMENT{sequence number of latest value known to $p$; initially 0}
        \item[]  $w\_id$ \COMMENT{id of node that wrote latest value known to $p$;  initially $\perp$}
        \item[] $\mbox{\it Known\_Writes}[]$ \COMMENT{map from the set of node ids to the powerset of value-timestamp pairs. Initially each entry is $\emptyset$}
	\item[] {\bf Derived Variables:}
	\item[] $\mbox{\it Present}	 = \{q ~|~ enter(q) \in \mbox{\it Server\_Changes} \wedge leave(q) \not\in \mbox{\it Server\_Changes}  \}$
	\item[] $valid\_val$ = value-timestamp pair with latest timestamp that occurs in at least $(f+1)$ elements of $Known\_Writes[]$, else $(\perp,( 0, \perp))$
	\item[] \hrulefill	
	\begin{multicols*}{2}

	\item[]  {\bf When {\sc Enter}$_p$ occurs:}
		\STATE  $\mbox{\it Server\_Changes} := \mbox{\it Server\_Changes} \cup \{enter(p)\}$ 
		\STATE {\bf s-bcast $\langle$``enter'', $p\rangle$}
		\STATE {\bf  c-bcast $\langle$``server-info'', $\mbox{\it Server\_Changes}\rangle$}

	\item[]	
	
	\item[]  {\bf When {\sc Receive}$_p \langle$``enter'', $q\rangle$ occurs:}
	\IF  {\bf IsValidMessage( ``enter",  $q$)}
		\STATE  $\mbox{\it Server\_Changes} := \mbox{\it Server\_Changes} \cup \{enter(q)\}$ 
		\STATE {\bf s-bcast $\langle$``enter-echo'', $\mbox{\it Server\_Changes}$, \\$\mbox{\it Known\_Writes}[p]$, $is\_joined$, $q, p\rangle$	}
		\STATE {\bf  c-bcast $\langle$``server-info'', $\mbox{\it Server\_Changes}\rangle$}
	\ENDIF
	
	\item[]	
	
	\item[]  {\bf When {\sc Receive}$_p \langle$``enter-client'',  $q\rangle$ occurs:}
	\IF  { IsClient($q$)}
					\STATE {\bf c-bcast $\langle$``enter-client-echo'', $\mbox{\it Server\_Changes}$, \\$\mbox{\it Known\_Writes}[p]$,
			$is\_joined$, $q, p\rangle$	}
	\ENDIF

	\item[]	
	
	\item[]  {\bf When {\sc Receive}$_p \langle$``enter-echo'', $C$, $K$, $j$, $q, r\rangle$ \\occurs:}
	\IF  {\bf IsValidMessage( ``enter-echo",  $q$, $r$)}
	
	\STATE $\mbox{\it Server\_Changes} := \mbox{\it Server\_Changes} \cup C$
	\IF{$(j = true)$}
		\STATE  \bf   $\mbox{\it Known\_Writes}[r]:=  \mbox{\it Known\_Writes}[r] \cup K$
	\ENDIF
	
	\IF{$\neg is\_joined \wedge (p = q) $}	
		\STATE  \bf call JoinProtocol($j$)
	\ENDIF
	
	\STATE { \bf call SetValueTimestamp()}
	\ENDIF
	\item[]
	\item[] {\bf When {\sc Receive}$_p \langle$``joined'', $q\rangle$ occurs:}
	
	\IF  {\bf IsValidMessage( ``joined",  $q$)}
		\STATE  $\mbox{\it Server\_Changes} := \mbox{\it Server\_Changes}$\\$ \cup \{enter(q), join(q)\}$ 
		\STATE {\bf s-bcast $\langle$``joined-echo'', $q, p\rangle$}
		\STATE {\bf  c-bcast $\langle$``server-info'', $\mbox{\it Server\_Changes}\rangle$}
	\ENDIF
	\item[]	
	
	\item[] {\bf When {\sc Receive}$_p \langle$``joined-echo'', $q$, $s\rangle$ occurs:}
	\IF  {\bf IsValidMessage( ``joined-echo",  $q$, $s$)}
		\STATE  $\mbox{\it Server\_Changes} := $ \\ $\mbox{\it Server\_Changes} \cup \{enter(q), join(q)\}$ 
		\STATE {\bf c-bcast $\langle$``server-info'', $Server\_Changes\rangle$}
	\ENDIF
	\item[]	
			
	\item[] {\bf When {\sc Leave}$_p$ occurs:}
	\STATE  $\mbox{\it Server\_Changes} := \mbox{\it Server\_Changes} \cup \{leave(p)\}$ 
	\STATE {\bf s-bcast $\langle$``leave'', $p\rangle$}
	\STATE {\bf  c-bcast $\langle$``server-info'', $\mbox{\it Server\_Changes}\rangle$}
	\STATE halt
	\item[]	
	
	\item[] {\bf When {\sc Receive}$_p \langle$``leave'', $q\rangle$ occurs:}
	\IF  {\bf IsValidMessage( ``leave",  $q$)}
		\STATE  $\mbox{\it Server\_Changes} := \mbox{\it Server\_Changes} \cup \{leave(q)\}$ 
		\STATE {\bf s-bcast $\langle$``leave-echo'', $q,p\rangle$}	
		\STATE {\bf c-bcast $\langle$``server-info'', $\mbox{\it Server\_Changes}\rangle$}
	\ENDIF
	\item[]
	
	\item[] {\bf When {\sc Receive}$_p \langle$``leave-echo'', $q$, $s\rangle$ occurs:}
		\IF  {\bf IsValidMessage( ``leave-echo",  $q$, $s$)}
			\STATE  $\mbox{\it Server\_Changes} := \mbox{\it Server\_Changes} \cup \{leave(q)\}$ 
			\STATE {\bf c-bcast $\langle$``server-info'', $\mbox{\it Server\_Changes}\rangle$}	
		\ENDIF

	\item[]
	
	\item[] {\bf When {\sc Receive}$_p \langle$``query'', $rt$, $q\rangle$ occurs:}
	\IF{$is\_joined$  $\wedge$  IsClient($q$)} \label{line:check IsClient1}
		\STATE {\bf c-bcast $\langle$``reply'', $\mbox{\it Known\_Writes}[p], rt , q, p \rangle$}
	\ENDIF
	\item[]
	\item[] {\bf When {\sc Receive}$_p \langle$``update''$, (v,s,i), wt, q \rangle$ occurs:\\}
	\IF{IsClient($q$)}  \label{line:check IsClient2}
	\IF{$(s,i) > (num,w\_id)$}        \label{line:new value?}
		\STATE $(val,num,w\_id) := (v,s,i)$ \label{line:update value}
		\STATE  \bf  $\mbox{\it Known\_Writes}[p]:=  \mbox{\it Known\_Writes}[p] \cup $\\$\{(val, num, w\_id)\}$\label{line:update history}

	\ENDIF
	\IF{$is\_joined$}
	        \STATE {\bf  c-bcast $\langle$``ack'', $wt , q,p \rangle$}
	\ENDIF
	\STATE {\bf s-bcast $\langle$``update-echo'', $\mbox{\it Known\_Writes}[p],p$ $\rangle$}
	\ENDIF
	\item[]
	
	\item[] {\bf When {\sc Receive}$_p \langle$``update-echo'', $K,s \rangle$ occurs:} \\
	\STATE  \bf  $\mbox{\it Known\_Writes}[s]:=  \mbox{\it Known\_Writes}[s] \cup K$
	\STATE \bf call SetValueTimestamp()

\end{multicols*}
\end{algorithmic}
\caption{\AlgName---Code for server $p$.}
\label{algo:Server}
\end{algorithm*}

Algorithm~\ref{algo:Client}: Clients might be in the system from the start or may enter the system at any time. Similar to servers, a newly entered client $p$ runs the JoinProtocol$_p$ procedure in Algorithm~\ref{algo:Common} to join the system. Clients  treat  both read and write operations in a similar manner.
Both operations start with a read phase, which requests the current
value of the register, using a query message, followed by a write
phase, using an update message.  
A write operation broadcasts to all servers the new
value it wishes to write, together with a timestamp, which consists of
a sequence number that is one larger than the largest sequence number
it has seen and its id that is used to break ties. 
A read operation just broadcasts to all servers the value it is about to return, keeping its sequence
number as is. As in~\cite{AttiyaBD1995}, write-back is needed to ensure the
atomicity of read operations.  Both the read phase and the write phase
wait to receive sufficiently many reply messages. 
The fraction $\beta$ \sapta{calculates the number of replies it needs for the operations to terminate. 
as a fraction of the number of nodes it believes are members, i.e., $\beta \cdot |${\it Members}$_p|$.  
This value is stored in the $rw\_bound$ local variable.} 
Setting $\beta$
is also a key challenge in the algorithm as setting it too small might
not return/update correct information from/to the register, whereas
setting it too large might not guarantee termination of the reads and
writes. The fraction $\beta$ also has to ensure that enough replies from correct servers are heard so that these replies can efficiently mask incorrect replies from Byzantine servers.

\begin{algorithm*}[!htbp]
\begin{algorithmic}[1]
\setalglineno{59}
	\small
	\footnotesize
	\item[]{\bf In-built Procedure:}
	\item[] { IsClient($q$) }\COMMENT{returns $true$ if $q$ is a client and $false$ if $q$ is a server } 
	\item[] {\bf Local Variables:}
	\item[] $\mbox{\it Server\_Changes}$ \COMMENT{set that stores information about entering, leaving and joining of servers known by $p$ }	 				\item[] \hspace*{.5in}\COMMENT{initially $\{enter(q) ~|~ q \in S_0\} \cup \{ join(q) ~|~ q \in S_0\}$,  if $p$ is in the system at time 0 and $\emptyset$ otherwise  }
	\item[] $enter\_echo\_counter$ \COMMENT{number of enter-echo messages received so far; initially $0$}
	 \item[] $enter\_echo\_from\_joined\_counter$ \COMMENT{number of enter-echo messages from joined servers received so far; initially 0}
	\item[] $is\_joined$ \COMMENT{Boolean to check if $p$ has joined the system; initially $false$}
	 \item[] $val$ \COMMENT{latest register value known to $p$; initially $\perp$}
	  \item[]  $num$ \COMMENT{sequence number of latest value known to $p$; initially 0}
        \item[]  $w\_id$ \COMMENT{id of node that wrote latest value known to $p$;  initially $\perp$}
        \item[] $\mbox{\it Known\_Writes}[]$ \COMMENT{map from set of node ids to the powerset of value-timestamp pairs. Initially each entry is $\emptyset$}
       \item[] $temp$ \COMMENT{temporary storage for the value being read or written; initially $0$}
	\item[] $tag$ \COMMENT{used to uniquely identify read and write phases of an operation; initially $0$}
	\item[] $rw\_bound$ \COMMENT{the number of replies/acks $p$ should receive before finishing a read/write phase; initially $0$}
	\item[] $rw\_counter$ \COMMENT{the number of replies/acks received so far for a read/write phase; initially $0$}
	 \item[] $rp\_pending$ \COMMENT{Boolean indicating whether a read phase is in progress; initially $false$}	
	\item[] $wp\_pending$ \COMMENT{Boolean indicating whether a write phase is in progress; initially $false$}	
	\item[] $read\_pending$ \COMMENT{Boolean indicating whether a read is in progress; initially $false$}
	\item[] $write\_pending$ \COMMENT{Boolean indicating whether a write is in progress; initially $false$}
	\item[] {\bf Derived Variables:}
	\item[] $\mbox{\it Present}	 = \{q ~|~ enter(q) \in \mbox{\it Server\_Changes} \wedge leave(q) \not\in \mbox{\it Server\_Changes}  \}$
	\item[] $\mbox{\it Members} = \{q ~|~ join(q) \in \mbox{\it Server\_Changes} \wedge leave(q) \not\in \mbox{\it Server\_Changes}\}$
	\item[] $valid\_val$ = value-timestamp pair with latest timestamp that occurs in at least $(f+1)$ elements of $Known\_Writes[]$, else $(\perp,( 0, \perp))$ 
	
	\item[] \hrulefill	
	\begin{multicols*}{2}
	\item[]  \bf When {\sc Enter}$_p$ occurs:
	\STATE s-bcast $\langle$``enter-client'', $p\rangle$
	\item[]	
	\item[] { \bf When {\sc Receive}$_p \langle$``enter-client-echo'', $C$, $K$, $j$, $q, r \rangle$ occurs:}
	\IF  {\bf IsValidMessage( ``enter-client-echo",  $q$)\\ $\wedge$ $(p=q)$}
		\STATE $\mbox{\it Server\_Changes} := \mbox{\it Server\_Changes} \cup C$
		\IF{$(j = true)$}
			\STATE { \bf  $\mbox{\it Known\_Writes}[r]:=  \mbox{\it Known\_Writes}[r] \cup K$}
		\ENDIF
		\IF{$\neg is\_joined \wedge (p = q) $}
			\STATE  {\bf call JoinProtocol($j$)}
		\ENDIF
	\ENDIF

	\STATE {\bf call SetValueTimestamp()}
	
	\item[]
	\item[] \bf When {\sc Receive}$_p \langle$``server-info'', $C \rangle$ occurs:
			\STATE $\mbox{\it Server\_Changes} := \mbox{\it Server\_Changes} \cup C$
			\item[]	
	\item[] {\bf Procedure} BeginReadPhase()	
	\STATE $tag$++
	\STATE {\bf s-bcast $\langle$``query'', $tag$, $p\rangle$ } \label{line:bcast query}
	\STATE $rw\_bound :=  \beta \cdot |\mbox{\it Members}|$
	\STATE $rw\_counter := 0$
	\STATE $rp\_pending := true$	
	\item[]	
	
	\item[] {\bf When {\sc Receive}$_p \langle$``reply'', $K, rt, q, s \rangle$ occurs:}
	\IF  {\bf IsValidMessage( ``reply",  $q$, $rt$, $s$)}
		\IF{$rp\_pending  \wedge (rt = tag) \wedge (q=p)$}
			\STATE $rw\_counter$++         \label{line:inc heard from resp}
			\STATE { \bf  $\mbox{\it Known\_Writes}[s]:=  \mbox{\it Known\_Writes}[s] \cup K$}
			\IF{$rw\_counter \ge rw\_bound$}   \label{line:quorum reached}
				\STATE {\bf call SetValueTimestamp()	}		
				\STATE $rp\_pending := false$
				\STATE {\bf call BeginWritePhase()}
			\ENDIF
		\ENDIF
	\ENDIF

	\item[]
	\item[] {\bf Procedure} BeginWritePhase()
	\IF{$write\_pending$}		
		\STATE  $val := temp$  \label {line:new timestamp1}
		\STATE $num$++\label {line:new timestamp}
		\STATE $w\_id := p$ \label {line:new timestamp2}
	
	\ENDIF
	\IF{$read\_pending$}
	\STATE $temp:= val$
	\ENDIF
	\STATE s-bcast $\langle$``update'', $(temp,num,w\_id)$,$tag, p \rangle$ \label{line:bcast update1} \
	\STATE $rw\_bound :=  \beta \cdot |\mbox{\it Members}| $
	\STATE $rw\_counter := 0$
	\STATE $wp\_pending := true$
	\item[]
	
	\item[] \bf When {\sc Receive}$_p \langle$``ack'', $wt, q, s \rangle$ occurs:
	\IF  {\bf IsValidMessage( ``ack",  $q$, $wt$, $s$)}
		\IF{$wp\_pending \wedge (wt = tag) \wedge (q=p)$}
			\STATE $rw\_counter$++                \label{line:inc heard from ack}
			\IF{$rw\_counter \ge rw\_bound$}
				\STATE $wp\_pending := false$
				\IF{$read\_pending$}
					\STATE $read\_pending := false$
					\STATE generate {\sc Return}($temp$) response			
				\ENDIF			
				\IF{$write\_pending$}
					\STATE $write\_pending := false$
					\STATE generate {\sc Ack} response
				\ENDIF
			\ENDIF
		\ENDIF
	\ENDIF

	\item[]
	\item[] {\bf When {\sc Leave}$_p$ occurs:}
	\STATE halt
			
\end{multicols*}
\end{algorithmic}
\caption{\AlgName---Code for client, $p$.}
\label{algo:Client}
\end{algorithm*}

Algorithm~\ref{algo:Common}: The JoinProtocol() procedure helps newly entered nodes to join the system. The other procedures in this algorithm are used to deal with Byzantine servers and their arbitrary nature. The procedure SetValueTimestamp() checks and updates the value-timestamp  triple ($(val, (num,w\_id))_p$) to $valid\_val_p$ if the timestamp of  $valid\_val_p$ is higher than the latest known $(num,w\_id)_p$ pair. The variable $valid\_val_p$ is necessary to make sure that before writing any value (learned from other servers) to the register, the value was seen by at least $f+1$ servers.
A Byzantine server $p$ may send out more than one reply for a given message or keep replying after it has sent out a $leave_p$ message.  The three IsValidMessage() procedures deal with these situations. They check to make sure that only one reply from each server for a message is processed by all nodes. They also check  whether the sender $q$ has already sent a $leave_q$ message.  Reads and writes invoked by Byzantine servers are ignored by correct servers by the  IsClient() checks on Lines~\ref{line:check IsClient1} and \ref{line:check IsClient2} in Algorithm~\ref{algo:Server}.

\begin{algorithm*}[hptb]
\begin{algorithmic}[1]
\setalglineno{116}
	\small
	\footnotesize
	\item[] 
	\begin{multicols*}{2}

	\item[] \bf Procedure JoinProtocol($j$)
	\STATE $enter\_echo\_counter++$
	\IF{$(j = true) \wedge (join\_bound = 0)$}
			\STATE $enter\_echo\_from\_joined\_counter++$

		\IF{$ enter\_echo\_from\_joined\_counter > f $}\label{line: f+1 non faulty}
			\STATE 	$join\_bound :=  \gamma \cdot | \mbox{\it Present} | $\label{line:calculate join bound}
		\ENDIF
	\ENDIF
	\IF{$enter\_echo\_counter\!\geq\! join\_bound\! >\! 0$} \label{line:check if enough enter echoes}
			\STATE $is\_joined := true$
			
			\IF{ { IsClient($p$)}}
				\STATE {\bf generate {\sc Joined}$_p$ response}
				\ELSE
				\STATE  $\mbox{\it Server\_Changes} := \mbox{\it Server\_Changes} \cup \{join(p)\}$ 
				\STATE {\bf s-bcast $\langle$``joined'', $p\rangle$}
				\STATE {\bf  c-bcast $\langle$``server-info'', $\mbox{\it Server\_Changes}\rangle$}\label{line: server joined}
			\ENDIF

	\ENDIF
	\item[]
	\item[]  {\bf Procedure SetValueTimestamp()}
			
	\IF{$valid\_val \neq \perp$} \label{line:valid val}
	\IF{\bf timestamp of $valid\_val > (num,w\_id)$}
		\STATE \bf $(val,num,w\_id) := valid\_val$
		\STATE  \bf  $\mbox{\it Known\_Writes}[p]:=  \mbox{\it Known\_Writes}[p] \cup $\\$\{(val, num, w\_id)\}$
	\ENDIF		
	\ENDIF
	\item[]
	 \item[]  {\bf Procedure IsValidMessage( $type$, $r$)}
	\IF{$type$ = (``enter"$ \lor$ ``joined"$\lor$``leave") $ \wedge $ $(leave(r) \notin \mbox{\it Server\_Changes})$}
		\STATE  {\bf return $true$ if  this is the first  $type$ message received from $r$, else return $false$}
	\ENDIF	
	\item[]
	\item[]  {\bf Procedure IsValidMessage( $type$, $q$, $r$)}
	\IF{$type$ = (``enter-echo" $ \lor$ ``enter\_client-echo" $ \lor$ \\ \hspace*{.5in} ``joined-echo" $ \lor$  ``leave-echo")  \\ \hspace*{.5in}$\wedge $ $(leave(r) \notin \mbox{\it Server\_Changes})$}
		\STATE {\bf return $true$ if this is the first $type$ message for $q$ received from $r$, else return $false$}
	\ENDIF		
	\item[]
	\item[]  {\bf Procedure IsValidMessage( $type$, $q$, $tag$, $r$)}			
	\IF{$type$ = (``reply" $ \lor$ ``ack") $\wedge $ $(leave(r) \notin \mbox{\it Server\_Changes})$ }
		\STATE {\bf  return $true$ if  this is the first  $type$ message for $q$ with sequence $tag$ received from $r$, else return $false$}
	\ENDIF

\end{multicols*}
\end{algorithmic}
\caption{\AlgName---Procedures used by client/server $p$}
\label{algo:Common}
\end{algorithm*}

	The correctness of \AlgName{} relies on the system parameters
$\alpha$, $f$, and $NS_{min}$ satisfying the following constraints,
for some choice of algorithm parameters $\beta$ and $\gamma$:

\begin{align}
\alpha  &\leq 1 - 2^{-1/4} \approx 0.159   \label{constraint:A }
\\ 
1       &\le (1 - \alpha)^3 NS_{min} -2f   \label{constraint:B }
\\
\gamma  &\geq \frac{1+2f}{(1-\alpha)^3NS_{min}}  +  \frac{(1 + \alpha)^3}{(1 - \alpha)^3} - 1   \label{constraint:C }
   \displaybreak[0]
\\
\gamma &\leq \frac{(1 - \alpha)^3}{(1 + \alpha)^3} - \frac{f}{(1+\alpha)^3 NS_{min}} \label{constraint:D }
   \displaybreak[0]
\\
\beta  & \leq \frac{(1 - \alpha)^3}{(1 + \alpha)^2}  - \frac{f}{(1+\alpha)^2 NS_{min}} \label{constraint:E }
   \displaybreak[0]
\\
\beta  &> \frac{ (1 + \alpha)^5 - 1 +2f/NS_{min} }{(1 - \alpha)^4 -f/NS_{min}}  \label{constraint:F }
   \displaybreak[0]
\\
\beta  &> \frac{(1 + \alpha)^3 - (1 - \alpha)^3 +1+ (1+3f)/NS_{min}}{\left[(2 + 2\alpha + \alpha^2)(1 - \alpha)^2(1 + \alpha)^{-2}\right] -2f/NS_{min}}  \label{constraint:G }
\end{align}

\begin{table}
\begin{center}
\begin{tabular}{ |c|c|c|c|c| }
\hline
\multicolumn{3}{|c|}{system } &\multicolumn{2}{c|}{ algorithm } \\
\multicolumn{3}{|c|}{ parameters} &\multicolumn{2}{c|}{ parameters} \\\hline
maximum & minimum & churn   &   $join\_bound$  & $rw\_bound$   \\
 failures &  system & rate  &     fraction  &  fraction  \\
($f$) &  size ($NS_{min}$) & ($ \alpha$) &    ($\gamma$) &($\beta$)   \\\hline
$1$ &$8$ &  $0$& N/A & $0.86$ \\ \hline
$1$ &  $10$ & $0.01$ & $0.82$ & $0.84$ \\ \hline
$1$ &$13$ &  $0.02$ & $0.79$ & $0.80$ \\ \hline
$1$ &$190$ &  $0.05$ & $0.79$ & $0.80$ \\ \hline
$2$ &$19$ &  $0.01$ & $0.80$ & $0.83$ \\ \hline
$2$ &$24$ &  $0.02$ & $0.81$ & $0.82$ \\ \hline
$2$ &$347$ &  $0.05$& $0.70$ & $0.77$ \\ \hline
$5$ &$44$ &  $0.01$ & $0.80$ & $0.83$ \\ \hline
$5$ &$57$ &  $0.02$ & $0.79$ & $0.82$ \\ \hline
$5$ &$826$ &  $0.05$& $0.79$ & $0.82$ \\ \hline
$10$ &$85$ &  $0.01$ & $0.80$ & $0.83$ \\ \hline
$10$ &$113$ &  $0.02$ & $0.79$ & $0.82$ \\ \hline
$10$ &$1630$ &  $0.05$& $0.79$ & $0.82$ \\ \hline
$100$ &$838$ & $0.01$ & $0.79$ & $0.82$ \\ \hline
$100$ &$1107$ &  $0.02$ & $0.79$ & $0.82$ \\ \hline
$100$ &$16015$ &  $0.05$& $0.79$ & $0.82$ \\ \hline
$1000$ &$8360$ &  $0.01$ & $0.79$ & $0.82$ \\ \hline
$1000$ &$11042$ &  $0.02$ & $0.79$ & $0.82$ \\ \hline
$1000$ &$159935$ &  $0.05$& $0.79$ & $0.82$ \\ \hline

\end{tabular}
\end{center}
\caption{Values for the parameters 
that satisfy constraints (1)--(7).}
\label{table:assumptions}
\end{table}

Constraint~(\ref{constraint:A }) is an upper bound on the churn rate 
to ensure that not too many servers  can
leave the system in an interval of length $4D$.
Constraint~(\ref{constraint:B }) is a lower bound on the minimum system size 
to ensure that at least $f+1$ correct servers are in the system throughout an interval of length $3D$ encompassing the time a node enters, thus ensuring that the newly entered node successfully terminates its joining protocol.
Constraint~(\ref{constraint:C }) ensures that the $join\_bound$
fraction, $\gamma$, is large enough such that updated information
about the system is obtained by an entered node before it joins the
system. Constraint~(\ref{constraint:D }) ensures that $\gamma$ is small
enough such that for all entered nodes, a join operation terminates if
the entered node \sapta{is not faulty and does not leave.} 
Constraint~(\ref{constraint:E }) ensures that the $rw\_bound$ fraction, $\beta$, is small enough such that termination of reads and writes is guaranteed.  Constraints~(\ref{constraint:F }) and~(\ref{constraint:G }) ensure that $\beta$ is large enough such that atomicity is not violated by read and write operations.
Table~\ref{table:assumptions} gives a few sets of values for which the
above constraints are satisfied. In all consistent sets of parameter values, the churn rate $\alpha$ is never more than $0.05$ \sapta{ and $NS_{min} > 8.5f$. The algorithm can tolerate any size of $f$ as long as $NS_{min}$ is proportionally big.} 

 {\sc \AlgName{}} violates atomicity if Assumption A5
is violated. 


\remove{
\begin{table}
\begin{center}
\begin{tabular}{ |c|c|c|c|c| }
\hline
\multicolumn{3}{|c|}{system } &\multicolumn{2}{c|}{ algorithm } \\
\multicolumn{3}{|c|}{ parameters} &\multicolumn{2}{c|}{ parameters} \\\hline
churn   & maximum & minimum & $join\_bound$  & $rw\_bound$   \\
 rate  &  failures&  system &  fraction  &  fraction  \\
($ \alpha$) & ($f$) &  size ($NS_{min}$)
& ($\gamma$) &($\beta$)   \\\hline
$0.01$ & $1$ & $10$& $0.82$ & $0.84$ \\ \hline
$0.02$ & $1$ & $15$& $0.79$ & $0.80$ \\ \hline
$0.01$ & $2$ & $20$& $0.84$ & $0.81$ \\ \hline
$0$ & $14$ & 100 & N/A& $0.84$ \\ \hline
$0.03$ & $5$ & $100$& $0.785$ & $0.775$ \\ \hline
$0.048$& $1$ & $100$& $0.74$ & $0.776$ \\ \hline
\end{tabular}
\end{center}
\caption{Values for the parameters 
that satisfy constraints (1)--(7).}
\label{table:assumptions}
\end{table}

\begin{table}
\begin{center}
\begin{tabular}{ |c|c|c|c|c| }
\hline
\multicolumn{3}{|c|}{system } &\multicolumn{2}{c|}{ algorithm } \\
\multicolumn{3}{|c|}{ parameters} &\multicolumn{2}{c|}{ parameters} \\\hline
maximum & minimum & churn   &   $join\_bound$  & $rw\_bound$   \\
 failures &  system & rate  &     fraction  &  fraction  \\
($f$) &  size ($NS_{min}$) & ($ \alpha$) &    ($\gamma$) &($\beta$)   \\\hline
$1$ &$8$ &  $0$& N/A & $0.86$ \\ \hline
$1$ &  $10$ & $0.01$ & $0.82$ & $0.84$ \\ \hline
$1$ &$13$ &  $0.02$ & $0.79$ & $0.80$ \\ \hline
$1$ &$190$ &  $0.05$ & $0.79$ & $0.80$ \\ \hline
$2$ &$19$ &  $0.01$ & $0.80$ & $0.83$ \\ \hline
$2$ &$24$ &  $0.02$ & $0.81$ & $0.82$ \\ \hline
$2$ &$347$ &  $0.05$& $0.70$ & $0.77$ \\ \hline
$5$ &$44$ &  $0.01$ & $0.80$ & $0.83$ \\ \hline
$5$ &$57$ &  $0.02$ & $0.79$ & $0.82$ \\ \hline
$5$ &$826$ &  $0.05$& $0.79$ & $0.82$ \\ \hline
$10$ &$85$ &  $0.01$ & $0.80$ & $0.83$ \\ \hline
$10$ &$113$ &  $0.02$ & $0.79$ & $0.82$ \\ \hline
$10$ &$1630$ &  $0.05$& $0.79$ & $0.82$ \\ \hline
$100$ &$838$ & $0.01$ & $0.79$ & $0.82$ \\ \hline
$100$ &$1107$ &  $0.02$ & $0.79$ & $0.82$ \\ \hline
$100$ &$16015$ &  $0.05$& $0.79$ & $0.82$ \\ \hline
$1000$ &$8360$ &  $0.01$ & $0.79$ & $0.82$ \\ \hline
$1000$ &$11042$ &  $0.02$ & $0.79$ & $0.82$ \\ \hline
$1000$ &$159935$ &  $0.05$& $0.79$ & $0.82$ \\ \hline

\end{tabular}
\end{center}
\caption{Values for the parameters 
that satisfy constraints (1)--(7).}
\label{table:assumptions}
\end{table}

}

\remove{

Each joined client thread can run at most one read operation or at
most one write operation at any time.
We assume that the code segment that
is executed in response to each event executes without interruption. For all nodes $p$, there exists an in-built procedure, IsClient$_p$($q$) that can check from a node id $q$  if $q$ is a client or not. This procedure helps check whether Byzantine server nodes are pretending to be clients. 

  The server algorithm is a mechanism for tracking the composition of
the system and the client algorithm (very similar
to~\cite{LynchS1997}) is for
for reading from and writing to the shared register.
The read/write algorithm associates a unique timestamp with each value that is written.
A timestamp is a pair that consists of two values:
a sequence number ($num$) and a node id ($w\_id$) and these ($num, w\_id$)
pairs are ordered lexicographically.
Below, the local variables of node $p$ are subscripted with $p$;
e.g., $v_p$ refers to node $p$'s local variable $v$.

In order to track the composition of the system (Algorithm~\ref{algo:Server}), each node $p$ maintains a set of events, {\it Server\_Changes$_p$}, concerning the server nodes that have entered the system.  When an {\sc Enter}$_q$ event occurs, node $q$ s-bcasts an enter message requesting information about prior events and $q$ adds $enter(q)$ to {\it Server\_Changes$_q$}.  When a server node $p$ finds out that node $q$ has entered the system, either by receiving this message or by learning indirectly from another node, $p$ updates 
 {\it Server\_Changes$_p$} if $q$ is a server and sends out an echo message with information about the system. When node $p$ receives  at least $f+1$  enter-echo messages from  joined nodes,  it calculates the number of replies it needs in order to join. The fraction $\gamma$ is used to calculate the number of messages that
should be received before joining (stored in the $join\_bound$ local variable),
based on the size of the $\mbox{\it Present}_p $ set. When $p$ has
received sufficiently many messages in reply to its
request, it knows relatively accurate information about prior
events and the value of the register.   Setting $\gamma$ is a key
challenge in the algorithm as setting it too small might not propagate
updated information, whereas setting it too large might not guarantee
termination of the join. The algorithm sends out messages that are authenticated with digital signatures. As a result Byzantine nodes can send out incorrect information about everything except for node ids.

Algorithm~\ref{algo:Server} is run by servers and at any point in this algorithm, Byzantine servers can modify information about anything sent out in messages, subject to the following restrictions: 
\begin{itemize}
\item   {\it Server\_Changes}$_p$: Byzantine nodes can only send out subsets of the  {\it Server\_Changes}$_p$ set. They cannot modify entries as each entry in this variable contains a node id which was digitally signed by a the sending server. 
\item $ is\_joined_p$ : Byzantine nodes can modify this and lie about its $ is\_joined_p$ status.
\item $val$, $num$ and $w\_id$: Byzantine nodes can modify variables $val$ and $num$. But it cannot modify  the $w\_id$ variable which is a client node id or $\perp$.
\item {\it Known\_Writes}[]$_p$: Byzantine nodes can send out subsets of {\it Known\_Writes}[]$_p$, but cannot add entries. For an entry $(val, (num, w\_id))$ in {\it Known\_Writes}$[q]_p$, Byzantine servers can  modify the $val$ and $num$ variables this entry for node $q$.   
\item node ids: Byzantine nodes can never modify any node (client/server) id. 
\end{itemize} When a server  $p$ receives an enter message from a  node $q$,
it responds with an enter-echo message containing {\it Server\_Changes}$_p$, the {\it Known\_Writes}$[p]_p$ variable containing the history of writes it has heard of,  its current estimate of the register value (together with its timestamp),
$is\_joined_p$ (indicating whether $p$ has joined yet), and the id $q$ stating, who triggered the enter-echo message.
When $p$ receives an enter-echo in reply (i.e., an enter-echo message that ends with $p$), from server node $q$, if $q$ is a joined server then $p$ adds $\mbox{\it Known\_Writes}[q]_q$ to $\mbox{\it Known\_Writes}[q]_p$. If $p$ is not joined yet, $p$ calls the procedure JoinProtocol$_p$. The JoinProtocol$_p$ procedure increments $p$'s  $enter\_echo\_counter$. If $q$ is a joined server  and if $join\_bound$ is set to zero, 
then $p$  increments the variable $enter\_echo\_from\_joined\_counter$.
When $q$ receives  enter-echoes from at least $f+1$ joined nodes,
it computes $join\_bound$, the number of enter-echo messages it needs to get before it can join based on $\gamma$ and {\it Present}$_p$. %

When the required number of replies to the enter message sent by $q$ are received, $q$ sets its $is\_joined_q$ flag to $true$. 
If $q$ is a client, it outputs the response {\sc Joined}$_q$ and if $q$ is a server, $q$ adds $join(q)$ to {\it Server\_Changes}$_q$ and s-bcasts a message saying that it has joined. Node $q$ also sets its value timestamp pair ($val,num,w\_id)$ to the variable $valid\_val$, which is the value with latest timestamp that occurs in at least $(f + 1)$ elements of the  {\it Known\_Writes}$[q]_q$ variable.  This is done in order to prevent the $valid\_val$ variable from being corrupted by Byzantine servers sending out incorrect values of the register.

When server $p$ finds out that  server  $q$ has joined, either by receiving this message 
or by learning indirectly from another node, it adds $join(q)$ to
{\it Server\_Changes}$_p$.  When a {\sc Leave}$_q$ event occurs, server 
$q$ s-bcasts a leave message and halts.
When server  $p$ finds out that server $q$ has left
the system, either by receiving this message or by learning indirectly
from another node, it adds $leave(q)$ to {\it Server\_Changes}$_p$.
We describe the rest of Algorithm~\ref{algo:Server} after discussing the client algorithm (Algorithm~\ref{algo:Client}).

Let $S_0$ be the set of servers in the system initially. For all nodes $p$, the initial value of the set $\mbox{\it Server\_Changes}_p = \{enter(q)~|~ q \in S_0\} \cup
\{join(q) ~|~ q \in S_0\}$, if $p$ was in the system initially, and $\emptyset$ otherwise. Note that the set {\it Server\_Changes}$_p$ only stores information about servers.
A node $p$ also maintains the set
$\mbox{\it Present}_p = \{q ~|~ enter(q) \in \mbox{\it Server\_Changes}_p \wedge
leave(q) \not\in \mbox{\it Server\_Changes}_p\}$ of server nodes that $p$ considers as present,
i.e.,~server nodes that have entered, but have not left, as far as $p$ knows. A server node $p$ is called a {\em member} if it has joined the system but not left.
Client node $p$ maintains the derived variable
$\mbox{\it Members}_p = \{q ~|~ join(q) \in \mbox{\it Server\_Changes}_p ~\wedge~ leave(q) \not\in \mbox{\it Server\_Changes}_p\}$
of server nodes that $p$ considers as members.

	The correctness of \AlgName{} relies on the system parameters
$\alpha$, $f$, and $NS_{min}$ satisfying the following constraints,
for some choice of algorithm parameters $\beta$ and $\gamma$:

\begin{align}
\alpha  &\leq 1 - 2^{-1/4} \approx 0.159   \label{constraint:A }
\\ 
1       &\le (1 - \alpha)^3 NS_{min} -2f   \label{constraint:B }
\\
\gamma  &\geq \frac{1+2f}{(1-\alpha)^3NS_{min}}  +  \frac{(1 + \alpha)^3}{(1 - \alpha)^3} - 1   \label{constraint:C }
   \displaybreak[0]
\\
\gamma &\leq \frac{(1 - \alpha)^3}{(1 + \alpha)^3} - \frac{f}{(1+\alpha)^3 NS_{min}} \label{constraint:D }
   \displaybreak[0]
\\
\beta  & \leq \frac{(1 - \alpha)^3}{(1 + \alpha)^2}  - \frac{f}{(1+\alpha)^2 NS_{min}} \label{constraint:E }
   \displaybreak[0]
\\
\beta  &> \frac{ (1 + \alpha)^5 - 1 +2f/NS_{min} }{(1 - \alpha)^4 -f/NS_{min}}  \label{constraint:F }
   \displaybreak[0]
\\
\beta  &> \frac{(1 + \alpha)^3 - (1 - \alpha)^3 +1+ (1+3f)/NS_{min}}{\left[(2 + 2\alpha + \alpha^2)(1 - \alpha)^2(1 + \alpha)^{-2}\right] -2f/NS_{min}}  \label{constraint:G }
\end{align}

The following table provides a few sets of values for system parameters $f$,$NS_{min}$ and $\alpha$ and algorithm parameters $\gamma$ and $\beta$ that satisfy Constraints~\ref{constraint:A } to~\ref{constraint:G }

\begin{table}[!htbp]
\begin{center}
\begin{tabular}{ |c|c|c|c|c| }
\hline
\multicolumn{3}{|c|}{system } &\multicolumn{2}{c|}{ algorithm } \\
\multicolumn{3}{|c|}{ parameters} &\multicolumn{2}{c|}{ parameters} \\\hline
maximum & minimum & churn   &   $join\_bound$  & $rw\_bound$   \\
 failures &  system & rate  &     fraction  &  fraction  \\
($f$) &  size ) & ($ \alpha$) &    ($\gamma$) &($\beta$)   \\
& ($NS_{min}$) &  &   &   \\\hline
$1$ &$8$ &  $0$& N/A & $0.86$ \\ \hline
$1$ &  $10$ & $0.01$ & $0.82$ & $0.84$ \\ \hline
$1$ &$13$ &  $0.02$ & $0.79$ & $0.80$ \\ \hline
$1$ &$190$ &  $0.05$ & $0.79$ & $0.80$ \\ \hline
$2$ &$19$ &  $0.01$ & $0.80$ & $0.83$ \\ \hline
$2$ &$24$ &  $0.02$ & $0.81$ & $0.82$ \\ \hline
$2$ &$347$ &  $0.05$& $0.70$ & $0.77$ \\ \hline
$5$ &$44$ &  $0.01$ & $0.80$ & $0.83$ \\ \hline
$5$ &$57$ &  $0.02$ & $0.79$ & $0.82$ \\ \hline
$5$ &$826$ &  $0.05$& $0.79$ & $0.82$ \\ \hline
$10$ &$85$ &  $0.01$ & $0.80$ & $0.83$ \\ \hline
$10$ &$113$ &  $0.02$ & $0.79$ & $0.82$ \\ \hline
$10$ &$1630$ &  $0.05$& $0.79$ & $0.82$ \\ \hline
$100$ &$838$ & $0.01$ & $0.79$ & $0.82$ \\ \hline
$100$ &$1107$ &  $0.02$ & $0.79$ & $0.82$ \\ \hline
$100$ &$16015$ &  $0.05$& $0.79$ & $0.82$ \\ \hline
$1000$ &$8360$ &  $0.01$ & $0.79$ & $0.82$ \\ \hline
$1000$ &$11042$ &  $0.02$ & $0.79$ & $0.82$ \\ \hline
$1000$ &$159935$ &  $0.05$& $0.79$ & $0.82$ \\ \hline
\end{tabular}
\end{center}
\caption{Values for the parameters 
that satisfy Constraints~\ref{constraint:A } to~\ref{constraint:G }}
\label{table:assumptions}
\end{table}

Constraint~(\ref{constraint:A }) is an upper bound on the churn rate 
to ensure that not too many servers  can
leave the system in an interval of length $4D$.
Constraint~(\ref{constraint:B }) is a lower bound on the minimum system size 
to ensure that at least $f+1$ correct servers are in the system throughout an interval of length $3D$ encompassing the time a node enters, thus ensuring that the newly entered node successfully terminates its joining protocol.
Constraint~(\ref{constraint:C }) ensures that the $join\_bound$
fraction, $\gamma$, is large enough such that updated information
about the system is obtained by an entered node before it joins the
system. Constraint~(\ref{constraint:D }) ensures that $\gamma$ is small
enough such that for all entered nodes, a join operation terminates if
the entered node is not Byzantine or it does not leave or crash.
Constraint~(\ref{constraint:E }) ensures that the $rw\_bound$ fraction, $\beta$, is small enough such that termination of reads and writes is guaranteed.  Constraints~(\ref{constraint:F }) and~(\ref{constraint:G }) ensure that $\beta$ is large enough such that atomicity is not violated by read and write operations.
Table~\ref{table:assumptions} 
gives a few sets of values for which the
above constraints are satisfied. In all consistent sets of parameters, the churn rate $\alpha$ is never more than $0.05$. The algorithm can tolerate any size of $f$ as long as $NS_{min}$ is proportionally big. 

 {\sc \AlgName{}} violates atomicity if Assumption A5
is violated. 

\remove{	The correctness of \AlgName{} relies on the system parameters
$\alpha$, $f$, and $NS_{min}$ satisfying the following constraints,
for some choice of algorithm parameters $\beta$ and $\gamma$:

\begin{align}
\alpha  &\leq 1 - 2^{-1/4} \approx 0.159   \label{constraint:A }
\\ 
1       &\le (1 - \alpha)^3 NS_{min} -2f   \label{constraint:B }
\\
\gamma  &\geq \frac{1+2f}{(1-\alpha)^3NS_{min}}  +  \frac{(1 + \alpha)^3}{(1 - \alpha)^3} - 1   \label{constraint:C }
   \displaybreak[0]
\\
\gamma &\leq \frac{(1 - \alpha)^3}{(1 + \alpha)^3} - \frac{f}{(1+\alpha)^3 NS_{min}} \label{constraint:D }
   \displaybreak[0]
\\
\beta  & \leq \frac{(1 - \alpha)^3}{(1 + \alpha)^2}  - \frac{f}{(1+\alpha)^2 NS_{min}} \label{constraint:E }
   \displaybreak[0]
\\
\beta  &> \frac{ (1 + \alpha)^5 - 1 +2f/NS_{min} }{(1 - \alpha)^4 -f/NS_{min}}  \label{constraint:F }
   \displaybreak[0]
\\
\beta  &> \frac{(1 + \alpha)^3 - (1 - \alpha)^3 + (1+3f)/NS_{min}}{\left[(2 + 2\alpha + \alpha^2)(1 - \alpha)^2(1 + \alpha)^{-2}\right] -2f/NS_{min}}  \label{constraint:G }
\end{align}

Constraint~(\ref{constraint:A }) is an upper bound on the churn rate and
is used in Lemma~\ref{lem:left2} to ensure that not too many nodes can
leave the system in an interval of length $4D$.
Constraint~(\ref{constraint:B }) is a lower bound on the minimum system size.
It is used in the proof of Lemma~\ref{lem:gen0} to ensure that at least
one node is in the system throughout an interval of length $3D$
encompassing the time a node enters, thus ensuring that the newly entered
node successfully terminates its joining protocol.
Constraint~(\ref{constraint:C }) ensures that the $join\_bound$
fraction, $\gamma$, is large enough such that updated information
about the system is obtained by an entered node before it joins the
system. Constraint~(\ref{constraint:D }) ensures that $\gamma$ is small
enough such that for all entered nodes, a join operation terminates if
the entered node does not leave or crash.

Constraint~(\ref{constraint:E }) ensures that the $rw\_bound$ fraction, $\beta$, is small enough such that termination of read and writes is guaranteed.  Constraints~(\ref{constraint:F }) and~(\ref{constraint:G }) ensure that $\beta$ is large enough such that atomicity is not violated by read and write operations.
{\color{red}To be added:}Table~\ref{table:assumptions} gives a few sets of values for which the
above constraints are satisfied.

\remove{
{\color{red}{Both $\alpha$ and $f$ must be small:
once $\alpha$ is larger than $0.04$, no failures can be tolerated.}}

\begin{table}
\begin{center}
\begin{tabular}{ |c|c|c|c|c| }
\hline
\multicolumn{3}{|c|}{system } &\multicolumn{2}{c|}{ algorithm } \\
\multicolumn{3}{|c|}{ parameters} &\multicolumn{2}{c|}{ parameters} \\\hline
churn   & failure & minimum & $join\_bound$  & $rw\_bound$   \\
 rate  &  fraction&  system &  fraction  &  fraction  \\
($ \alpha$) & ($f$) &  size ($NS_{min}$)
& ($\gamma$) &($\beta$)   \\\hline
$0$ & $0.33$ & N/A & N/A& $0.665$ \\ \hline
$0.01$ & $0.26$ & $7$& $0.67$ & $0.684$ \\ \hline
$0.02$& $0.19$ & $7$& $0.69$ & $0.701$ \\ \hline
$0.03$ & $0.13$ & $8$& $0.70$ & $0.726$ \\ \hline
$0.04$ & $0.06$ & $9$& $0.72$ & $0.737$ \\ \hline
$0.05$ & $0$ & $10$& $0.74$ & $0.755$ \\ \hline
\end{tabular}
\end{center}
\caption{Values for the parameters 
that satisfy constraints (A)--(G).}
\label{table:assumptions}
\end{table}
}

}
}

\section{Correctness proof of  {\sc \AlgName{}}} \label{section:proof}

We will show that {\sc \AlgName{}} satisfies the  properties C1 to C3 listed
at the end of Section~\ref{section:model}.
Lemmas~\ref{lem:size} through~\ref{lem:knowswhenjoined} 
are used to prove
Theorem~\ref{thm:joins}, 
which states that every client and any correct server eventually joins, provided it does not
crash or leave.  Lemmas~\ref{lem:present-2D} through~\ref{lem:joined-for-D}
are used to prove
Theorem~\ref{thm:ops-live}, which states
that every operation invoked by a client that remains active eventually
completes.  Lemmas~\ref{lem:lin1} through~\ref{lem:lin4} are used to prove Theorem~\ref{thm:atomicity}, which
states that atomicity is satisfied.

Consider any execution.
We begin by bounding the number of servers that enter
during an interval of time
and the number of servers that are present at the end of the
interval, as compared to the number present at the beginning.

\sapta{ \begin{lemma}\label{lem:size}
For all $i \in \nats$ and all $t \geq 0$, at most $((1 + \alpha)^i-1)NS(t)$ servers enter during $(t,t+Di]$ and $(1 - \alpha)^i NS(t) \leq NS(t+Di) \leq (1+\alpha)^iNS(t)$.
\end{lemma}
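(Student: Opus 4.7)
The plan is to prove the lemma by induction on $i$, using Assumption A5 as the key tool. The base case $i=0$ is immediate: the interval $(t, t+0 \cdot D]$ is empty, so zero servers enter, which matches $((1+\alpha)^0 - 1)NS(t) = 0$, and the inequalities $(1-\alpha)^0 NS(t) \le NS(t) \le (1+\alpha)^0 NS(t)$ are trivial.

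For the inductive step, I would assume the claim holds for $i$ and prove it for $i+1$. Applying Assumption A5 to the time $t+Di$ gives that at most $\alpha \cdot NS(t+Di)$ enter/leave events for servers occur in the interval $[t+Di, t+D(i+1)]$. Consequently, $NS(t+D(i+1))$ can differ from $NS(t+Di)$ by at most $\alpha \cdot NS(t+Di)$ in either direction, so
\begin{align*}
(1-\alpha) NS(t+Di) \;\le\; NS(t+D(i+1)) \;\le\; (1+\alpha) NS(t+Di).
\end{align*}
Combining this with the inductive hypothesis $(1-\alpha)^i NS(t) \le NS(t+Di) \le (1+\alpha)^i NS(t)$ immediately gives the desired bounds $(1-\alpha)^{i+1} NS(t) \le NS(t+D(i+1)) \le (1+\alpha)^{i+1} NS(t)$.

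For the entry count, I would split $(t, t+D(i+1)]$ into $(t, t+Di]$ and $(t+Di, t+D(i+1)]$. The inductive hypothesis bounds the first piece by $((1+\alpha)^i - 1)NS(t)$, and Assumption A5 together with the upper bound on $NS(t+Di)$ just established bounds the second piece by $\alpha \cdot NS(t+Di) \le \alpha(1+\alpha)^i NS(t)$. Summing, the total number of entries in $(t, t+D(i+1)]$ is at most $((1+\alpha)^i - 1 + \alpha(1+\alpha)^i) NS(t) = ((1+\alpha)^{i+1} - 1) NS(t)$, as required.

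I do not expect any serious obstacle; the argument is routine once one recognizes that A5 only bounds churn over a single length-$D$ window, so induction is the natural way to lift it to arbitrary multiples of $D$. The only subtlety worth being careful about is that the per-window churn bound is expressed in terms of the system size at the \emph{beginning} of each window, which is precisely why iterating the inequality produces the geometric factor $(1\pm\alpha)^i$ rather than something additive.
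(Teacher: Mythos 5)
Your proposal is correct and follows essentially the same route as the paper: induction on $i$ with Assumption A5 applied to the single window $(t+Di,\,t+D(i+1)]$, splitting the enter count across the two subintervals and iterating the one-window churn bound to get the geometric factors. The only cosmetic difference is that the paper obtains the upper bound on $NS(t+D(i+1))$ from $NS(t)$ plus the total entry count over the whole interval, while you iterate $NS(t+D(i+1)) \leq (1+\alpha)NS(t+Di)$ directly; both are equivalent.
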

\begin{proof}
The proof is by induction on $i$ and is adapted from~\cite{AttiyaCEKW2018}.
For $i = 0$ and all $t \geq 0$, $(t,t+Di]$ is empty,
and hence, $0 = ((1+\alpha)^i-1)NS(t)$ servers enter during this interval and
\[ NS(t+iD) = NS(t) =  (1+\alpha)^iNS(t) = (1 - \alpha)^i NS(t) . \]
Now let $i \geq 0$ and $t \geq 0$.
Suppose
at most $((1 + \alpha)^i-1)NS(t)$ servers enter during $(t,t+Di]$ and
$(1 - \alpha)^i NS(t) \leq NS(t+Di) \leq (1+\alpha)^iNS(t)$.

Let $e\geq 0$ and $\ell\geq 0$ be the number of servers that enter and leave,
respectively, during $(t+Di,t+D(i+1)]$.
By Assumption A5,
 $e+\ell\leq \alpha NS(t+Di)$, so
$e,\ell \leq \alpha NS(t+Di) \leq \alpha(1+\alpha)^iNS(t)$.
The number of servers that enter during $(t,t+D(i+1)]$ is at most
\begin{multline*}
((1 + \alpha)^i-1)NS(t) + e
    \leq ((1 + \alpha)^i-1)NS(t) + \\ \alpha(1+\alpha)^iNS(t) \\
      = ((1+\alpha)^{i+1}-1)NS(t) .
\end{multline*}
Hence,
\begin{multline*}
NS(t + D(i+1)) \leq NS(t) +  ((1+\alpha)^{i+1}-1)NS(t)\\
 = (1+\alpha)^{i+1}NS(t) .
\end{multline*}
Furthermore,
\begin{multline*}
NS(t + D(i+1)) \geq NS(t+Di) -\ell\\
    \geq NS(t+Di) - \alpha NS(t+Di) \\
    = (1 - \alpha)NS(t+Di) \geq (1-\alpha)^{i+1}NS(t) .
\end{multline*}
By induction, the claim is true for all $i \in \nats$.
\end{proof}

We are also interested in the number of servers that leave
during an interval of time.
The calculation of the maximum number of servers that leave during an interval
is complicated by the possibility of servers entering during the interval,
allowing additional servers to leave.

\begin{lemma}\label{lem:left2}
For $\alpha >0$, all nonnegative integers $i \leq -1/\log_2(1-\alpha)$
and every time $t \geq 0$,
at most $(1-(1-\alpha)^i) NS(t)$ servers leave during $(t,t+Di]$.
\end{lemma}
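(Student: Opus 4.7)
The plan is to proceed by induction on $i$. The base case $i = 0$ is immediate: the interval $(t, t + 0 \cdot D]$ is empty, so zero servers leave, matching $(1 - (1-\alpha)^0)\,NS(t) = 0$.

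For the inductive step, I would split $(t, t + iD]$ into the first $D$-window $(t, t + D]$ and the remainder $(t + D, t + iD]$. Let $\ell_0$ and $e_0$ denote the numbers of leaves and enters in $(t, t + D]$. Assumption A5 gives $\ell_0 + e_0 \leq \alpha \, NS(t)$, and by definition $NS(t + D) = NS(t) + e_0 - \ell_0$. Applying the inductive hypothesis with starting time $t + D$ and $i - 1$ periods (valid because if $i$ satisfies the stated bound then so does $i - 1$), the number of leaves in $(t + D, t + iD]$ is at most $(1 - (1 - \alpha)^{i - 1})\, NS(t + D)$. Adding $\ell_0$ gives an upper bound of
\[
\ell_0 + \bigl(1 - (1 - \alpha)^{i - 1}\bigr)\bigl(NS(t) + e_0 - \ell_0\bigr)
\]
on the total number of leaves in $(t, t + iD]$.

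The remaining task is to show this expression is at most $(1 - (1 - \alpha)^i)\, NS(t)$. After expanding and collecting terms, the required inequality is equivalent to
\[
(1 - \alpha)^{i - 1} \ell_0 + \bigl(1 - (1 - \alpha)^{i - 1}\bigr) e_0 \;\leq\; \alpha\, (1 - \alpha)^{i - 1}\, NS(t).
\]
The plan is to use $\ell_0 + e_0 \leq \alpha\, NS(t)$ and replace both coefficients on the left by the larger of the two. Since the coefficient of $\ell_0$ is $(1 - \alpha)^{i - 1}$ and the coefficient of $e_0$ is its complement $1 - (1 - \alpha)^{i - 1}$, the former dominates exactly when $(1 - \alpha)^{i - 1} \geq 1/2$, at which point the desired inequality follows by factoring out $(1-\alpha)^{i-1}$.

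The main obstacle, and the reason for the peculiar hypothesis $i \leq -1/\log_2(1 - \alpha)$, is establishing $(1 - \alpha)^{i - 1} \geq 1/2$. Taking $\log_2$ on both sides and using $\log_2(1-\alpha) < 0$, this condition is equivalent to $i - 1 \leq -1/\log_2(1 - \alpha)$, which is implied by the hypothesis of the lemma. Intuitively, this cap on $i$ is needed because the naive alternative of bounding each $D$-subinterval by $\alpha \cdot NS$ and invoking the upper bound $NS(t + jD) \leq (1 + \alpha)^j NS(t)$ from Lemma~\ref{lem:size} only yields the weaker estimate $((1 + \alpha)^i - 1)\, NS(t)$; the tighter $(1 - (1-\alpha)^i)\, NS(t)$ requires amortising entries against subsequent leaves, which only works while the ``retention factor'' $(1 - \alpha)^{i - 1}$ dominates the complementary ``growth factor'' $1 - (1 - \alpha)^{i - 1}$.
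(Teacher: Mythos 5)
Your proposal is correct and follows essentially the same route as the paper: induction on $i$, peeling off the first window $(t,t+D]$, applying the inductive hypothesis at time $t+D$, and invoking Assumption A5 together with the hypothesis $i \leq -1/\log_2(1-\alpha)$ to control the sign of the relevant coefficient. The only (cosmetic) difference is that the paper first substitutes $NS(t+D) \leq (1+\alpha)NS(t) - 2\ell$ and then bounds the coefficient $2(1-\alpha)^i - 1$ of $\ell$, whereas you bound the weighted sum of $\ell_0$ and $e_0$ by its larger coefficient directly; the algebra is equivalent.
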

}

\begin{proof}
The proof is by induction on $i$ and is adapted from~\cite{AttiyaCEKW2018}.
When $i = 0$, the interval is empty,
so $0 = (1-(1-\alpha)^0) NS(t)$ servers leave during the interval.
Now let $i \geq 0$, let $t \geq 0$, and suppose
at most $(1-(1-\alpha)^i) NS(t+D)$ servers leave during $(t+D,t+D(i+1)]$.

Let $e\geq 0$ and $\ell\geq 0$ be the number of servers that enter and leave, respectively, during $(t,t+D]$.
By Assumption A5,
 $e+\ell \leq \alpha NS(t)$, so $\ell \leq  \alpha NS(t)$ and
$NS(t+D) = NS(t)+ e - \ell  =  NS(t) + (\ell + e) - 2\ell \leq (1 + \alpha) NS(t) - 2\ell$.
The number of servers that leave during $(t,t+D(i+1)]$
is the number that leave during $(t,t+D]$ plus the number that leave during $(t+D,t+D(i+1)]$,
which is at most

\begin{multline*}
\ell + (1-(1-\alpha)^i) NS(t+D) \\
\leq  \ell + (1-(1-\alpha)^i) [(1 + \alpha) NS(t) - 2\ell] \\
=  (1-(1-\alpha)^i) (1 + \alpha) NS(t) + ( 2(1-\alpha)^i-1) \ell\\
\leq  (1-(1-\alpha)^i) (1 + \alpha) NS(t) + ( 2(1-\alpha)^i-1)\alpha NS(t)\\
= (1-(1-\alpha)^{i+1}) NS(t).
\end{multline*}

Note that $2(1-\alpha)^i -1 \geq 0$, since $i \leq -1/\log_2(1-\alpha)$.
By induction, the claim is true for all $i \in \nats$.
\end{proof}


 Lemma~\ref{lem:gen0} proves that at least $f+1$ correct servers are active throughout any interval of length $3D$. This lemma is necessary to ensure that at all times, an active node (client or server) that expects replies, hears back from at least $f+1$ correct servers in order to mask the bad information sent by Byzantine servers. 
\sapta{
\begin{lemma}
\label{lem:gen0}
For every $t > 0$, at least $f+1$ correct servers
are active throughout $[\max\{0,t -2D\},t +D]$.
\end{lemma}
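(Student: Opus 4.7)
The plan is to combine Assumption A1 (lower bound on system size), Lemma~\ref{lem:left2} (bound on departures over an interval), and Constraint~(\ref{constraint:B }) to directly count correct servers that never leave during the target interval. Recall from the model that a server is active at time $t$ iff it is present at $t$, since servers cannot crash; so it suffices to count servers that are present throughout the whole interval.

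First I would set $t_0 := \max\{0, t - 2D\}$, so the target interval is $[t_0, t+D]$ of length at most $3D$. By Assumption A1, $NS(t_0) \ge NS_{min}$. Then I apply Lemma~\ref{lem:left2} with $i = 3$: this is legitimate because Constraint~(\ref{constraint:A }) gives $\alpha \le 1 - 2^{-1/4}$, which implies $(1-\alpha)^3 \ge (1-\alpha)^4 \ge 1/2$, i.e.\ $3 \le -1/\log_2(1-\alpha)$. The lemma then bounds the number of servers that leave during $(t_0, t_0+3D]$ by $(1 - (1-\alpha)^3) \cdot NS(t_0)$. Since $[t_0, t+D] \subseteq [t_0, t_0 + 3D]$, the same bound applies to departures during our target interval.

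Next I would observe that a server present at $t_0$ that does not leave during $(t_0, t+D]$ remains present (hence active) throughout the whole interval. Subtracting the departure bound from $NS(t_0)$ gives at least
\[
NS(t_0) - (1 - (1-\alpha)^3)\,NS(t_0) = (1-\alpha)^3\, NS(t_0) \ge (1-\alpha)^3\, NS_{min}
\]
servers present throughout $[t_0, t+D]$. Since at most $f$ servers are Byzantine (Assumption A6), at least $(1-\alpha)^3 NS_{min} - f$ \emph{correct} servers are active throughout the interval. Finally, Constraint~(\ref{constraint:B }) states that $(1-\alpha)^3 NS_{min} - 2f \ge 1$, equivalently $(1-\alpha)^3 NS_{min} - f \ge f + 1$, which yields the claim.

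There is no real obstacle here; the proof is essentially a bookkeeping argument. The only subtleties are (i) verifying that the hypothesis of Lemma~\ref{lem:left2} is met via Constraint~(\ref{constraint:A }), (ii) handling the $t \le 2D$ case uniformly by taking $t_0 = \max\{0, t-2D\}$ and noting the interval only shrinks, and (iii) recognizing that the algebraic content of ``$f+1$ correct servers'' is precisely Constraint~(\ref{constraint:B }), which is exactly what makes this the ``right'' lower bound on $NS_{min}$.
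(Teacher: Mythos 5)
Your proof is correct and follows essentially the same route as the paper's: invoke Assumption A1 for $NS(t_0)\ge NS_{min}$, apply Lemma~\ref{lem:left2} with $i=3$ (justified via Constraint~(\ref{constraint:A })), subtract the at most $f$ Byzantine servers, and close with Constraint~(\ref{constraint:B }). Your version is in fact slightly more explicit than the paper's about why the hypothesis of Lemma~\ref{lem:left2} holds and why surviving servers are present throughout the entire interval rather than merely at its endpoint.
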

}

\begin{proof}
Let $S$ be the set of servers present at time
$t' = \max\{0,t -2D\}$,
so $|S| =  NS(t') \geq NS_{min}$.
Constraint~(\ref{constraint:A }) implies that
$-1/\log_2(1- \alpha) \geq 4 \geq  3$. So,
by Lemma~\ref{lem:left2},
at most $(1-(1-\alpha)^3) |S|$ servers leave during
$(t',t+D]$
and there are at least $(1-\alpha)^3 |S|$ servers present throughtout  time interval $(t',t+D]$. At any point in time, there are at most $f$ Byzantine servers in the system.  
Thus, at least

\begin{equation*}
(1-\alpha)^3 |S| -f  \geq (1-\alpha)^3    NS_{min} -f
\end{equation*}

 correct servers in $S$ are active at time $t+D$.
By Constraint~(\ref{constraint:B }),
$ (1- \alpha)^3 NS_{min} -f   \ge f+1$, 
so at least   $f+1$ correct servers in $S$ are still active at time $t+D$.
\end{proof}

Below, a local variable name is superscripted with $t$ to denote
the value of that variable at time $t$; e.g., $v_p^t$ is the value
of node $p$'s local variable $v$ at time $t$.

\sapta{
In the analysis, we frequently compare the data in
nodes' $\mbox{\it Server\_Changes}$ sets to the set of {\sc Enter}, {\sc Joined}, and {\sc Leave}
events that have actually occurred.
To facilitate this comparison, we define a set {\it SysInfo}$^I$ that contains
perfect information about correct servers for the time interval $I$.
For each server $q$, let $t_q^e$,  and $t_q^\ell$ be the times
when the events {\sc Enter}$_q$  and {\sc Leave}$_q$
occur, and let $t_q^j$ be the time when server $q$ sends out a joined message.
 Similarly, for each client  $q$, let $t_q^e$, $t_q^j$, and $t_q^\ell$ be the times
when the events {\sc Enter}$_q$, {\sc Joined}$_q$, and {\sc Leave}$_q$
occur, respectively.

Recall that $S_0$ is the set of servers that were in the system initially. If $q \in S_0$, then we set $t_q^e = t_q^j = 0$.
Then we have:
\begin{multline*}
\mbox{\it SysInfo}^I =
    \{ enter(q) ~|~t_q^e \in I\} 
    \cup \{ join(q) ~|~t_q^j \in I\} \cup \\ \{ leave(q) ~|~t_q^\ell \in I\} . 
\end{multline*}
In particular,
\[
\mbox{\it SysInfo}^{[0,0]} = \{ enter(q) ~|~q \in S_0\}
                        \cup  \{ join(q) ~|~q \in S_0\} .
\]
}

Since a client or correct server $p$ that is active throughout $[t_p^e,t+D]$ directly
receives all enter, joined, and leave messages broadcast by active clients or correct servers during $[t_p^e,t]$,
within $D$ time, we have:

\begin{observation}
\label{obs:S0}
For every client and any  correct server $p$ and all times $t \geq t_p^e$, if $p$ is active at time $t + D$, then
\knows{p}{t+D}{[t_p^e, t]}.
\end{observation}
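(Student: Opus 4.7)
The plan is to prove the observation by a case analysis over the three kinds of entries that can appear in $\mbox{\it SysInfo}^{[t_p^e,t]}$. Fix a correct server $q$ and suppose one of $enter(q)$, $join(q)$, or $leave(q)$ lies in $\mbox{\it SysInfo}^{[t_p^e,t]}$, corresponding to an event occurring at some time $t' \in [t_p^e,t]$. I will argue that the entry is in $\mbox{\it Server\_Changes}_p^{t+D}$, and then the claim follows because $\mbox{\it SysInfo}^{[t_p^e,t]}$ is the union over all such $(q,\text{event})$ pairs and $\mbox{\it Server\_Changes}_p$ is a monotonically growing set (entries are only ever added via set union, never removed).

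For the dynamic subcase ($t' > 0$, so $q \notin S_0$), I would unfold the code in Algorithm~\ref{algo:Server} at the moment the event occurs on $q$. In all three cases ({\sc Enter}$_q$, the join completion inside JoinProtocol, and {\sc Leave}$_q$), the handler both adds the relevant entry to $\mbox{\it Server\_Changes}_q$ and simultaneously issues an s-bcast of the corresponding ``enter''/``joined''/``leave'' message \emph{and} a c-bcast of ``server-info'' carrying the updated $\mbox{\it Server\_Changes}_q$. Since $p$ is active at $t+D$ and entered at $t_p^e \leq t' \leq t$, and since a node that has left or crashed cannot be active again, $p$ is continuously active throughout $[t',t'+D] \subseteq [t_p^e,t+D]$. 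Assumption~A3 then guarantees that $p$ receives within $D$ time the s-bcast (if $p$ is a server) or the c-bcast ``server-info'' (if $p$ is a client). The corresponding receipt handler performs a union of $\mbox{\it Server\_Changes}_p$ with either $\{enter(q)\}$, $\{enter(q), join(q)\}$, $\{leave(q)\}$, or the received $\mbox{\it Server\_Changes}_q$ superset, each of which contains the required entry; by time $t'+D \leq t+D$ the entry is in $\mbox{\it Server\_Changes}_p$.

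The boundary subcase $t' = 0$ only arises for $enter$ or $join$ of $q \in S_0$. Here $t' \in [t_p^e,t]$ together with $t_p^e \geq 0$ forces $t_p^e = 0$, so $p \in S_0$ as well, and by the initialization clause given in the local-variable declarations of Algorithms~\ref{algo:Server} and~\ref{algo:Client}, both $enter(q)$ and $join(q)$ are in $\mbox{\it Server\_Changes}_p^0$ and persist to $t+D$.

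I expect no deep obstacle: the main care is bookkeeping around what ``active throughout'' really needs (so as to invoke A3) and noting that a single active hypothesis at the endpoint $t+D$ is enough because leaves and crashes are terminal. A minor subtlety to flag in the writeup is that $t_q^j$ is defined as the time $q$ \emph{sends} its joined message, which is exactly the line in JoinProtocol where the s-bcast occurs, so A3 applies cleanly; and that even Byzantine behavior does not enter this argument because $q$ is assumed correct and $p$'s own handlers are the only updates that matter to $\mbox{\it Server\_Changes}_p$.
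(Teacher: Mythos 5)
Your argument is correct and matches the paper's: the paper justifies this observation with exactly the one-line remark that a $p$ active throughout $[t_p^e,t+D]$ directly receives, within $D$ time, every enter, joined, and leave announcement broadcast during $[t_p^e,t]$, and your case analysis simply spells this out. Your explicit treatment of the $t'=0$ initialization subcase and of why activity at the single endpoint $t+D$ yields activity throughout $[t_p^e,t+D]$ goes slightly beyond what the paper writes down, but it is the same argument.
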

Let $C_0$ be the set of clients that are in the system initially.
By assumption, for every node $p \in S_0  \cup C_0$,
$\mbox{\it SysInfo}^{[0,0]}  \subseteq \mbox{\it Server\_Changes}_p^0$,
and hence Observation~\ref{obs:S0} implies:

\begin{observation}
\label{obs:S1}
For every client and any correct server $p \in S_0 \cup C_0$, if $p$ is active at time $t \geq 0$, then
\knows{p}{t}{[0, \max\{0,t-D\}]}.
\end{observation}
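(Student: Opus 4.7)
The plan is to derive Observation~\ref{obs:S1} as a direct corollary of Observation~\ref{obs:S0}, exploiting the fact that for every $p \in S_0 \cup C_0$ the entry time satisfies $t_p^e = 0$, together with the initialization hypothesis $\mbox{\it SysInfo}^{[0,0]} \subseteq \mbox{\it Server\_Changes}_p^0$ that is stated just before the observation. I would split the argument into two cases according to whether $t \geq D$ or $0 \leq t < D$.

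In the case $t \geq D$, I would apply Observation~\ref{obs:S0} to $p$ with its time parameter set to $t - D \geq 0 = t_p^e$. Since $p$ is assumed active at time $t = (t-D) + D$, the observation yields $\mbox{\it SysInfo}^{[0,\,t-D]} \subseteq \mbox{\it Server\_Changes}_p^{t}$. Because $t - D = \max\{0,\,t-D\}$ in this regime, this is exactly the desired inclusion.

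In the case $0 \leq t < D$, we have $\max\{0,\,t-D\} = 0$, so the target inclusion reduces to $\mbox{\it SysInfo}^{[0,0]} \subseteq \mbox{\it Server\_Changes}_p^t$. The setup assumption gives $\mbox{\it SysInfo}^{[0,0]} \subseteq \mbox{\it Server\_Changes}_p^0$, so it suffices to note that $\mbox{\it Server\_Changes}_p$ is monotonically non-decreasing over time for every client and every correct server $p$: inspection of Algorithms~\ref{algo:Server} and~\ref{algo:Client} shows that this variable is modified only through unions of the form $\mbox{\it Server\_Changes} := \mbox{\it Server\_Changes} \cup \ldots$, never by removal. Hence $\mbox{\it Server\_Changes}_p^0 \subseteq \mbox{\it Server\_Changes}_p^t$, and the inclusion follows by transitivity.

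There is no real obstacle here; the observation is essentially a specialization of Observation~\ref{obs:S0} to the initial cohort $S_0 \cup C_0$. The only mild subtlety is the boundary case $t < D$, where Observation~\ref{obs:S0} does not apply directly (its time argument would be negative) and one must instead combine the initialization of $\mbox{\it Server\_Changes}_p$ with the monotonicity afforded by the code. For Byzantine servers, this monotonicity argument would not be guaranteed, but since the statement is restricted to clients and correct servers, the code-level inspection suffices.
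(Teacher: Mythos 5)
Your proposal is correct and follows essentially the same route as the paper, which derives Observation~\ref{obs:S1} immediately from the initialization assumption $\mbox{\it SysInfo}^{[0,0]} \subseteq \mbox{\it Server\_Changes}_p^0$ together with Observation~\ref{obs:S0} applied with $t_p^e = 0$. Your explicit case split at $t = D$ and the monotonicity remark about $\mbox{\it Server\_Changes}_p$ only make precise what the paper leaves implicit.
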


The purpose of Lemmas~\ref{lem:gen1} to~\ref{lem:knowswhenjoined} show that information about correct 
servers entering, joining, and leaving is propagated to active clients and correct servers properly, via the
{\it Server\_Changes} sets.

\sapta{
\begin{lemma}
\label{lem:gen1}
Suppose that, at time $T''$, a client or correct server $p \notin S_0 \cup C_0$ receives
an enter-echo message from a correct server $q$ sent at time $T'$
in reply to an enter message from $p$.
Let $T$ be any time such that $\max\{0,T''-2D\} \le T \le t_p^e$.
Suppose $p$ is active at time $T + 2D$ and
$q$ is active throughout $[U,T+D]$, where $U \le \max\{0,T''-2D\}$.
Then {\it SysInfo}$^{(U,T]} \subseteq$ {\it Server\_Changes}$_p^{T+2D}$.
\end{lemma}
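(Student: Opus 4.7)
The plan is to trace how information about any single event $\xi\in \mbox{\it SysInfo}^{(U,T]}$ reaches $p$ by time $T+2D$, exploiting the fact that $q$ is a correct server active throughout $[U,T+D]$ and that $q$ sends $p$ a direct reply (the enter-echo) at time $T'$. Since $\xi$ can be an $enter$, $join$, or $leave$ event of some server $r$, the message that carries it (``enter'', ``joined'', or ``leave'') is s-bcast at the time $t_\xi\in (U,T]$. Because $[t_\xi,t_\xi+D]\subseteq (U,T+D]$ and $q$ is active throughout this subinterval, Assumption A3 guarantees $q$ receives this message at some time $t_q\in[t_\xi,t_\xi+D]$, at which point $q$ executes the corresponding handler and adds $\xi$ to $\mbox{\it Server\_Changes}_q$.

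Next, I would split on whether $q$'s reception of $\xi$ happens before or after $q$ ships the enter-echo to $p$.  In the first case, $t_q\le T'$, so $\xi\in\mbox{\it Server\_Changes}_q^{T'}$ and is carried inside the enter-echo (or enter-client-echo) message that $p$ receives at $T''$; the handler in Algorithm~\ref{algo:Common}/Algorithm~\ref{algo:Client} unions this set into $\mbox{\it Server\_Changes}_p$, and since $T''\le T+2D$ (from $T\ge \max\{0,T''-2D\}$) we are done.  In the second case, $t_q>T'$, so the original enter-echo is too early to carry $\xi$; however, when $q$ handles the ``enter''/``joined''/``leave'' message at $t_q$, it also s-bcasts a matching echo (enter-echo, joined-echo, or leave-echo) and c-bcasts a ``server-info'' message, both containing the updated $\mbox{\it Server\_Changes}_q$.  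These secondary broadcasts are sent no later than $t_q\le t_\xi+D\le T+D$, so $p$ receives the relevant one no later than $t_q+D\le T+2D$.

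The only nontrivial obligation in Case~2 is verifying that $p$ is actually active throughout $[t_q,t_q+D]$ so that A3 applies.  For this I would use two observations.  First, $q$'s enter-echo is a reply to $p$'s enter message sent at $t_p^e$, so $T'\ge t_p^e$, hence $t_q>T'\ge t_p^e$ places $t_q$ strictly after $p$'s entry.  Second, $p$ is active at $T+2D\ge t_q+D$ and neither clients nor correct servers can re-enter, so $p$ must have been active throughout $[t_p^e,T+2D]\supseteq [t_q,t_q+D]$.  A symmetric argument covers the two flavors of $p$: a server $p$ receives the s-bcast echo, while a client $p$ receives the c-bcast ``server-info''; in both cases the handler unions the carried set into $\mbox{\it Server\_Changes}_p$.

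The main obstacle I anticipate is precisely this Case~2 argument: the naive route (``the enter-echo to $p$ contains $\xi$'') is not strong enough because $\xi$ can occur so close to $T$ that $q$ only learns about it after having already answered $p$.  The slack of $2D$ (rather than $D$) in the conclusion is what makes the indirect route work, and the key inequalities to verify are $T''\le T+2D$ and $t_q+D\le T+2D$, which both rely on the hypothesis $T\ge \max\{0,T''-2D\}$.  A secondary, purely bookkeeping nuisance is matching up the right echo message for each event type (enter, join, leave) across both the server and client versions of $p$; this is handled uniformly since in every case the sender ($q$) attaches its updated $\mbox{\it Server\_Changes}$ set.
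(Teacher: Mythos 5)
Your proposal is correct and follows essentially the same route as the paper's proof: the same case split on whether $q$ learns of the event before or after sending its enter-echo at $T'$ (your $t_q\le T'$ versus $t_q>T'$ is the paper's $v\le T'$ versus $v>T'$), with the direct enter-echo carrying the information in the first case and the secondary echo/server-info broadcast, received by $p$ within the window $[t_p^e,T+2D]$, covering the second. Your extra bookkeeping about $p$'s activity interval and the client-versus-server flavors of the echo only makes explicit what the paper leaves implicit.
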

}

\begin{proof}
The proof is adapted from~\cite{AttiyaCEKW2018} to include Byzantine servers. 
Consider any node $r$ that enters, joins, or leaves at time $\hat{t} \in
(U,T]$.
Note that $q$ directly receives this event's announcement,
since $q$ is active throughout $(U,T+D]$, which contains $[\hat{t},\hat{t}+D]$,
the interval during which the announcement message is in transit.
There are two cases, depending on the time, $v$, at which $q$ receives this message.
\begin{itemize}
\item[Case 1:]
$v \leq T'$.
Since $q$ receives the enter message from $p$ at $T'$,
information about this change to $r$ is in
{\em Server\_Changes}$_q^{T'}$, in the enter-echo message
that $q$ sends to $p$ at time $T'$.
Thus, this information is in
{\em Server\_Changes}$_p^{T''} \subseteq$ {\em Server\_Changes}$_p^{T+2D}$.
\item[Case 2:]
$v > T'$.
Messages are not received before they
are sent, so $T' \ge t_p^e$.
Since $v \le \hat{t} + D$, it follows that $v + D \le \hat{t} + 2D \leq T+2D$.
Thus $[v,v+D]$ is contained in $[t_p^e,T+2D]$.
Immediately after receiving the announcement about $r$,
server $q$ broadcasts an echo message in reply. 
Since $p$ is active throughout  this interval,
it directly receives this echo message.
\end{itemize}
In both cases,
the information about $r$'s change reaches $p$ by time $T + 2D$.
It follows that {\em SysInfo}$^{(U,T]} \subseteq$ {\em Server\_Changes}$_p^{T+2D}$.
\end{proof}

\sapta{
\begin{lemma}
\label{lem:gen2}
For every client and any correct server $p$, if $p$ is active at time $t \geq t_p^e + 2D$,  then \knows{p}{t}{[0,t-D]}.
\end{lemma}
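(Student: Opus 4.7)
I will prove the lemma by strong induction on the entry time $t_p^e$ of $p$. The target is to show that by time $t \ge t_p^e + 2D$ node $p$'s local $\mbox{\it Server\_Changes}$ set has absorbed every enter/join/leave event that occurred in $[0, t-D]$.

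\textbf{Base case.} If $p \in S_0 \cup C_0$, so $t_p^e = 0$, Observation~\ref{obs:S1} directly yields \knows{p}{t}{[0, \max\{0,t-D\}]}; and since $t \ge 2D > D$ we have $\max\{0,t-D\}=t-D$, finishing this case.

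\textbf{Inductive step.} Assume the lemma holds for every client and correct server whose entry time is strictly less than $t_p^e$. Split the target interval $[0,t-D]$ into the ``post-entry'' piece $[t_p^e, t-D]$ and the ``pre-entry'' piece $[0, t_p^e]$. The post-entry piece is handled directly by Observation~\ref{obs:S0}: since $p$ is active at $t$ and $t-D \ge t_p^e$, taking the observation with parameter $t-D$ in place of $t$ gives \knows{p}{t}{[t_p^e, t-D]}. For the pre-entry piece, Lemma~\ref{lem:gen0} supplies a correct server $q$ active throughout $[t_p^e - 2D, t_p^e + D]$; in particular $t_q^e \le t_p^e - 2D$. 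Server $q$ receives $p$'s enter message at some time in $[t_p^e, t_p^e+D]$ and sends the matching enter-echo at time $T' \in [t_p^e, t_p^e+D]$, which $p$ receives by time $T'' \le T'+D \le t_p^e+2D \le t$. Since $T' \ge t_p^e \ge t_q^e + 2D$, the inductive hypothesis (or Observation~\ref{obs:S1} when $q \in S_0$) applies to $q$ at time $T'$, giving $\mbox{\it SysInfo}^{[0, T'-D]} \subseteq \mbox{\it Server\_Changes}_q^{T'}$, which contains $\mbox{\it SysInfo}^{[0, t_p^e - D]}$. This set travels in the enter-echo and is merged into $\mbox{\it Server\_Changes}_p$ by time $T'' \le t$. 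For the remaining sliver $(t_p^e-D, t_p^e]$, I apply Lemma~\ref{lem:gen1} to this same $q$, with $U = t_p^e - 2D$ and $T = t_p^e$; the required hypotheses ($T \le t_p^e$, $T \ge \max\{0,T''-2D\}$, $p$ active at $T+2D$, and $q$ active throughout $[U, T+D]$) all hold, yielding $\mbox{\it SysInfo}^{(t_p^e - 2D, t_p^e]} \subseteq \mbox{\it Server\_Changes}_p^{T+2D} \subseteq \mbox{\it Server\_Changes}_p^{t}$. Unioning the two pieces covers $[0, t_p^e]$, and combining with the post-entry piece completes the inductive step.

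\textbf{Main obstacle.} The only real difficulty is closing the pre-entry gap, because there is no a priori guarantee of a single correct server that has been present since time $0$ and lives long enough to echo $p$'s enter message. The key move is to use Lemma~\ref{lem:gen0} to produce a correct server $q$ that has been alive for at least $2D$ units before $t_p^e$ (and will remain alive for $D$ afterward), so that the inductive hypothesis becomes applicable to $q$ at the moment it replies to $p$, and Lemma~\ref{lem:gen1} bridges the short final window $(t_p^e - D, t_p^e]$ that may not yet be in $q$'s $\mbox{\it Server\_Changes}$ at the send time $T'$.
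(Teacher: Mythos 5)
Your proof is correct and follows essentially the same route as the paper's: induction over entry order, Observation~\ref{obs:S1} for the base case, Lemma~\ref{lem:gen0} to obtain a long-lived correct echoing server $q$, the induction hypothesis (or Observation~\ref{obs:S1}) for $q$'s knowledge of $[0,t'-D]$, Lemma~\ref{lem:gen1} for the recent window around $t_p^e$, and Observation~\ref{obs:S0} for events after $p$ enters. The only differences are cosmetic (your choice of $U=t_p^e-2D$ in the Lemma~\ref{lem:gen1} application, and the omission of the $\max\{0,\cdot\}$ guard on $[t_p^e-2D,t_p^e+D]$, which is harmless since you handle $q\in S_0$ separately).
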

}

\begin{proof}
The proof is adapted from~\cite{AttiyaCEKW2018} to include Byzantine servers. 
The proof is by induction on the order in which nodes enter the system.
If $p \in S_0\cup C_0$, then $t_p^e = 0$, so \knows{p}{t}{[0,t-D]} follows from Observation~\ref{obs:S1}.

Now consider any node $p \not\in S_0\cup C_0$ and suppose that the claim holds for all nodes that
enter earlier than $p$. Suppose $p$ is active at  time $t \geq t_p^e+ 2D$.
By Lemma~\ref{lem:gen0}, there is at least \sapta{$f+1$ servers (let $q$ be one of these)} that are active throughout $[\max\{0,t_p^e -2D\},t_p^e +D]$.
Server $q$ receives an enter message from $p$ at some time $t' \in [t_p^e, t_p^e +D]$
and sends an enter-echo message back to $p$.
This message is received by $p$ at some time $t'' \in [t',t'+D]$.

If $q \in S_0$, then
\knows{q}{t'}{[0,\max\{0,t'-D\}]}, by Observation~\ref{obs:S1}.
If $q \not\in S_0$, then $0 < t_q^e \leq \max\{0,t_p^e -2D\}$, so
$t_q^e \leq  t_p^e - 2D$. Therefore
$t_q^e + 2D \leq t_p^e \leq t'$.
Since $q$ entered earlier than $p$, it follows from the induction hypothesis
that  \knows{q}{t'}{[0,t'-D]}.
Thus, in both cases, \knows{q}{t'}{[0,\max\{0,t'-D\}]}.
At time $t''\leq t$, $p$ receives the enter-echo message from $q$, so
\knows{p}{t''}{[0,\max\{0,t'-D\}]} $\subseteq \mbox{\it Server\_Changes}_p^{t}$.

Applying Lemma~\ref{lem:gen1} for $q$, with
$U = \max\{0,t_p^e -D\}$, $T = t_p^e$, $T' =t'$ and $T'' = t''$ implies
\[
\mbox{\it SysInfo}^{(\max\{0,t'-D\},t_p^e]} \subseteq \mbox{\it Server\_Changes}_p^{t_p^e + 2D} .
\]
Since $t \geq t_p^e+2D$, $\mbox{\it Server\_Changes}_p^{t_p^e+2D}$ is a subset of $\mbox{\it Server\_Changes}_p^{t}$.
Observation~\ref{obs:S0} implies \knows{p}{t}{[t_p^e,t-D]}.
Hence,  \knows{p}{t}{[0,t-D]}.
\end{proof}

\sapta{
\begin{lemma}
\label{lem:knowswhenjoined}
For every client and any correct server $p \not\in S_0\cup C_0$, if $p$ joins at time $t_p^j$ and
is active at time $t \geq  t_p^j$, then \knows{p}{t}{[0,\max\{0,t-2D\}]}.
\end{lemma}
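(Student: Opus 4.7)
The plan is to prove this lemma by strong induction on the order in which nodes enter the system, mirroring the structure used for Lemma~\ref{lem:gen2}. Since the claim excludes $p \in S_0 \cup C_0$, I only need to treat a correct node $p$ entering after time $0$, assuming the lemma already holds for every correct node that entered earlier. I will exploit the key algorithmic invariant that JoinProtocol terminates at $t_p^j$ only after $p$ receives more than $f$ enter-echo (or enter-client-echo) messages from servers marked as joined. Since at most $f$ servers are Byzantine, at least one such echo is authentic: it comes from a correct joined server $q$. Let $t_{\text{send}}$ be the time $q$ sent this echo in reply to $p$'s direct enter broadcast, and let $t_{\text{recv}} \le t_p^j \le t$ be its receipt time, so $t_{\text{send}} \in (t_p^e, t_p^e + D]$. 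The echo carries $Server\_Changes_q^{t_{\text{send}}}$, which $p$ unions into its own $Server\_Changes$.

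I would then split the argument based on whether $t \ge t_p^e + 2D$. In that case, Lemma~\ref{lem:gen2} applies directly and yields $Server\_Changes_p^t \supseteq SysInfo^{[0,t-D]} \supseteq SysInfo^{[0,t-2D]}$, finishing the case. Otherwise, $t_p^j \le t < t_p^e + 2D$, so $\max\{0,t-2D\} < t_p^e$; every event to be covered occurred before $p$ entered, and Observation~\ref{obs:S0} alone does not suffice.

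For this harder case, I would apply the inductive hypothesis to $q$ at time $t_{\text{send}}$ (or Observation~\ref{obs:S1} if $q \in S_0$), obtaining $Server\_Changes_q^{t_{\text{send}}} \supseteq SysInfo^{[0,\max\{0,t_{\text{send}}-2D\}]}$; this content propagates to $p$ through the echo, giving $Server\_Changes_p^t \supseteq SysInfo^{[0,\max\{0,t_{\text{send}}-2D\}]}$. To extend coverage up to $t-2D$, I would invoke Lemma~\ref{lem:gen1} with $T = t-2D$ (which meets the constraint $T \le t_p^e$ since we are in the second case), $T' = t_{\text{send}}$, $T'' = t_{\text{recv}}$, and a choice of $U$ dictated by how long $q$ has been active: $U=0$ if $q \in S_0$, or $U = t_q^e$ otherwise. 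The goal is to stitch the interval $(U,t-2D]$ from Lemma~\ref{lem:gen1} together with the interval $[0,\max\{0,t_{\text{send}}-2D\}]$ from the induction, and verify that the union covers $[0,t-2D]$.

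The main obstacle, and the place where the bookkeeping is most delicate, is ensuring that the two coverage intervals abut without gap and that the hypotheses of Lemma~\ref{lem:gen1} are actually met by $q$. Specifically, when $q \notin S_0$ the natural choice $U = t_q^e$ is constrained by $U \le \max\{0, t_{\text{recv}}-2D\}$, which can fail when $q$ entered very close to $p$; and even when $q \in S_0$, taking $U=0$ requires that $q$ remains active all the way through $t-D$. I would handle these edge cases by combining the inductive hypothesis applied to $q$'s own join (which already guarantees $q$ knows $SysInfo^{[0,\max\{0,t_{\text{send}}-2D\}]}$, covering events earlier than $t_q^e$) with Lemma~\ref{lem:gen0}, which provides at least $f+1$ correct servers active at every moment, so that if the particular $q$ does not meet Lemma~\ref{lem:gen1}'s activity requirement, another correct joined server among the $>f$ whose echoes reached $p$ can be substituted. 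The precise algebra will use $t_{\text{send}} \ge t_p^e$ together with the churn bound~(\ref{constraint:A }) and the minimum-size bound~(\ref{constraint:B }) to guarantee that such a substitute always exists.
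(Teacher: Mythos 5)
Your proposal has the right skeleton --- the easy case via Lemma~\ref{lem:gen2}, the identification of a correct joined server $q$ among the more than $f$ echo senders, the application of the induction hypothesis to $q$ to cover $[0,\max\{0,t'-2D\}]$, and the use of Lemma~\ref{lem:gen1} to cover the remaining recent interval --- and you correctly locate the difficulty: the particular $q$ whose echo arrives first need not satisfy the activity hypothesis of Lemma~\ref{lem:gen1}. But the mechanism you offer to close that gap does not work, and the step you wave at (``the precise algebra'') is exactly the non-trivial content of the paper's proof. Lemma~\ref{lem:gen0} together with the churn and minimum-size bounds only guarantees that $f+1$ correct servers \emph{exist} that are active throughout $[\max\{0,t'-2D\},t'+D]$; it does not guarantee that $p$ \emph{receives echoes from any of them before it joins}, since $p$ joins as soon as $enter\_echo\_counter$ reaches $join\_bound$, and in principle those $join\_bound$ echoes could all come from recently entered or Byzantine servers. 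The paper closes this with a counting argument that is the real reason the lower bound on $\gamma$ (Constraint~(3)) exists: since $join\_bound = \gamma\,|\mbox{\it Present}_p^{t''}| \ge \gamma(1-\alpha)^3|S|$ where $S$ is the set of servers present at $\max\{0,t'-2D\}$, subtracting the at most $((1+\alpha)^3-1)|S|$ enters, $(1-(1-\alpha)^3)|S|$ leaves, and $f$ faulty servers leaves at least $[(1+\gamma)(1-\alpha)^3-(1+\alpha)^3]NS_{min}-f \ge f+1$ echoes that must have come from servers active throughout the whole window; at least one such sender $q'$ is correct, and Lemma~\ref{lem:gen1} is applied to $q'$ with $U=\max\{0,t'-2D\}$ and $T=t-2D$. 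Your proposal never invokes the $\gamma$ lower bound, so the ``substitute server'' you need is not shown to be among the echoes $p$ actually waits for.

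A secondary issue: you induct on \emph{entry} order, but the induction hypothesis must be applied to $q$, the sender of the first correct joined echo, and nothing guarantees $q$ entered before $p$ --- only that $q$ \emph{joined} before $p$ (it sent its echo with $is\_joined = true$, and $p$ joins only after receiving it). The paper therefore inducts on join order; with entry order your appeal to the inductive hypothesis for $q$ is not justified. Neither of these is a stylistic quibble: the first is the heart of the lemma and the second breaks the induction as stated.
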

}

\begin{proof}
The proof is by induction on the order in which clients and correct servers join the system.
Let $p \not\in S_0\cup C_0$ be a client or correct server that joins at time $t_p^j \leq t$
and suppose the claim holds for all clients and correct servers that join before $p$.
If $t \geq t_p^e +2D$, then the claim follows by Lemma~\ref{lem:gen2}.
So, suppose $t < t_p^e +2D$.

Before joining, $p$ receives $f+1$ enter-echo message from  joined servers in reply to its enter message (Line number~\ref{line: f+1 non faulty}). Out of these, at most $f$ can be from Byzantine servers. Thus, at least one reply is from a correct server. 
Suppose $p$  receives the first  enter-echo message at time $t''$  
sent by correct server $q$ at time $t'$; $t_p^e \leq t' \leq t'' \leq t_p^j$.
From Lemma~\ref{lem:gen2}, we know that this message from  $q$ has a perfect information about the $\mbox{\it Server\_Changes}^{t'-2D}$ set. This in turn means that it has perfect information about the derived set {\em Present}$^{t'-2D}$. Byzantine servers can only modify the information about the $\mbox{\it Server\_Changes}$ set by sending a subset of its $\mbox{\it Server\_Changes}$ set. So, when node $p$ receives at least one reply is from a correct server, the incomplete information sent by Byzantine servers is overshadowed by this one reply from $q$ and thus $p$ has a perfect information about  $\mbox{\it Present}^{t'-2D}$. 

If correct server $q \in S_0$, then by Observation~\ref{obs:S1},
\knows{q}{t'}{[0,\max\{0,t'-D\}]}.
Otherwise, by the induction hypothesis, \knows{q}{t'}{[0,\max\{0,t'-2D\}]},
since $q$ joined prior to $p$ and is active at time $t' \geq t_q^j$.
Note that  $\mbox{\it Server\_Changes}_q^{t'}$
 $\subseteq \mbox{\it Server\_Changes}_p^{t''}$
 $\subseteq \mbox{\it Server\_Changes}_p^{t}$.
If $t \leq 2D$, then $\max\{0,t-2D\} = 0$ and the claim holds.

If $t > 2D$, then let $S$ be the set of servers present at time $\max\{0,t'-2D\}$;
$|S|= NS(\max\{0,t'-2D\})$.
By Lemma~\ref{lem:left2} and Constraint~(\ref{constraint:A }),
at most $(1-(1-\alpha)^3)|S|$ servers leave during $(\max\{0,t'-2D\},t'+D]$.
Since $t'' \leq t'+D$, it follows that
$|\mbox{\it Present}_p^{t''}| \geq |S| - (1-(1-\alpha)^3)|S| = (1-\alpha)^3|S|$.
Hence, from lines~\ref{line:calculate join bound}
and~\ref{line:check if enough enter echoes}
of Algorithm~\ref{algo:Common}, $p$ waits until it has received at least $join\_bound = \gamma \cdot |\mbox{\it Present}_p^{t''}|  \geq \gamma \cdot (1-\alpha)^3|S| $ enter-echo messages before joining.

By Lemma~\ref{lem:size},
at most $((1+\alpha)^3-1)|S|$ servers enter during $(\max\{0,t'-2D\},t'+D]$.
Thus, at time $t'+D$, at most $(1+\alpha)^3|S|$ servers are present, at most $f$ of which are Byzantine.

Hence, the number of enter-echo messages $p$ receives before joining from servers that were active throughout $[\max\{0,t'-2D\},t'+D]$ is
$join\_bound$ minus the total number of server enters, leaves and faults (as Byzantine servers may not reply at all),
which is at least

\begin{multline}
\gamma \cdot (1-\alpha)^3 |S| - 
    [((1+\alpha)^3 -1)|S| + (1-(1-\alpha)^3)|S|  +f] \\
= [(1+\gamma) (1-\alpha)^3 - (1+\alpha)^3]|S| -f \\
\geq  [(1+\gamma) (1-\alpha)^3 - (1+\alpha)^3]NS_{min} -f \label{Lemma9equation}
\end{multline}

Rearranging Constraint~(\ref{constraint:C }), we get
\[
[(1+\gamma)(1- \alpha)^3 -(1+\alpha)^3]NS_{min}-f \geq f+1 ,
\]
so expression~(\ref{Lemma9equation}) is at least $f+1$.
Hence $p$ receives an enter-echo message at some time $T'' \leq t_p^j$
from a correct server $q'$ that is active throughout
\[
[\max\{0,t'-2D\},t'+D] \supseteq [\max\{0,t'-2D\},t-D] .
\]
Let $T'$ be the time that $q'$ sent its enter-echo message in reply 
 to
the enter message from $p$. 
Applying Lemma~\ref{lem:gen1} for $q'$, with
$U = \max\{0,t' -2D\}$, and $T = t-2D$
gives {\em SysInfo}$^{(\max\{0,t'-2D\},t-2D]} \subseteq$ {\it Server\_Changes}$_p^t$.

Thus, we get {\em SysInfo}$^{[0,t-2D]} =$ {\em SysInfo}$^{[0,\max\{0,t'-2D\}]} \cup$ {\em SysInfo}$^{(\max\{0,t'-2D\},t-2D]}
\subseteq$ {\it Server\_Changes}$_p^t$.
\end{proof}

Lemmas~\ref{lem:size} through~\ref{lem:knowswhenjoined}   
are used to prove Theorem~\ref{thm:joins} as follows: 

\begin{theorem} \label{thm:joins}
Every client and any correct server $p \not\in S_0\cup C_0$ that
is active for at least $2D$ time
     after it enters succeeds in joining. 
\end{theorem}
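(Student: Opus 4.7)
The plan is to argue by strong induction on the order in which servers enter the system; the argument for clients runs in parallel by substituting enter-client and enter-client-echo messages, since in both cases the responders are servers whose behavior does not depend on whether $p$ is a server or a client. The base case is that every node in $S_0$ is joined at time $0$ by initialization. For the inductive step, fix a correct server (or client) $p \notin S_0 \cup C_0$ that enters at time $t_p^e$ and stays active throughout $[t_p^e, t_p^e + 2D]$. We need to verify that both gating conditions of the JoinProtocol in Algorithm~\ref{algo:Common} (the assignment to $join\_bound$ on Line~\ref{line:calculate join bound}, and then $enter\_echo\_counter \ge join\_bound$ on Line~\ref{line:check if enough enter echoes}) are satisfied by time $t_p^e + 2D$.

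First I would establish that $join\_bound$ is set. Apply Lemma~\ref{lem:gen0} at time $t_p^e$ to obtain a set $Q$ of at least $f+1$ correct servers that are active throughout $[\max\{0,t_p^e-2D\}, t_p^e + D]$. Each $q \in Q$ entered at some time $t_q^e \le \max\{0,t_p^e - 2D\}$, so either $q \in S_0$ (joined at time $0$) or, by the induction hypothesis, $q$ completed its own join by $t_q^e + 2D \le t_p^e$. Thus every $q \in Q$ is already joined when it receives $p$'s ``enter'' s-bcast (which arrives within $D$ time) and responds with an enter-echo tagged $j = true$ that $p$ receives within another $D$ time. Consequently $p$'s $enter\_echo\_from\_joined\_counter$ exceeds $f$ by $t_p^e + 2D$, triggering the assignment $join\_bound = \gamma \cdot |Present_p|$.

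Next I would count enter-echo messages to verify $enter\_echo\_counter \ge join\_bound$. Let $|S| = NS(\max\{0,t_p^e-2D\}) \ge NS_{min}$. By Lemma~\ref{lem:size}, at most $(1+\alpha)^3 |S|$ servers are present at any instant in $[\max\{0,t_p^e-2D\}, t_p^e+D]$, which upper-bounds $|Present_p|$ at the moment $join\_bound$ is evaluated. On the other hand, Lemma~\ref{lem:left2} together with the bound of $f$ Byzantine servers yields at least $(1-\alpha)^3 |S| - f$ correct servers active throughout $[\max\{0,t_p^e-2D\}, t_p^e + D]$; by the induction hypothesis each of these is joined by $t_p^e$ and so contributes a distinct enter-echo reaching $p$ by $t_p^e + 2D$. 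It then suffices to verify
\[
(1-\alpha)^3 |S| - f \;\ge\; \gamma \cdot (1+\alpha)^3 |S|,
\]
which, dividing by $|S|$ and using $|S| \ge NS_{min}$, is exactly Constraint~(\ref{constraint:D }).

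The main obstacle will be pinning down the upper bound on $|Present_p|$ at the precise instant $join\_bound$ is computed: a Byzantine server may inflate this set by forwarding partial or reordered $Server\_Changes$, or by delaying the propagation of leave events. Here I would invoke the digital-signature discipline of Section~\ref{section:algorithm}: every $enter(r)$ token in $Server\_Changes_p$ is signed by $r$, so $Present_p$ contains only nodes that actually entered during $[0, t_p^e + D]$, and the Lemma~\ref{lem:size} bound applies. Once both inequalities are in hand, Line~\ref{line:check if enough enter echoes} fires, $p$ executes Line~\ref{line: server joined} (or the client analogue), and the join completes by time $t_p^e + 2D$, closing the induction.
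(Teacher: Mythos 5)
Your overall strategy is the same as the paper's: induct on entry order, use Lemma~\ref{lem:gen0} plus the induction hypothesis to obtain $f+1$ correct joined responders (so $join\_bound$ gets set), then show that the number of enter-echoes received, at least $(1-\alpha)^3|S|-f$, dominates $join\_bound \le \gamma\cdot(1+\alpha)^3|S|$ via Constraint~(\ref{constraint:D }). There is, however, a quantitative gap in your upper bound on $|\mbox{\it Present}_p|$. You anchor $S$ at time $\max\{0,t_p^e-2D\}$ and count entries only over $[\max\{0,t_p^e-2D\},\,t_p^e+D]$, a window of length $3D$, asserting that this bounds $|\mbox{\it Present}_p|$ ``at the moment $join\_bound$ is evaluated.'' But that moment can be as late as $t_p^e+2D$: $p$'s enter message takes up to $D$ to reach the responders and their echoes take up to $D$ to return, so the $(f+1)$-st $j=\mathit{true}$ echo may arrive only near $t_p^e+2D$. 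By then $\mbox{\it Present}_p$ can legitimately contain correct servers that entered during $(t_p^e+D,\,t_p^e+2D]$ and whose signed enter announcements reached $p$ quickly---the signature discipline you invoke rules out fabricated entries, not real late ones. With your anchor the relevant window is therefore $4D$ long, and Lemma~\ref{lem:size} only gives $|\mbox{\it Present}_p|\le(1+\alpha)^4|S|$; since Constraint~(\ref{constraint:D }) is calibrated exactly to the exponent $3$, the needed comparison $(1-\alpha)^3|S|-f\ge\gamma\cdot(1+\alpha)^4|S|$ is not available. The paper sidesteps this by anchoring $S$ at $\max\{0,t'-2D\}$, where $t'$ is the \emph{send} time of the relevant enter-echo and $t''\le t'+D$ is its arrival, so the window $(\max\{0,t'-2D\},t'']$ has length at most $3D$ even though $t''$ itself may be as late as $t_p^e+2D$; the lower bound of $(1-\alpha)^3|S|-f$ received echoes still holds for that choice of $S$ because $\max\{0,t'-2D\}\le t_p^e$ and $t'+D\ge t_p^e+D$, so those servers are active throughout $[t_p^e,t_p^e+D]$ and all reply. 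Re-anchoring your $S$ in the same way repairs the argument; everything else in your proposal tracks the paper's proof.
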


\begin{proof}
The proof is by induction on the order in which clients and correct servers enter the system.
Let $p \not\in S_0\cup C_0$ be a client or correct server that enters at time $t_p^e$ and is active at time $t_p^e + 2D$.
Suppose the claim holds for all client and correct servers that enter before $p$.

By Lemma~\ref{lem:gen0}, there  are $f+1$ correct servers that are active throughout
$[\max\{t_p^e -2D,0\},t_p^e +D]$. Let $q$ be one such server.
If $q \in S_0$, then $q$ joins at time 0.
If not, then $t_q^e \leq t_p^e-2D$,
so, by the induction hypothesis, $q$ joins by time
$t_q^e + 2D \leq  t_p^e$.
Since $q$ is active at time  $t_p^e+D$,
it receives the enter message from $p$ during $[t_p^e, t_p^e+D]$
and sends an enter-echo message in reply. 
Since $p$ is active at time $t_p^e+2D$,
it receives the enter-echo message from $q$ by time $t_p^e+2D$.
Hence, by time $t_p^e+2D$, $p$ receives at least one enter-echo message
from a correct joined server in reply to its enter message.

Suppose the first enter-echo message $p$ receives from a correct joined server in reply 
to its enter message is sent by server $q'$ at time $t'$ and received by $p$ at time $t''$.
By Lemma~\ref{lem:knowswhenjoined}, \knows{q'}{t'}{[0,\max\{0,t'-2D\}]} $\subseteq \mbox{\it Server\_Changes}_p^{t''}$.

Let $S$ be the set of servers present at time $\max\{0,t'-2D\}$.
Since $t'' \leq t'+D$, it follows from Lemma~\ref{lem:size} that at most $( (1+\alpha)^3 -1)|S|$
servers enter during $(\max\{0,t'-2D\},t'']$. Thus, $|\mbox{\it Present}_p^{t''}| \leq|S| +( (1+\alpha)^3 -1)|S|  = (1+\alpha)^3 |S|$.
From line~~\ref{line:calculate join bound} in Algorithm~\ref{algo:Common}, it follows that
$join\_bound \leq \gamma \cdot (1 +\alpha)^3|S|$.

By Lemma~\ref{lem:left2} and Constraint~(\ref{constraint:A }),
at most $(1-(1-\alpha)^3)|S|$ servers leave during $(\max\{0,t'-2D\},t'+D]$.
At most $f$ servers are Byzantine  
at $t'+D$.
Since $t_p^e \leq t' \leq t_p^e +D$,
the servers in $S$ that do not leave during $(\max\{0,t'-2D\},t'+D]$
and are not Byzantine at $t'+D$ are active throughout $[t_p^e,t_p^e+D]$
and send enter-echo messages in reply
 to $p$'s enter message.
By time $t_p^e+2D$, $p$ receives all these enter-echo messages.
There are at least

\[|S| - (1- (1-\alpha)^3) |S|-f = (1-\alpha)^3 |S|-f\]
such enter-echo messages.
By Constraint~(\ref{constraint:D }),

\begin{align*}
\frac{(1-\alpha)^3}{(1+\alpha)^3}-\frac{f}{(1+\alpha)^3 NS_{min}} &\geq \gamma,
\end{align*}

so the value of  $join\_bound$ is at most

\begin{multline*}
\gamma \cdot (1 +\alpha)^3|S|
\leq \left( \frac{(1-\alpha)^3}{(1+\alpha)^3}-\frac{f}{(1+\alpha)^3 NS_{min}}\right) \\ \cdot (1 +\alpha)^3|S|
= (1-\alpha)^3 |S|-f.
\end{multline*}

Thus, by time $t_p^e+2D$,
the condition in line~\ref{line:check if enough enter echoes} of Algorithm~\ref{algo:Common}
holds and node $p$ joins.
\end{proof}

Next, we show that all read and write operations terminate.
Specifically, we show that the number of replies 
 for which an operation
waits is at most the number that it is guaranteed to receive.

Since $enter(q)$ is added to $\mbox{\it Server\_Changes}_p$  whenever $join(q)$
is, for server $q$, we get the following observation.

\begin{observation}
\label{obs:S2}
For every time $t \geq 0$ and  every client $p$ that is active at time $t$,
$\mbox{\it Members}_p^t \subseteq \mbox{\it Present}_p^t$.
\end{observation}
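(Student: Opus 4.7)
The observation follows almost directly from the definitions of $\mbox{\it Members}_p$ and $\mbox{\it Present}_p$ once we establish the invariant
\[
\text{for every client } p, \text{ every server } q, \text{ and every } t\ge 0:\quad
join(q)\in \mbox{\it Server\_Changes}_p^t \;\Longrightarrow\; enter(q)\in \mbox{\it Server\_Changes}_p^t .
\]
The plan is to prove this invariant by induction on the steps in $p$'s timed view and then conclude.

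In the base case, either $p\in C_0$, in which case $\mbox{\it Server\_Changes}_p^0 = \{enter(q) \mid q\in S_0\}\cup\{join(q)\mid q\in S_0\}$ and the invariant holds trivially, or $p\notin C_0$, in which case $\mbox{\it Server\_Changes}_p^0=\emptyset$. For the inductive step I would go through the client code (Algorithm~\ref{algo:Client} and the relevant parts of Algorithm~\ref{algo:Common}) and check every place where $\mbox{\it Server\_Changes}_p$ is updated. The only updates are of the form $\mbox{\it Server\_Changes}_p := \mbox{\it Server\_Changes}_p \cup C$ when $p$ receives an enter-client-echo or server-info message carrying some set $C$ drawn from another node's $\mbox{\it Server\_Changes}$. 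Therefore it suffices to argue that every such incoming $C$ itself satisfies the property ``$join(q)\in C \Rightarrow enter(q)\in C$''. By the restriction on Byzantine senders, any $C$ sent by a server is a subset of that server's own $\mbox{\it Server\_Changes}$ set; however, a plain subset could in principle drop $enter(q)$ while keeping $join(q)$, so I need to be slightly careful here. The key observation is that throughout the server code (Algorithm~\ref{algo:Server}) every single place that inserts $join(q)$ for some $q$ simultaneously inserts $enter(q)$ (this is immediate from inspecting the handlers for ``joined'' and ``joined-echo'' messages and from the fact that a server only inserts $join(p)$ for its own id after already having inserted $enter(p)$ at its {\sc Enter} event). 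By the same inductive argument applied to correct servers, every correct server's $\mbox{\it Server\_Changes}$ set satisfies the invariant at all times, and since $join(q)$ and $enter(q)$ are digitally signed tokens that Byzantine servers cannot forge, a Byzantine server's message containing $join(q)$ must have derived that token from a correct server's set and therefore can include $enter(q)$; the key observation for closure is simply that the union operation preserves the implication whenever both operands satisfy it.

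Once the invariant is established, the conclusion is immediate. Suppose $q \in \mbox{\it Members}_p^t$. By the definition of $\mbox{\it Members}_p$, we have $join(q)\in \mbox{\it Server\_Changes}_p^t$ and $leave(q)\notin \mbox{\it Server\_Changes}_p^t$. By the invariant, $enter(q)\in \mbox{\it Server\_Changes}_p^t$, so by the definition of $\mbox{\it Present}_p$, $q\in \mbox{\it Present}_p^t$. Hence $\mbox{\it Members}_p^t \subseteq \mbox{\it Present}_p^t$.

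The main obstacle is the subtlety introduced by Byzantine senders: one must rule out the scenario in which a Byzantine server ships a strict subset that retains $join(q)$ while dropping $enter(q)$. Handling this rests on the unforgeability of the signed $enter$/$join$ tokens together with the code-level fact that correct servers always insert $enter(q)$ alongside $join(q)$, so the propagation chain that produced a legitimate $join(q)$ token is always able (and in practice does) provide the matching $enter(q)$ token in the same message.
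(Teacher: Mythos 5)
Your approach is essentially the paper's: the paper's entire justification for this observation is the single remark that $enter(q)$ is added to $\mbox{\it Server\_Changes}_p$ whenever $join(q)$ is, which is exactly the invariant you formalize and prove by induction over the steps of $p$'s view. For correct senders your case analysis is complete — every handler that inserts $join(q)$ also inserts $enter(q)$, and a server inserts $join(p)$ for itself only after having inserted $enter(p)$ at its {\sc Enter} event — so that part of the argument is fine.

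The one place where you go beyond the paper, the possibility that a Byzantine server ships a subset $C$ of some legitimate $\mbox{\it Server\_Changes}$ set that retains $join(q)$ but drops $enter(q)$, is a real issue, and your resolution of it is not actually a proof: asserting that the sender ``is always able (and in practice does) provide the matching $enter(q)$ token'' does not follow from anything in the model. The stated restriction only forces a Byzantine server to send a subset of some entered node's $\mbox{\it Server\_Changes}$, and a subset containing $join(q)$ but not $enter(q)$ is permitted; a client that absorbs such a $C$ via a server-info or enter-client-echo message would then have $q \in \mbox{\it Members}_p^t \setminus \mbox{\it Present}_p^t$, contradicting the observation. To be fair, the paper's one-sentence justification silently ignores the same scenario, so you have surfaced a gap in the paper rather than introduced one; closing it cleanly requires either strengthening the model (e.g., treating $join(q)$ as a signed token that certifies and carries the corresponding $enter(q)$) or having the receiver, or the definition of $\mbox{\it Present}$, insert $enter(q)$ whenever $join(q)$ is present.
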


Lemma \ref{lem:present-2D} relates an active node's (client or correct server) current estimate of the
number of servers  present to the number of servers that were present in
the system $2D$ time units earlier.   Lemma \ref{lem:members-2D}
relates an active client's current estimate of the number of servers that are
members to the number of servers that were present in the system $4D$
time units earlier. The lower bounds stated in these lemmas had to take into consideration that Byzantine servers may enter the system and never send a message 
and yet affect the system size. This scenario is impossible in the case of crash failures.  

\begin{lemma}
\label{lem:present-2D}
For every node $p$ that is either a client or a correct server and for every time
$t\geq t_p^j$ at which $p$ is active,
\begin{multline*}
(1-\alpha)^2 \cdot N(\max\{0,t-2D\}) -f \le |\mbox{\it Present}_p^t| \\
\le (1+\alpha)^2 \cdot N(\max\{0,t-2D\}).
\end{multline*}
\end{lemma}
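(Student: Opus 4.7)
The plan is to first invoke Lemma~\ref{lem:knowswhenjoined} (and Observation~\ref{obs:S1} for the case $p \in S_0 \cup C_0$, where it gives an even stronger guarantee) to conclude that $\mbox{\it Server\_Changes}_p^t$ contains perfect information about every server enter, join, and leave event that occurred during $[0, t']$, where $t' := \max\{0, t - 2D\}$. Letting $S$ denote the set of servers actually present at time $t'$, so $|S| = N(t')$, the servers in $\mbox{\it Present}_p^t \cap S$ are precisely those from $S$ for which $p$ has not yet recorded a $leave$ entry, while any server in $\mbox{\it Present}_p^t \setminus S$ must have had its signed $enter$ message from the window $(t', t]$ reach $p$.

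For the upper bound, since digital signatures prevent a Byzantine server from forging an $enter(q)$ for any other server, $|\mbox{\it Present}_p^t \setminus S|$ is bounded by the number of actual server entries in $(t', t]$, which Lemma~\ref{lem:size} caps at $((1+\alpha)^2 - 1)\,N(t')$. Adding this to $|S| = N(t')$ gives the desired $(1+\alpha)^2\,N(t')$ bound, without needing to know exactly which entries/leaves in the $2D$ window $p$ has or has not heard about.

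For the lower bound I will count how many servers from $S$ can be missing from $\mbox{\it Present}_p^t$. Such a server $q$ must have $leave(q) \in \mbox{\it Server\_Changes}_p^t$, and by the perfect-information guarantee this entry was acquired during $(t', t]$. Either (i) $q$ actually left during $(t', t]$, in which case Lemma~\ref{lem:left2} (whose hypothesis $2 \le -1/\log_2(1-\alpha)$ follows from Constraint~(\ref{constraint:A })) caps the count by $(1 - (1-\alpha)^2)\,N(t')$; or (ii) $q$ is Byzantine and broadcast a signed leave for itself while remaining present, which can happen for at most $f$ servers because signatures prevent any Byzantine node from fabricating a $leave$ entry for a correct node. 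Subtracting both contributions from $|S| = N(t')$ yields the desired $(1-\alpha)^2\,N(t') - f$.

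The main obstacle, relative to the crash-only analysis of~\cite{AttiyaCEKW2018}, is this second source of loss on the lower bound: a Byzantine server can ``pretend'' to leave at any time and thereby vanish from every correct node's view, which is precisely what forces the additive $-f$ slack. Establishing that this is the \emph{only} new phenomenon — so that the slack is exactly $f$ and not worse — is the essential subtlety, and it rests on the signature mechanism combined with the restriction, spelled out in Section~\ref{section:algorithm}, that a Byzantine server cannot mutate the node-id field of a $\mbox{\it Server\_Changes}$ entry.
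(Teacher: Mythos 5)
Your proof is correct and follows essentially the same route as the paper's: perfect knowledge of server events up to $\max\{0,t-2D\}$ via Lemma~\ref{lem:knowswhenjoined} (resp.\ Observation~\ref{obs:S1} for nodes in $S_0\cup C_0$), then Lemma~\ref{lem:size} to bound the enters and Lemma~\ref{lem:left2} with Constraint~(\ref{constraint:A }) to bound the leaves in the final $2D$ window. If anything you are more careful than the paper on the lower bound: the paper's displayed calculation drops the $-f$ term that appears in the lemma statement, whereas you correctly isolate its source as the up to $f$ Byzantine servers that can sign and broadcast a spurious $leave$ for themselves while remaining present.
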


\begin{proof}
The proof is adapted from~\cite{AttiyaCEKW2018} to include $f$ Byzantine servers in the lower bound. 
By Lemma~\ref{lem:knowswhenjoined}, \knows{p}{t}{[0,\max\{0,t-2D\}]}.
Thus {\em Present}$_p^t$ contains all nodes that are present at
time $\max\{0,t-2D\}$,
plus any nodes that  enter in $(\max\{0,t-2D\},t]$ which $p$ has learned
about,
minus any nodes that  leave in $(\max\{0,t-2D\},t]$ which $p$ has learned about.
Then, by Lemma~\ref{lem:size},
\begin{multline*}
|Present_p^t| \leq \\
NS(\max\{0,t-2D\}) +( (1+\alpha)^2-1) \cdot NS(\max\{0,t-2D\}) \\
= (1+\alpha)^2 \cdot NS(\max\{0,t-2D\}).
\end{multline*}
Similarly, by Lemma~\ref{lem:left2} and Constraint~(\ref{constraint:A }),
\begin{multline*}
|Present_p^t| \geq \\
NS(\max\{0,t-2D\}) -(1- (1-\alpha)^2) \cdot NS(\max\{0,t-2D\}) \\
= (1-\alpha)^2 \cdot NS(\max\{0,t-2D\}) .
\end{multline*}
\end{proof}

\sapta{
\begin{lemma}
\label{lem:members-2D}
For every client $p$ and every time $t\geq t_p^j$ at which $p$ is active,
\begin{multline*}
(1-\alpha)^4 \cdot NS(\max\{0,t-4D\}) -f \le
|\mbox{\it Members}_p^t| \\
\le (1+\alpha)^4 \cdot NS(\max\{0,t-4D\}).
\end{multline*}
\end{lemma}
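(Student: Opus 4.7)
The plan is to adapt the argument of Lemma~\ref{lem:present-2D}, but using the interval $[\max\{0,t-4D\}, t]$ in place of $[\max\{0,t-2D\}, t]$. The extra $2D$ of slack accounts for the two-step pipeline needed before a newly entered correct server appears in $\mbox{\it Members}_p^t$: the server first needs up to $2D$ time after entering to complete its join protocol (Theorem~\ref{thm:joins}), and then its $join$ event needs up to another $2D$ time to propagate into $p$'s $\mbox{\it Server\_Changes}$ set via Lemma~\ref{lem:knowswhenjoined}.

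For the upper bound, I would chain Observation~\ref{obs:S2}, Lemma~\ref{lem:present-2D}, and a single application of Lemma~\ref{lem:size} with $i=2$ to obtain
\begin{equation*}
|\mbox{\it Members}_p^t| \leq |\mbox{\it Present}_p^t| \leq (1+\alpha)^2 NS(\max\{0,t-2D\}) \leq (1+\alpha)^4 NS(\max\{0,t-4D\}).
\end{equation*}

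For the lower bound, I would take $t \geq 4D$ (the case $t < 4D$ reduces to a direct count of initial members in $S_0$). Let $S$ be the set of servers present at time $t-4D$, so $|S| = NS(t-4D)$, and count correct servers in $S$ that remain present at time $t$. By Lemma~\ref{lem:left2}, applicable with $i=4$ by Constraint~(\ref{constraint:A }), at most $(1-(1-\alpha)^4)|S|$ servers of $S$ leave during $(t-4D,t]$, so at least $(1-\alpha)^4|S|$ remain present at $t$; subtracting the at-most-$f$ Byzantine servers leaves at least $(1-\alpha)^4 NS(t-4D) - f$ correct surviving servers. For any such $q$, either $q \in S_0$ with $t_q^j = 0$, or $t_q^e \leq t-4D$ and $q$ is active throughout $[t_q^e, t]$, so by Theorem~\ref{thm:joins} it joins by $t_q^j \leq t_q^e + 2D \leq t-2D$. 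In both cases $t_q^j \leq \max\{0,t-2D\}$, so Lemma~\ref{lem:knowswhenjoined} gives $join(q) \in \mbox{\it Server\_Changes}_p^t$.

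The main subtlety I expect is showing $leave(q) \notin \mbox{\it Server\_Changes}_p^t$ for each such $q$. Since correct $q$ never leaves by time $t$, no correct server ever broadcasts $leave(q)$; the concern is whether a Byzantine server could inject a spurious $leave(q)$ entry to artificially shrink $p$'s membership view. This is exactly the point at which the proof departs from the crash-tolerant argument of~\cite{AttiyaCEKW2018}, and it is ruled out by the digital-signature restriction stated in Section~\ref{section:algorithm}, which forbids Byzantine servers from fabricating or modifying the node-id field of any $\mbox{\it Server\_Changes}$ entry. Granting this, each of the $(1-\alpha)^4 NS(t-4D) - f$ correct surviving servers lies in $\mbox{\it Members}_p^t$, establishing the claimed lower bound.
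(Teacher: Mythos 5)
Your proposal is correct and follows essentially the same route as the paper's proof: both rest on Lemma~\ref{lem:knowswhenjoined} plus Theorem~\ref{thm:joins} to show that every correct server present at $\max\{0,t-4D\}$ that survives appears in $\mbox{\it Members}_p^t$, with Lemma~\ref{lem:size} for the upper bound and Lemma~\ref{lem:left2} (with $i=4$, justified by Constraint~(\ref{constraint:A })) for the lower bound. If anything you are more careful than the paper, which omits the $-f$ term from its displayed lower-bound calculation and never addresses the possibility of a forged $leave(q)$ entry; your explicit appeal to the signed-entry restriction on $\mbox{\it Server\_Changes}$ is exactly the right way to close that gap.
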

}
\begin{proof}
The proof is adapted from~\cite{AttiyaCEKW2018} to include $f$ Byzantine servers in the lower bound. 
By Lemma~\ref{lem:knowswhenjoined}, \knows{p}{t}{[0,\max\{0,t-2D\}]} and, by
Theorem~\ref{thm:joins}, every node that enters by time $\max\{0,t-4D\}$
joins by time $\max\{0,t-2D\}$ if it is still active.
Thus {\em Members}$_p^t$ contains all nodes that are present at time $\max\{0,t-4D\}$
plus any nodes that enter in $(\max\{0,t-4D\},t]$ which $p$ learns have joined,
minus any nodes that leave in $(\max\{0,t-4D\},t]$ which $p$
learns have left.
Then, by Lemma~\ref{lem:size},
\begin{multline*}
|Members_p^t| \leq \\
NS(\max\{0,t-4D\}) +( (1+\alpha)^4-1) \cdot NS(\max\{0,t-4D\}) \\
= (1+\alpha)^4 \cdot NS(\max\{0,t-4D\}) .
\end{multline*}
Similarly, by Lemma~\ref{lem:left2} and Constraint~(\ref{constraint:A }),
\begin{multline*}
|Members_p^t| \geq \\
NS(\max\{0,t-2D\}) -(1- (1-\alpha)^4) \cdot NS(\max\{0,t-4D\}) \\
= (1-\alpha)^4 \cdot NS(\max\{0,t-4D\}) .
\end{multline*}
\end{proof}

Lemma~\ref{lem:joined-for-D} proves a lower bound on the number of servers that
reply to a client's query or update message. 
\sapta{
\begin{lemma}
\label{lem:joined-for-D}
If a client or correct server $p$ is active at time $t \geq t_p^j$, then the number of correct servers that
are joined by time $t$ and are still active at time $t+D$ is at least
$\left[\frac{(1-\alpha)^3}{(1+\alpha)^2} \right]\cdot|\mbox{\it Present}_p^t| -f$.
\end{lemma}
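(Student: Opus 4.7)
The plan is to estimate the number of correct, joined, and still-active servers by working from a well-understood ``anchor set'' of servers present $2D$ time units in the past, and then converting that bound to a fraction of $|\mbox{\it Present}_p^t|$ using Lemma~\ref{lem:present-2D}. Specifically, I would take $S$ to be the set of servers present at time $t^* := \max\{0,t-2D\}$, so $|S| = NS(t^*)$, and identify a subset of $S$ consisting of correct servers that are both (i) joined by time $t$ and (ii) still active at time $t+D$.

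Step one is to argue that every correct server $q\in S$ that remains active throughout $[t^*, t+D]$ is already joined by time $t$. If $q\in S_0$, then $q$ joined at time $0$. Otherwise $q$ entered at some $t_q^e \in (0, t^*]$, so $t_q^e + 2D \le t^* + 2D = t$; since $q$ is active for at least $2D$ time after entering (it is still active at $t+D$), Theorem~\ref{thm:joins} implies $q$ joins by $t_q^e + 2D \le t$. Step two is to count how many servers in $S$ survive to $t+D$: applying Lemma~\ref{lem:left2} to the interval $(t^*, t^*+3D]\supseteq (t^*,t+D]$, which is legal because Constraint~(\ref{constraint:A }) gives $-1/\log_2(1-\alpha)\ge 4\ge 3$, the number of servers in $S$ that leave in this window is at most $(1-(1-\alpha)^3)|S|$. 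Step three subtracts the at-most-$f$ Byzantine servers, yielding at least
\begin{equation*}
(1-\alpha)^3 \cdot NS(t^*) - f
\end{equation*}
correct servers that are joined by $t$ and still active at $t+D$.

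Step four converts the $NS(t^*)$ bound into one involving $|\mbox{\it Present}_p^t|$ via the upper bound in Lemma~\ref{lem:present-2D}: since $|\mbox{\it Present}_p^t| \le (1+\alpha)^2 \cdot NS(t^*)$, we get $NS(t^*) \ge |\mbox{\it Present}_p^t|/(1+\alpha)^2$, and therefore
\begin{equation*}
(1-\alpha)^3 \cdot NS(t^*) - f \;\ge\; \frac{(1-\alpha)^3}{(1+\alpha)^2}\cdot |\mbox{\it Present}_p^t| - f,
\end{equation*}
which is the desired bound.

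The only step that requires real care is step one: one has to handle uniformly the servers that are in $S_0$ (joined at time $0$) and those whose entry time $t_q^e$ lies anywhere in $(0, t^*]$, and observe that ``present at $t^*$ and active through $t+D$'' automatically gives the $2D$ window needed to invoke Theorem~\ref{thm:joins}. The rest reduces to book-keeping with Lemma~\ref{lem:left2} and Lemma~\ref{lem:present-2D}, taking some care that the $3D$-length window fits within the churn bound guaranteed by Constraint~(\ref{constraint:A }).
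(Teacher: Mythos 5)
Your proposal is correct and follows essentially the same route as the paper's proof: bound the leavers over $(\max\{0,t-2D\},t+D]$ via Lemma~\ref{lem:left2} and Constraint~(\ref{constraint:A }), subtract the at-most-$f$ Byzantine servers, invoke Theorem~\ref{thm:joins} to conclude these survivors are joined by time $t$, and convert $NS(\max\{0,t-2D\})$ into $|\mbox{\it Present}_p^t|/(1+\alpha)^2$ via Lemma~\ref{lem:present-2D}. Your step one merely spells out more explicitly the case analysis ($q\in S_0$ versus $t_q^e\le\max\{0,t-2D\}$) that the paper compresses into a one-line appeal to Theorem~\ref{thm:joins}.
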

}

\begin{proof}
By Lemma~\ref{lem:left2} and Constraint~(\ref{constraint:A }),
the maximum number of servers that leave during $(\max\{0,t-2D\},t+D]$
is at most $(1-(1-\alpha)^3) \cdot NS(\max\{0,t-2D\})$.
Thus, there are at least
\begin{multline*}
NS(\max\{0,t-2D\}) - (1-(1-\alpha)^3) \cdot NS(\max\{0,t-2D\}) -f\\
= [(1-\alpha)^3 ] \cdot NS(\max\{0,t-2D\}) -f
\end{multline*}
correct servers that were present at time
$\max\{0,t-2D\}$
and are still active at time $t+D$.
This number is bounded below by
$\left[\frac{(1-\alpha)^3}{(1+\alpha)^2} \right]\cdot |\mbox{\it Present}_p^t|-f $
since, by Lemma~\ref{lem:present-2D},
$NS(\max\{0,t-2D\}) \ge |\mbox{\it Present}_p^t|/(1+\alpha)^2$.
By Theorem~\ref{thm:joins}, all of these servers are joined by time $t$.
\end{proof}

 Lemmas~\ref{lem:present-2D} through~\ref{lem:joined-for-D} 
are used to prove the following theorem. 
\begin{theorem}
\label{thm:ops-live}
Every read or write operation invoked by a client that remains active completes.
\end{theorem}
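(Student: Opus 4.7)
The plan is to show that each of the two phases (read phase and write phase) of a read or write operation invoked by an active client $p$ receives the required number of responses within $2D$ time after that phase begins, so that the condition $rw\_counter \ge rw\_bound$ is eventually satisfied and the phase terminates. First I would consider the read phase, which begins at some time $t_q$ when BeginReadPhase s-bcasts a ``query'' message and sets $rw\_bound := \beta \cdot |\mbox{\it Members}_p^{t_q}|$. By assumption A3, every correct joined server that remains active throughout $[t_q, t_q + D]$ receives the query by time $t_q + D$, and (since the sender $p$ is a client, so the IsClient check on line~\ref{line:check IsClient1} succeeds) s-bcasts a ``reply'' that reaches $p$ by time $t_q + 2D$. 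Each such server contributes exactly one reply that IsValidMessage admits.

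The key computation is to verify that the number of such responders is at least $rw\_bound$. Let $n = NS(\max\{0, t_q - 2D\})$. The argument inside Lemma~\ref{lem:joined-for-D} already shows that at least $(1-\alpha)^3 n - f$ correct servers are active throughout $[\max\{0, t_q - 2D\}, t_q + D]$; by Theorem~\ref{thm:joins} each has joined by $t_q$. On the other side, Observation~\ref{obs:S2} together with the upper bound of Lemma~\ref{lem:present-2D} yields $|\mbox{\it Members}_p^{t_q}| \le |\mbox{\it Present}_p^{t_q}| \le (1+\alpha)^2 n$, so $rw\_bound \le \beta (1+\alpha)^2 n$. It therefore suffices to check
\[
(1-\alpha)^3 n - f \;\ge\; \beta (1+\alpha)^2 n,
\]
and after dividing through by $(1+\alpha)^2 n$ and using $n \ge NS_{min}$, this reduces exactly to Constraint~(\ref{constraint:E }).

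The same argument transfers verbatim to the write phase. At the time $t_u$ when BeginWritePhase s-bcasts its ``update'' message, $rw\_bound$ is recomputed as $\beta \cdot |\mbox{\it Members}_p^{t_u}|$; each correct joined server active throughout $[t_u, t_u+D]$ passes the IsClient check on line~\ref{line:check IsClient2}, c-bcasts an ``ack'' matching $tag$, and all these acks arrive at $p$ by $t_u + 2D$. Hence $wp\_pending$ is cleared, and the appropriate {\sc Return} or {\sc Ack} response is produced, completing the operation.

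The main obstacle is choosing which bounds from Lemmas~\ref{lem:present-2D} and~\ref{lem:members-2D} to plug into each side of the inequality so that Constraint~(\ref{constraint:E }) falls out cleanly. Upper-bounding $|\mbox{\it Members}_p^{t_q}|$ directly via Lemma~\ref{lem:members-2D} drags in a weaker $(1+\alpha)^4$ factor measured against $NS$ at $t_q - 4D$, which does not line up with the $(1+\alpha)^2$ in Constraint~(\ref{constraint:E }); the tighter route, which I would take, is to invoke Observation~\ref{obs:S2} first and only then bound $|\mbox{\it Present}|$ at $t_q$, so that both sides of the decisive inequality are expressed in terms of the single quantity $n = NS(\max\{0, t_q - 2D\})$. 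A secondary subtlety is that Byzantine servers may ignore the query/update entirely or spuriously reply after pretending to leave; the first is already absorbed in the $-f$ term inherited from Lemma~\ref{lem:joined-for-D}, while the second is filtered out by the IsValidMessage guards so that no Byzantine server can inflate $rw\_counter$ beyond one unit per phase.
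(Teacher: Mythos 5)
Your proposal is correct and follows essentially the same route as the paper's proof: both split the operation into a read phase and a write phase, each completing within $2D$, count the correct joined servers that survive through $[t,t+D]$ via Lemma~\ref{lem:joined-for-D}, bound $rw\_bound$ through Observation~\ref{obs:S2} and Lemma~\ref{lem:present-2D}, and close with Constraint~(\ref{constraint:E }). The only difference is cosmetic: you normalize both sides of the decisive inequality to $n = NS(\max\{0,t-2D\})$ rather than to $|\mbox{\it Present}_p^t|$, which makes the final application of Constraint~(\ref{constraint:E }) (using $n \ge NS_{min}$) marginally more transparent.
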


\begin{proof}
Each operation consists of a read phase and a write phase.
We show that each phase terminates
within $2D$ time, provided the client remains active (does not crash or leave).

Consider a phase of an operation by client $p$ that starts at time $t$.
Every correct server that joins by time $t$ and is still active at time $t+D$
receives $p$'s query or update message
and replies with a  reply message or an ack message
by time $t+D$.
By Lemma~\ref{lem:joined-for-D}, there are at least
$\left[\frac{(1-\alpha)^3}{(1+\alpha)^2}\right]\cdot |\mbox{\it Present}_p^t|-f $
such servers.

From Constraint~(\ref{constraint:E }), Lemma~\ref{lem:present-2D} and Observation~\ref{obs:S2},
\begin{multline*}
\left[\frac{(1-\alpha)^3}{(1+\alpha)^2} \right]\cdot |\mbox{\it Present}_p^t|-f  \geq
           \beta \cdot |\mbox{\it Present}_p^t| \\
  \geq \beta \cdot |\mbox{\it Members}_p^t|  = rw\_bound_p^t.
\end{multline*}
Thus, by time $t+2D$, $p$ receives sufficiently many replies or ack messages
to complete the phase.
\end{proof}

Now we prove atomicity of the \AlgName{} algorithm.
Let $\cal T$ be the set of read operations that complete and
write operations that execute line~\ref{line:bcast update1}
of Algorithm~\ref{algo:Client}.
For any node $p$,
let $ts_p^t = (num_p^t,w\_id_p^t)$ denote the {\em timestamp} of
the latest value  known to node $p$ that is recorded in its \mbox{\it Known\_Writes}$[p]_p$ variable. Note that new timestamps are created by write operations
(on lines~\ref{line:new timestamp}-\ref{line:new timestamp2}  of Algorithm~\ref{algo:Client})
and are sent via enter-echo, update, and update-echo messages.
Initially, $ts_p^0 = (0,\bot)$ for all nodes $p$.

For any operation $o$ in $\cal T$ by client $p$, the {\em
timestamp of its read phase}, $ts^{rp}(o)$, is $ts_p^t$, where $t$
is the end of its read phase (i.e., when the condition on
line~\ref{line:quorum reached} of Algorithm~\ref{algo:Client}
evaluates to true).
The {\em timestamp of its write phase},
$ts^{wp}(o) $, is $ts_p^t$, where $t$ is  the beginning of its write phase
(i.e., when it s-bcasts on line~\ref{line:bcast update1}
of Algorithm~\ref{algo:Client}).
The {\em timestamp of a read operation}  in $\cal T$ is the timestamp of its read phase.
The {\em timestamp of a write operation}  in $\cal T$ is the timestamp of its write phase.

Note that $w\_id$ is equal to $p$ and $num$ is set to one greater than
the largest sequence number occurring in at least $f+1$ replies observed during an operation's read phase.
This implies the next observation:

\begin{observation}\label{obs:unique timestemps}
Each write operation in $\cal T$ has a unique timestamp.
\end{observation}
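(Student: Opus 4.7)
The plan is to argue uniqueness of timestamps by considering two cases based on whether the two write operations are invoked by the same client or by distinct clients. The second component of the timestamp is $w\_id$, which by the code in \textbf{BeginWritePhase} (Algorithm~\ref{algo:Client}, line~\ref{line:new timestamp2}) is set to the invoking client's own id $p$. Hence if two writes $w_1, w_2 \in \mathcal{T}$ are invoked by distinct clients $p \neq q$, then $ts^{wp}(w_1)$ and $ts^{wp}(w_2)$ differ in their $w\_id$ component and so are distinct. This case is essentially immediate.

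The interesting case is two write operations $w_1, w_2$ invoked by the same client $p$. By Assumption A8, $p$ does not invoke a new operation until the previous one has generated a response; combined with the fact that the operations are issued by a single client at a single node, this lets me order $w_1$ before $w_2$ without loss of generality. I then intend to argue that $p$'s local variable $num_p$ strictly increases between the instant $w_1$ executes line~\ref{line:bcast update1} of Algorithm~\ref{algo:Client} and the instant $w_2$ executes that same line. Specifically, at the start of $w_2$'s write phase, line~\ref{line:new timestamp} increments $num_p$, so the sequence number stamped into $w_2$ is at least one larger than whatever $num_p$ was at the analogous moment in $w_1$, provided $num_p$ never \emph{decreases} in the interval between these two events.

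The key small claim I will establish is therefore monotonicity: for any client $p$, $num_p^t$ is nondecreasing in $t$. This follows by inspecting every line that writes to $num_p$: \textbf{BeginWritePhase} only increments it, and \textbf{SetValueTimestamp} (Algorithm~\ref{algo:Common}) updates the triple $(val,num,w\_id)$ to $valid\_val$ only when the timestamp of $valid\_val$ is strictly larger than the current $(num_p,w\_id_p)$, which in particular forces the new $num_p$ to be at least as large as the old one. No other line modifies $num_p$ at a client. Given monotonicity, if $w_1$ stamps timestamp $(n_1, p)$ and $w_2$ stamps timestamp $(n_2, p)$ with $w_1$ preceding $w_2$, then at the moment $w_1$ broadcasts we have $num_p = n_1$, and immediately thereafter $num_p \geq n_1$ persists until $w_2$ executes line~\ref{line:new timestamp}, which increments, yielding $n_2 \geq n_1 + 1 > n_1$.

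I do not expect a serious obstacle; the only subtlety is ensuring that no other event (message receipt, join protocol step, or read-phase reply processing) can decrease $num_p$ between $w_1$ and $w_2$, which is handled by the exhaustive case analysis above on lines that assign to $num_p$. Combining the two cases gives that every pair of distinct writes in $\mathcal{T}$ has distinct timestamps, establishing the observation.
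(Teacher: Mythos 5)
Your proposal is correct and follows essentially the same route as the paper, which justifies the observation in a single sentence preceding it: the $w\_id$ component separates writes by distinct clients, and for a single client the sequence number is strictly increased at the start of each write phase because $num_p$ never decreases (it is only incremented in \textbf{BeginWritePhase} or replaced by a lexicographically larger timestamp in \textbf{SetValueTimestamp}). Your version simply makes the monotonicity argument explicit, which the paper leaves implicit.
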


The next observation follows by a simple induction,
since every timestamp other than $(0,\bot)$ comes from
Lines~\ref{line:new timestamp}-\ref{line:new timestamp2}
of Algorithm~\ref{algo:Client}.

\begin{observation}\label{obs:read values}
Consider any read $op_1$ in $\cal T$.
If the timestamp of a read $op_1$ is $(0,\bot)$, then $op_1$ returns $\bot$.
Otherwise, there is a write $op_2$ in $\cal T$ such that $ts(op_1) = ts(op_2)$
and the value returned by $op_1$ equals the value written by $op_2$.
\end{observation}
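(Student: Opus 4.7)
The plan is a straightforward strong induction on the order in which non-initial timestamp--value triples $(v, (s, i))$ with $i \ne \bot$ are first recorded in any correct node's $Known\_Writes[\cdot]$, leveraging two structural facts from the algorithm: (i) new non-initial timestamps are created only at Lines~\ref{line:new timestamp1}--\ref{line:new timestamp2} of Algorithm~\ref{algo:Client} during the write phase of a write operation, and (ii) although Byzantine servers can modify the $val$ and $num$ fields in messages, they cannot alter $w\_id$, and the $valid\_val$ derived variable requires $f+1$ matching value-timestamp entries to be accepted.

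First I would dispatch the base case $ts(op_1) = (0,\bot)$. In this case, the client $p$ that invoked $op_1$ never had $SetValueTimestamp()$ overwrite its initial $(val, num, w\_id) = (\bot, 0, \bot)$ with a $valid\_val$ that carries a larger timestamp, because $ts(op_1)$ is $p$'s timestamp at the end of the read phase. Therefore at the beginning of the write phase, $temp := val = \bot$, and $op_1$ returns $\bot$.

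For the main case $ts(op_1) = (num, w\_id)$ with $w\_id \ne \bot$, I would argue that the returned value is $val_p$ at the end of the read phase, which was set by $SetValueTimestamp()$ from $valid\_val$. By definition of $valid\_val$, the triple $(v, (num, w\_id))$ appeared in at least $f+1$ entries of $p$'s $Known\_Writes[\cdot]$; since at most $f$ of these entries originate from Byzantine servers, at least one correct server $r$ had $(v, (num, w\_id)) \in Known\_Writes[r]_r$ when it sent the contributing message. I would then prove the auxiliary invariant: \emph{for every correct node $r$ and every $(v, (s, i)) \in Known\_Writes[r]_r$ with $i \ne \bot$, there is a write operation $op_2$ in $\cal T$ by client $i$ that wrote value $v$ with timestamp $(s, i)$.} The invariant is proved by induction on the time of insertion: a correct $r$ adds to $Known\_Writes[r]_r$ only (a)~at Line~\ref{line:update history} upon receiving an update message from a client, in which case the client's identity is $i$ (unforgeable) and the write itself supplies $op_2$; or (b)~through $SetValueTimestamp()$ copying $valid\_val$, in which case some correct server previously endorsed the same triple and the induction hypothesis applies.

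The main obstacle will be the (b)-branch of the auxiliary invariant: I must rule out the scenario where Byzantine servers conspire to fabricate an apparently valid triple without any correct server ever having stored it. The $f+1$ threshold embedded in $valid\_val$ is exactly what blocks this, since the at-most-$f$ Byzantine servers cannot by themselves produce $f+1$ matching entries, and the immutability of $w\_id$ prevents them from laundering one client's legitimate write into a forged timestamp attributed to another client. Once this is established, combining Observation~\ref{obs:unique timestemps} (uniqueness of write timestamps) with the auxiliary invariant pins down a single $op_2$ whose written value equals the value returned by $op_1$, completing the proof.
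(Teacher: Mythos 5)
Your proposal takes essentially the same route as the paper, whose entire argument is the remark that the observation ``follows by a simple induction, since every timestamp other than $(0,\bot)$ comes from Lines~\ref{line:new timestamp}--\ref{line:new timestamp2} of Algorithm~\ref{algo:Client}''; you have simply made explicit the Byzantine-masking details (the $f+1$ threshold in $valid\_val$, the unforgeability of $w\_id$, and the fact that $\mbox{\it Known\_Writes}[q]_p$ is only ever populated from messages signed by $q$ itself) that the paper leaves implicit. One minor imprecision: in your branch~(a) an ``update'' message may also come from the write-back phase of a \emph{read}, in which case the sender is not the writer $i$; this case is absorbed by the same recursion as branch~(b), since that read obtained its triple from $valid\_val$ in its own read phase.
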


If a read operation $op_1$ returns the value written by a write operation $op_2$, then we say that $op_1$ reads from $op_2$. 

Lemmas~\ref{lem:lin1} through~\ref{lem:lin4} show that 
information written in the write phase of an operation propagates properly through the system. It is very important that at every step, the algorithm ensures that outdated information or wrong information sent by Byzantine servers does not corrupt the state of the replicated register. The IsValidMessage() procedure helps mask two types of bad behavior (multiple replies and replies sent after announcing a leave).  
The variables $valid\_val_p$ and $Known\_Writes[]_p$ help mask (bad) replies from Byzantine servers. 
These lemmas are analogous to the Lemmas~\ref{lem:gen1} to~\ref{lem:knowswhenjoined} 
regarding the propagation of information about {\sc Enter}, {\sc Joined}, and {\sc Leave} events. 

\sapta{

\begin{lemma}
\label{lem:lin1}
If $o$ is an operation in $\cal T$ whose write phase w starts at $t_w$,
correct server $p$ is active at time $t \geq t_w+D$, and
$t_p^e \leq t_w$, then $ts_p^t \geq ts^{wp}(o)$.
\end{lemma}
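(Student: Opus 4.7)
The plan is to establish the statement in two stages: first show that by time $t_w + D$, server $p$ has received and processed the ``update'' message that the client performing $o$ broadcast at $t_w$, forcing $ts_p \geq ts^{wp}(o)$ at that moment; and second, observe that for a correct server the timestamp $ts_p^t$ is monotonically non-decreasing in $t$, so the bound persists for every $t \geq t_w + D$.

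For the first step, note that $p$ is active throughout $[t_w, t_w + D]$: since $t_p^e \leq t_w$, server $p$ is present at $t_w$, and since $p$ is active (hence present, as servers do not crash) at $t_w + D$, no {\sc Leave}$_p$ event can occur inside this interval. Applying Assumption A3 to the ``update'' message s-bcast on line~\ref{line:bcast update1} by the client of $o$ at $t_w$, server $p$ receives it at some time $t' \in (t_w, t_w + D]$. Since $o \in \mathcal{T}$ is issued by a (non-Byzantine) client, the IsClient check on line~\ref{line:check IsClient2} passes when $p$ processes the message. The payload carries the pair $(num, w\_id)$ the client held at the beginning of its write phase, which by definition equals $ts^{wp}(o)$. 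At line~\ref{line:new value?} one of two subcases holds: either $(s, i) > (num_p, w\_id_p)$, in which case $p$ overwrites $(num_p, w\_id_p)$ with $ts^{wp}(o)$ on line~\ref{line:update value} and inserts the new triple into Known\_Writes$[p]_p$ on line~\ref{line:update history}; or the condition fails, so $(num_p, w\_id_p) \geq ts^{wp}(o)$ already. Either way, immediately after $p$ finishes processing the update at $t' \leq t_w + D$, we have $ts_p^{t'} \geq ts^{wp}(o)$.

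For the second step, inspection of Algorithm~\ref{algo:Server} and Algorithm~\ref{algo:Common} shows that a correct server's pair $(num_p, w\_id_p)$ is modified only on line~\ref{line:update value} of the update handler and inside SetValueTimestamp, and in both places a guard (the explicit inequality on line~\ref{line:new value?}, respectively the comparison against $valid\_val$) forces the new value to strictly exceed the old one; moreover, every such assignment is paired with an insertion of the updated triple into Known\_Writes$[p]_p$. A short induction on the steps of $p$ then yields the invariant that $ts_p^t$ always equals $(num_p^t, w\_id_p^t)$, and is therefore monotone. Combining with the first step gives $ts_p^t \geq ts^{wp}(o)$ for every $t \geq t_w + D$.

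The main obstacle I anticipate is verifying the invariant $ts_p^t = (num_p^t, w\_id_p^t)$ cleanly: one must check every handler in Algorithm~\ref{algo:Server}---in particular the ``update-echo'' and ``enter-echo'' branches, which modify Known\_Writes$[q]_p$ for indices $q \neq p$ via message contents that could originate from Byzantine servers---and confirm that Known\_Writes$[p]_p$ is never touched except in lockstep with a (strict) increase of $(num_p, w\_id_p)$. Once that invariant is in place, the remainder of the argument reduces to Assumption A3 and the guard on line~\ref{line:new value?}, and no reasoning about Byzantine behaviour is needed because the client that issued $o$ is correct and $p$ itself is correct.
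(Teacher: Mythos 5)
Your proposal is correct and follows essentially the same route as the paper: the paper's own proof simply notes that $p$ is active throughout $[t_w,t_w+D]$, hence directly receives the update message s-bcast by $w$, and concludes from lines~\ref{line:new value?}--\ref{line:update history} of Algorithm~\ref{algo:Server} that $ts_p^t \geq ts^{wp}(o)$, leaving the monotonicity of $ts_p$ implicit. Your version just makes that receipt argument and the monotonicity invariant explicit, which is a harmless (and arguably helpful) elaboration rather than a different approach.
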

}
\begin{proof}
Since server $p$ is active throughout $[t_w,t_w+D]$, it directly receives the update message
s-bcast by $w$ at time $t_w$.
Hence, from lines~\ref{line:new value?}--\ref{line:update history} 
of Algorithm~\ref{algo:Server},
$ts_p^t \geq ts^{wp}(o)$.
\end{proof}

\sapta{

\begin{lemma}
\label{lem:lin2}
 Suppose a correct server $p \not\in S_0$ receives $(f+1)$ enter-echo messages from correct servers by time $t''$. Let the $f+1$st  enter-echo message from a correct server be received from $q$ that sends it at time $t'$ in reply 
 to an enter message from $p$.
If $o$ is an operation whose write phase $w$ starts at $t_w$,
$p$ is active at time $t \geq \max\{t'',t_w+2D\}$,
and the $f+1$ correct servers that send enter-echo messages are  active throughout $[t_w,t_w + D]$,
then $ts_p^{t} \geq ts^{wp}(o)$.
\end{lemma}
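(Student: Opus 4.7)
The plan is to show that by time $t$, at least $f+1$ of the server entries in $p$'s $\mbox{\it Known\_Writes}[\cdot]$ jointly testify to a value-timestamp pair with timestamp at least $ts^{wp}(o)$, so that $valid\_val_p^t$ has timestamp at least $ts^{wp}(o)$ and the subsequent SetValueTimestamp call forces $ts_p^t \geq ts^{wp}(o)$. The argument splits into cases based on whether $p$ entered before or after the write phase began.

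If $t_p^e \le t_w$, then Lemma~\ref{lem:lin1} applies directly, since $t \ge t_w + 2D \ge t_w + D$, yielding $ts_p^t \ge ts^{wp}(o)$. The remainder focuses on the harder case $t_p^e > t_w$. Let $r_1, \ldots, r_{f+1}$ be the $f+1$ correct joined servers whose enter-echoes reach $p$ by $t''$. For each $r_i$, since $r_i$ is active throughout $[t_w, t_w+D]$, Assumption A3 guarantees that $r_i$ receives the update s-bcast by $w$ at some $U_i \in (t_w, t_w+D]$. Because Algorithm~\ref{algo:Server} updates $(val,num,w\_id)_{r_i}$ monotonically and always records the current pair in $\mbox{\it Known\_Writes}[r_i]_{r_i}$, after $U_i$ that entry contains a value-timestamp pair with timestamp at least $ts^{wp}(o)$.

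I would next show that $p$ obtains this information from $r_i$ through one of two routes. If $U_i \ge t_p^e$, then $[U_i,U_i+D] \subseteq [t_p^e, t_w+2D] \subseteq [t_p^e,t]$, so $p$ is active throughout $[U_i,U_i+D]$ and, by Assumption A3, directly receives the update-echo that $r_i$ s-bcasts at $U_i$; the update-echo handler then merges $\mbox{\it Known\_Writes}[r_i]_{r_i}$ into $\mbox{\it Known\_Writes}[r_i]_p$. If $U_i < t_p^e$, then $r_i$ receives $p$'s enter message at some $E_i \in (t_p^e, t_p^e+D]$ with $E_i > U_i$, so $r_i$'s enter-echo to $p$ (sent with $is\_joined = \textit{true}$ and received by time $t''$) already carries the relevant $\mbox{\it Known\_Writes}[r_i]_{r_i}$, which the enter-echo handler merges into $\mbox{\it Known\_Writes}[r_i]_p$. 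In either route, by time $t = \max\{t'', t_w+2D\}$, $\mbox{\it Known\_Writes}[r_i]_p$ contains a value-timestamp pair with timestamp at least $ts^{wp}(o)$ for every $i$.

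The main obstacle is the definition of $valid\_val$, which requires the same pair to occur in $f+1$ entries rather than merely some (possibly distinct) pairs of sufficiently high timestamp. When no concurrent write has a higher timestamp, every $r_i$ contributes exactly $(v, ts^{wp}(o))$ and the claim is immediate. Otherwise, I would argue by induction on the sequence of writes: any higher pair $(v',ts')$ installed into $(val,num,w\_id)_{r_i}$ must earlier have passed through a valid-value step, which itself required $(v',ts')$ to occur in $f+1$ $\mbox{\it Known\_Writes}[\cdot]$ entries elsewhere, and these propagate to $p$ through the same two routes. Thus $p$'s $\mbox{\it Known\_Writes}[\cdot]$ collects $f+1$ copies of a dominant pair with timestamp at least $ts^{wp}(o)$. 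The final SetValueTimestamp invocation triggered by the last relevant enter-echo or update-echo then updates $(val_p, num_p, w\_id_p)$, yielding $ts_p^t \geq ts^{wp}(o)$.
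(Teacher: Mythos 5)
Your proof takes essentially the same route as the paper's: each of the $f+1$ correct servers, being active throughout $[t_w,t_w+D]$, receives $w$'s update and adopts a timestamp at least $ts^{wp}(o)$, and $p$ learns each server's state either through that server's enter-echo (if sent after the update arrived) or through its update-echo, which reaches $p$ by $t_w+2D\le t$; the paper splits on $t'$ versus $\hat t$ where you split on $U_i$ versus $t_p^e$, but the two routes are the same. The one place you genuinely diverge is in confronting the fact that $valid\_val$ requires a \emph{single} value-timestamp pair to occur in $f+1$ entries of $\mbox{\it Known\_Writes}[\cdot]_p$, rather than $f+1$ entries each containing \emph{some} pair of sufficiently high timestamp; the paper's proof passes over this silently, so flagging it is a genuine refinement. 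However, your proposed repair is not correct as stated: a server can install a higher pair $(v',ts')$ directly from a client's update message (lines~\ref{line:new value?}--\ref{line:update history} of Algorithm~\ref{algo:Server}) without that pair ever having ``passed through a valid-value step,'' so the claim that every installed higher pair was already corroborated by $f+1$ entries elsewhere does not follow from the code; closing this would require a separate argument (e.g., that every correct update is s-bcast to, and recorded by, all correct active servers within $D$). Since the paper's own proof asserts the conclusion about $valid\_val_p$ without addressing this point at all, your write-up is at least as careful as the original, but the induction sketch in your final paragraph should not be presented as established.
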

}
\begin{proof}
 By Lemma~\ref{lem:gen0}, there are at least $f+1$ correct joined servers that are active throughout $[t_w, t_w+D]$. 
 Since $q$ is  active throughout
$[t_w,t_w + D]$, it receives the update message from $w$ at some time $\hat{t} \in [t_w,t_w + D]$,
so $ts_q^{\hat{t}} \geq ts^{wp}(o)$.
At time $t'' \leq t$,  $p$ receives the enter-echo sent by  $q$ at time $t'$.  By the above argument, all $f$ earlier enter-echo messages have timestamp $\ge ts^{wp}(o)$. So, the value of the timestamp in $valid\_val_p$ and in $\mbox{\it Known\_Writes} [p]_p$  is set to $\ge ts^{wp}(o)$. So $ts_p^t \geq ts_p^{t''} \geq ts_q^{t'}$.
If $t' \geq \hat{t}$, then $ts_q^{t'} \geq ts_q^{\hat{t}}$, so
$ts_p^{t} \geq ts^{wp}(o)$.
If $\hat{t} > t'$, then $q$ sends an update-echo at time $\hat{t} \leq t_w+D$, and 
$p$ receives it by time $\hat{t}+D  \leq t_w+2D \leq t$. The same argument works for the other $f$ correct, active servers in the system.  Thus the timestamp of the variable $valid\_val_p$ and  $\mbox{\it Known\_Writes} [p]_p$ is either the timestamp of $w$ or of a later write. Thus,
$ts_p^t \geq ts_q^{\hat{t}} \geq ts^{wp}(o)$.
\end{proof}

\sapta{

\begin{lemma}
\label{lem:lin3}
If $o$ is an operation in $\cal T$ whose write phase w starts at $t_w$ and correct
server $p$ is active at time $t \geq \max\{t_p^e+2D,t_w+D\}$,
then $ts_p^t \geq ts^{wp}(o)$.
\end{lemma}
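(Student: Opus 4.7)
My plan is to case-split on whether $p$ entered before or after the write phase $w$ begins, and handle the second case by induction on the order of entry of correct servers. If $t_p^e \leq t_w$, then since $t \geq t_w + D$, Lemma~\ref{lem:lin1} applied directly to $p$ yields $ts_p^t \geq ts^{wp}(o)$; this also serves as the base case of the induction, covering $p \in S_0$.

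For the inductive step, suppose $t_p^e > t_w$ and assume the claim holds for every correct server with entry time strictly less than $t_p^e$. I would invoke Lemma~\ref{lem:gen0} at time $t_p^e$ to obtain a set $Q$ of at least $f+1$ correct servers active throughout $[\max\{0, t_p^e - 2D\}, t_p^e + D]$. When $t_p^e \leq t_w + 2D$, then $[t_w, t_w + D] \subseteq [t_p^e - 2D, t_p^e + D]$, so every $q \in Q$ is active throughout $[t_w, t_w + D]$ and Lemma~\ref{lem:lin2} applies directly (the $f+1$ enter-echoes from $Q$ arrive at $p$ by time $t_p^e + 2D \leq t$). When $t_p^e > t_w + 2D$, each $q \in Q$ satisfies $t_q^e \leq t_p^e - 2D$, so $q$ entered strictly before $p$, and either Lemma~\ref{lem:lin1} (when $t_q^e \leq t_w$, using $t_p^e > t_w + D$) or the induction hypothesis (when $t_q^e > t_w$, using $\max\{t_q^e + 2D, t_w + D\} \leq t_p^e$) gives $ts_q^{t_p^e} \geq ts^{wp}(o)$. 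Because $(num, w\_id)$ is updated only when it strictly increases and every installed pair is retained in $\mbox{\it Known\_Writes}[q]_q$, the enter-echo that each such $q$ sends in $[t_p^e, t_p^e + D]$ carries the pair for $o$ (or for a later write) in its $\mbox{\it Known\_Writes}[q]_q$ field. Since $t \geq t_p^e + 2D$, $p$ receives all $f+1$ of these enter-echoes by $t$ and each one deposits its pair into a distinct slot $\mbox{\it Known\_Writes}[q]_p$, so at least $f+1$ entries of $\mbox{\it Known\_Writes}[\cdot]_p$ contain a pair with timestamp $\geq ts^{wp}(o)$. This forces $valid\_val_p^t$ to have timestamp at least $ts^{wp}(o)$, and the subsequent SetValueTimestamp call raises $(num_p, w\_id_p)$ to match.

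The main difficulty will be ruling out Byzantine contamination of the $valid\_val$ computation. A Byzantine server can transmit only a subset of its $\mbox{\it Known\_Writes}$ and can alter $val$ and $num$ in its own slot, so I must rule out its manufacturing $f+1$ entries agreeing on a spurious timestamp competing with $ts^{wp}(o)$. The argument will rest on digital-signature protection of node ids, so that no Byzantine server can plant entries into the slots $\mbox{\it Known\_Writes}[q]_p$ keyed by the correct $q \in Q$; this guarantees that my $f+1$ honest witnesses survive and dictate the value of $valid\_val$.
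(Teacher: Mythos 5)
Your proposal follows essentially the same route as the paper's proof: induction on the order in which correct servers enter, the base case $t_p^e \le t_w$ dispatched by Lemma~\ref{lem:lin1}, Lemma~\ref{lem:gen0} supplying $f+1$ correct servers active on $[\max\{0,t_p^e-2D\},t_p^e+D]$, and the same case split ($t_w \ge t_p^e-2D$ handled via Lemma~\ref{lem:lin2}, $t_w < t_p^e-2D$ via the induction hypothesis applied to those earlier-entering servers). Your closing concern about Byzantine contamination of $valid\_val$ is resolved exactly as you anticipate---signed node ids keep the $f+1$ honest witnesses' slots intact---which is the same masking argument the paper delegates to Lemma~\ref{lem:lin2} and the model assumptions.
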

}
\begin{proof}
The proof is by induction on the order in which correct servers enter the system.
Suppose the claim holds for all correct servers that enter before $p$.
If $t_p^e \leq t_w$, which is the case for all $p \in S_0$,
then the claim follows from Lemma~\ref{lem:lin1}.

If $t_w < t_p^e$, then by Lemma~\ref{lem:gen0},
there are at least $f+1$ correct joined servers  that are active throughout
$[\max\{0,t_p^e-2D\},t_p^e+D]$. 
These servers receive an enter message from $p$ 
and send an enter-echo message containing $ts_q^{t'}$ back to $p$. Let $q$ be the server whose enter-echo is the $(f+1)$th enter-echo from a correct joined server to reach $p$.  Let server $q$ receive the enter message from $p$ at some time $t' \in [t_p^e, t_p^e +D]$. The enter-echo  message sent by $q$ is received by $p$ at some time $t'' \leq t'+D \leq t_p^e +2D \leq t$.  So, the value of the timestamp in $valid\_val_p$ and in turn $\mbox{\it Known\_Writes} [p]_p$ for $p$ is set to $ts_p^t\ge ts_p^{t''} \ge ts_q^{t'}$

The first case is when $t_w \geq \max\{0,t_p^e-2D\}$.
Since $t_w + D < t_p^e +D$, it follows that the $f+1$ correct joined servers including $q$ are active throughout $[t_w,t_w+D]$. Furthermore, $t \geq t_p^e+2D \geq \max\{t'', t_w+2D\}$.
Hence, Lemma~\ref{lem:lin2} implies that
$ts_p^{t} \geq ts^{wp}(o)$.

The second case is when $t_w < \max\{0,t_p^e-2D\}$.
Since $t_w \geq 0$, it follows that
$t_p^e-2D > 0$, $t_q^e \leq \max\{0,t_p^e -2D\} = t_p^e -2D$,  and $t_w < t_p^e-2D \leq t'-2D$,
so $t' \geq \max\{t_q^e + 2D, t_w + D\}$.
Note that $q$ is active at time $t'$ and $q$ enters before $p$, so, by the induction hypothesis,
$ts_q^{t'} \geq ts^{wp}(o)$. The above argument is true for all the other $f$ correct joined servers that $p$ hears from.
Hence, $ts_p^t \geq ts^{wp}(o)$.

\end{proof}

\sapta{

\begin{lemma}
\label{lem:lin4}
If $o$ is an operation in $\cal T$ whose write phase starts at $t_w$,
correct server $p \not\in S_0$ joins at time $t_p^j$, and
$p$ is active at time $t \geq \max\{t_p^j,t_w+2D\}$,
then $ts_p^t \geq ts^{wp}(o)$.
\end{lemma}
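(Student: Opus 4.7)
The plan is to prove the lemma by induction on the order in which correct servers join the system, in the same style as Lemma~\ref{lem:knowswhenjoined}. Let $p \notin S_0$ be a correct server joining at $t_p^j$ and active at $t \geq \max\{t_p^j, t_w+2D\}$; assume the statement already holds for every correct server joining strictly before $p$.

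First I would dispose of the easy branch $t \geq t_p^e + 2D$, in which Lemma~\ref{lem:lin3} applies directly and yields $ts_p^t \geq ts^{wp}(o)$. The interesting case is $t_p^j \leq t < t_p^e + 2D$; together with $t \geq t_w + 2D$ this forces $t_w < t_p^e$, so $w$ occurred before $p$ entered and $p$ can only learn about it through the replies it aggregates during its join. Reusing the counting argument from the proof of Lemma~\ref{lem:knowswhenjoined} (Constraint~(\ref{constraint:C }) together with the churn bounds in Lemmas~\ref{lem:size} and~\ref{lem:left2}), by the time $p$ sets $is\_joined := true$ on line~\ref{line:check if enough enter echoes}, $p$ has processed enter-echoes from at least $f+1$ distinct correct joined servers $q_1,\ldots,q_{f+1}$, each active throughout an interval $[\max\{0, t'-2D\}, t'+D]$ enclosing the send-times of those echoes, where $t'$ is the time the first correct joined server's enter-echo counted toward $join\_bound$ was sent to $p$. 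Let $t'_i$ denote $q_i$'s enter-echo send-time.

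I would then split on whether $[t_w, t_w+D]$ lies inside the common active window of the $q_i$'s. If $t_w \geq \max\{0, t'-2D\}$, each $q_i$ is active throughout $[t_w, t_w+D]$ and directly receives $w$'s update, so every $q_i$ adds the same pair $(v_o, ts^{wp}(o))$ to $\mbox{\it Known\_Writes}[q_i]_{q_i}$; the hypotheses of Lemma~\ref{lem:lin2} are then satisfied with $q = q_{f+1}$, giving the conclusion. If instead $t_w < \max\{0, t'-2D\}$, then for each $q_i$ with $t_{q_i}^e \leq t_w$ direct delivery still applies, while for any $q_i$ with $t_w < t_{q_i}^e \leq t'-2D$ the inequality $t'_i \geq t_p^e \geq t_{q_i}^e + 2D \geq t_{q_i}^j$ and $t'_i \geq t_p^e > t_w + D$ together let me invoke the induction hypothesis (or Lemma~\ref{lem:lin3}) on $q_i$ to obtain $ts_{q_i}^{t'_i} \geq ts^{wp}(o)$. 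In both subcases, $q_i$'s enter-echo carries a high-timestamp pair in $\mbox{\it Known\_Writes}[q_i]_{q_i}$; any $q_i$ that only received $w$'s update after echoing broadcasts an update-echo that reaches $p$ by time $\leq t_w + 2D \leq t$ (using that $p$ stays active after $t_p^e$). After $p$ processes these messages, $f+1$ distinct entries $\mbox{\it Known\_Writes}[q_i]_p$ contain a pair of timestamp $\geq ts^{wp}(o)$, $valid\_val_p$ has timestamp $\geq ts^{wp}(o)$, and SetValueTimestamp() lifts $ts_p^t$ accordingly.

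The step I expect to be most delicate is the final one: the $(f+1)$-quorum test in the definition of $valid\_val$ requires one specific value-timestamp pair to occur $f+1$ times, not merely $f+1$ distinct pairs of sufficiently large timestamp. The direct-delivery portion supplies the common pair $(v_o, ts^{wp}(o))$ automatically; the inductive portion, by contrast, only guarantees \emph{some} pair of timestamp $\geq ts^{wp}(o)$ per $q_i$, which could a priori differ across $i$. Forcing agreement there will require exploiting the unforgeability of digitally signed writer ids, the monotone growth of each $\mbox{\it Known\_Writes}[q_i]_{q_i}$, and the fact that any write whose timestamp beats $ts^{wp}(o)$ has itself had $\geq 2D$ time to propagate via update-echoes before $t_p^j$, so that a single dominating pair is carried by all $f+1$ contributors.
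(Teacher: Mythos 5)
Your decomposition is the paper's: induction on join order, dispatch of $t \ge t_p^e + 2D$ via Lemma~\ref{lem:lin3}, the Constraint~(\ref{constraint:C }) counting argument recycled from Lemma~\ref{lem:knowswhenjoined} to extract enter-echo witnesses active throughout $[\max\{0,t'-2D\},t'+D]$, the dichotomy on whether $t_w \ge \max\{0,t'-2D\}$, Lemma~\ref{lem:lin2} in the first branch and the induction hypothesis in the second. (Your observation that $t < t_p^e+2D$ and $t \ge t_w+2D$ already force $t_w < t_p^e$ is correct and makes the paper's separate appeal to Lemma~\ref{lem:lin1} for the case $t_p^e \le t_w$ redundant.) The one real divergence is in the second branch: the paper does not reason about $f+1$ witnesses there. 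It fixes $q$ as the sender of the \emph{first} enter-echo from a joined server that $p$ receives, applies Lemma~\ref{lem:lin1} or the induction hypothesis to that single $q$ (using $t_q^j \le t'$ and $t' > t_w + 2D$), and concludes through the chain $ts_q^{t'} \le ts_p^{t''} \le ts_p^t$.

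Two things keep your proposal from being a complete proof. First, the step you yourself flag is left as a list of ingredients rather than an argument: you obtain $f+1$ entries of $\mbox{\it Known\_Writes}[\cdot]_p$ each containing \emph{some} pair of timestamp $\ge ts^{wp}(o)$, but $valid\_val_p$ demands one identical pair occurring in $f+1$ entries, and nothing you write forces agreement. (To be fair, the paper does not close this either --- its inequality $ts_q^{t'} \le ts_p^{t''}$ tacitly assumes a single correct enter-echo can raise $p$'s timestamp past the $(f{+}1)$-occurrence filter in SetValueTimestamp() --- so you have isolated the genuinely fragile point of the lemma.) Second, your multi-witness bookkeeping has a send-time problem: the $f+1$ echoes are \emph{received} before $p$ joins but need not all be \emph{sent} at or after $t'$; a witness $q_i$ may send as early as $t_p^e \ge t'-D$, so from $t_w < t'-2D$ you only get $t'_i > t_w + D$ and $t'_i \ge t_{q_i}^e + D$, which is not enough to invoke either the induction hypothesis (needs $t'_i \ge t_w + 2D$) or Lemma~\ref{lem:lin3} (needs $t'_i \ge t_{q_i}^e + 2D$) on every $q_i$. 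Your claimed inequality $t_p^e \ge t_{q_i}^e + 2D$ does not follow from $t_{q_i}^e \le t'-2D$ and $t' \le t_p^e + D$. Since your strategy needs all $f+1$ contributors to carry a high timestamp, this must be repaired, e.g.\ by routing the late senders through the update-echo argument of Lemma~\ref{lem:lin2} rather than through the induction hypothesis.
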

}

\begin{proof}
The proof is adapted from~\cite{AttiyaCEKW2018} to tolerate $f$ Byzantine servers. 
The proof is by induction on the order in which servers enter the system.
Suppose the claim holds for all servers that join before $p$.
If $t \geq t_p^e +2D$, then the claim follows by Lemma~\ref{lem:lin3}.
So, suppose $t < t_p^e +2D$.
If $t_p^e \leq t_w$, then the claim follows by Lemma~\ref{lem:lin1}.
So, suppose $t_w < t_p^e$.

Before $p$ joins, it receives an enter-echo message from a joined server in reply 
 to its enter message.
Suppose $p$ first receives such an enter-echo message
at time $t''$ and this enter-echo was sent by $q$ at time $t'$. Then $t'' \leq t_p^j \leq t$
and $ts_q^{t'} \leq ts_p^{t''} \leq ts_p^t$.

Now we prove that $p$ receives an enter-echo message from a server $q'$ that is active throughout $[\max\{0,t'-2D\}, t' +D]$.
Let $S$ be the set of servers present at time $\max\{0,t'-2D\}$, so  $|S|= NS(\max\{0,t'-2D\})$. By Lemma~\ref{lem:left2} and Constraint~(\ref{constraint:A }), at most $(1-(1-\alpha)^3)|S|$ servers leave during $(\max\{0,t'-2D\},t'+D]$. Since $t'' \leq t'+D$, it follows that $|\mbox{\it Present}_p^{t''}| \geq |S| - (1-(1-\alpha)^3)|S| = (1-\alpha)^3|S|$.
Hence, from lines~\ref{line:calculate join bound}
and~\ref{line:check if enough enter echoes}
of Algorithm~\ref{algo:Common}, $p$ waits until it has received at least $join\_bound = \gamma \cdot |\mbox{\it Present}_p^{t''}|  \geq \gamma \cdot (1-\alpha)^3|S| $ enter-echo messages before joining.

\sapta{Hence, the number of enter-echo messages $p$ receives before joining from servers that were active throughout $[\max\{0,t'-2D\},t'+D]$ is
$join\_bound$ minus the total number of server enters, leaves and faults (as Byzantine servers may not reply at all),
which is at least

\begin{multline}
\gamma \cdot (1-\alpha)^3 |S| - 
    [((1+\alpha)^3 -1)|S| + (1-(1-\alpha)^3)|S|  +f] \\
= [(1+\gamma) (1-\alpha)^3 - (1+\alpha)^3]|S| -f \\
\geq  [(1+\gamma) (1-\alpha)^3 - (1+\alpha)^3]NS_{min} -f \label{Lemma9equation}
\end{multline}

Rearranging Constraint~(\ref{constraint:C }), we get
\[
[(1+\gamma)(1- \alpha)^3 -(1+\alpha)^3]NS_{min}-f \geq f+1 ,
\]
so expression~(\ref{Lemma9equation}) is at least $f+1$.
Hence $p$ receives an enter-echo message at some time $T'' \leq t_p^j$
from a correct server $q'$ that is active throughout
\[
[\max\{0,t'-2D\},t'+D] \supseteq [\max\{0,t'-2D\},t-D] .
\]}
Let $T'$ be the time that $q'$ sent its enter-echo message in reply 
 to the enter message from $p$.
Then $ts_{q'}^{T'} \leq ts_p^{T''} \leq ts_p^t$.

Note that $t_w < t_p^e \leq t'$ so $t_w + D \leq t'+D$.
If $t_w \geq \max\{0,t'-2D\}$, then $q'$ is active throughout $[t_w,t_w+D]$.
Since $t \geq \max\{T'',t_w+2D\}$, it follows by Lemma~\ref{lem:lin2} that $ts_p^t \geq ts^{wp}(o)$.
So, suppose $t_w < \max\{0,t'-2D\}$.

Since $t_w \geq 0$, it follows that $t' > t_w + 2D$.
If $q \in S_0$, then $t_q^e = 0 \leq t_w$, so, by Lemma~\ref{lem:lin1},
$ts_q^{t'} \geq ts^{wp}(o)$.
If $q \not\in S_0$, then, by the induction hypothesis,
$ts_q^{t'} \geq ts^{wp}(o)$, since $q$ joins at time $t_q^j < t_p^j \leq t'$.
Thus, in both cases, $ts_p^{t} \geq ts^{wp}(o)$.
\end{proof}

Lemmas~\ref{lem:lin1} through~\ref{lem:lin4} are used to prove Lemma~\ref{lem:lin5}, which is the key lemma for proving atomicity of \AlgName{}.
It shows that for two non-overlapping operations,
the timestamp of the read phase of the latter operation is at least
as \sapta{large} as the timestamp of the write phase of the former.

\sapta{
\begin{lemma}\label{lem:lin5}
For any two operations $op_1$ and $op_2$ in $\cal T$, if $op_1$
finishes before $op_2$ starts, then $ts^{wp}(op_1) \leq ts^{rp}(op_2)$.
\end{lemma}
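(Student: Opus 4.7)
The plan is to compare the completion time of $op_1$ with the start of $op_2$'s read phase and argue by a Byzantine-quorum intersection. Let $t_w$ be the time at which $op_1$'s writer s-bcasts its update on line~\ref{line:bcast update1}, $t_1'$ the completion time of $op_1$, $t_2$ the time at which the reader $r$ of $op_2$ s-bcasts its query on line~\ref{line:bcast query}, and $t_2'\le t_2+D$ the end of $op_2$'s read phase. By Theorem~\ref{thm:ops-live} and the hypothesis, $t_w\le t_1'\le t_w+2D$ and $t_1'<t_2$. Write $v_1:=val^{op_1}$ and $\tau_1:=ts^{wp}(op_1)$; the goal is to show $ts_r^{t_2'}\ge\tau_1$, which will follow if, at the end of $op_2$'s read phase, the pair $valid\_val$ computed by $r$ has timestamp at least $\tau_1$.

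Let $W$ be the set of servers whose ack messages are counted by $op_1$'s write phase, so $|W|\ge\beta\cdot|\mbox{\it Members}_w^{t_w}|$ with at most $f$ Byzantine members, and let $R$ be the set of servers whose reply messages are counted by $op_2$'s read phase, so $|R|\ge\beta\cdot|\mbox{\it Members}_r^{t_2}|$ with at most $f$ Byzantine members. Every correct $s\in W$ processed $op_1$'s update before acking, so $(v_1,\tau_1)$ permanently sits in $\mbox{\it Known\_Writes}[s]_s$; hence every correct $s\in W\cap R$ reports $(v_1,\tau_1)$ when replying to $r$. If I can show that at least $f+1$ correct servers in $W\cap R$ are counted by $r$'s read phase, then $(v_1,\tau_1)$ appears in $\ge f+1$ of $r$'s $\mbox{\it Known\_Writes}[\cdot]$ entries, the $valid\_val$ rule picks up a pair of timestamp $\ge\tau_1$, and the SetValueTimestamp called at line~\ref{line:quorum reached} forces $ts_r^{t_2'}\ge\tau_1$.

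To secure $|W\cap R|\ge 2f+1$ I would split on whether the gap $t_2-t_w$ is small or large. In the \emph{tight} case $t_w\le t_2<t_w+2D$, the universe of servers that can belong to $W\cup R$ consists of those present at some time in $[\max\{0,t_w-4D\},t_2+D]$; bounding this universe with Lemmas~\ref{lem:size} and~\ref{lem:left2} and lower-bounding $|\mbox{\it Members}_w^{t_w}|$ and $|\mbox{\it Members}_r^{t_2}|$ via Lemma~\ref{lem:members-2D}, an inclusion--exclusion inequality translates the required $|W\cap R|\ge 2f+1$ into exactly Constraint~(\ref{constraint:G }) on $\beta$. In the \emph{wide} case $t_2\ge t_w+2D$, direct intersection of $W$ and $R$ may be too small because servers in $W$ can have left; instead I invoke Lemma~\ref{lem:lin4} to show that every correct server active at $t_2$ carries a pair of timestamp $\ge\tau_1$ in its $\mbox{\it Known\_Writes}[s]_s$ (the current triple is always inserted into $\mbox{\it Known\_Writes}[s]_s$ whenever $(val,num,w\_id)$ is updated, either directly from an update or via SetValueTimestamp), and then bound the number of such replying servers via Lemma~\ref{lem:joined-for-D}. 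Combined with Constraint~(\ref{constraint:F }), this yields at least $f+1$ $\mbox{\it Known\_Writes}[\cdot]$ entries at $r$ sharing a common pair of timestamp $\ge\tau_1$, so $valid\_val$ again has timestamp $\ge\tau_1$.

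The main obstacle is the algebraic bookkeeping in the two quorum-intersection counts: in the tight case, enters over a window of length up to $7D$ inflate the upper bound on the universe by up to five or more compounded $(1+\alpha)$ factors while leaves erode $|\mbox{\it Members}|$ by a $(1-\alpha)^4$ factor (Lemma~\ref{lem:members-2D}); in the wide case the fact that $W$ can be almost entirely depopulated by churn requires the separate, propagation-based argument that tracks how $(v_1,\tau_1)$ (or a later write that overwrote it in the current triple of correct servers) spreads via update-echoes and becomes $valid\_val$ at surviving correct servers. Aligning these competing $(1\pm\alpha)^k$ factors against the $f/NS_{min}$ slack so that Constraints~(\ref{constraint:F }) and~(\ref{constraint:G }) are exactly sufficient in their respective regimes is where the bulk of the proof's work will reside.
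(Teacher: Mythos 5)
Your proposal follows essentially the same route as the paper's proof: the same split into the cases $t_r\le t_w+2D$ and $t_r>t_w+2D$, the same quorum-intersection count (yielding $2f+1$ common servers, hence $f+1$ correct ones feeding the $valid\_val$/$f+1$-common-entry mechanism) driven by Constraint~(\ref{constraint:G }) in the tight case, and the same Constraint~(\ref{constraint:F })-based counting combined with the propagation Lemmas~\ref{lem:lin1}/\ref{lem:lin4} in the wide case. The remaining differences (intersecting $Q_w$ and $Q_r$ directly rather than via the set of survivors present at $t_r-2D$, and citing Lemma~\ref{lem:joined-for-D} instead of Lemmas~\ref{lem:size} and~\ref{lem:members-2D} for the replier count) are cosmetic, so the plan is sound and matches the paper.
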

}

\begin{proof}
Let $p_1$ be the client that invokes $op_1$, let $w$ denote the write phase of $op_1$,
let $t_w$ be the start time of $w$, and let $\tau_w = ts^{wp}(op_1) = ts^{t_w}_{p_1}$.
Similarly, let $p_2$ be the client that invokes $op_2$, let $r$ denote the read phase of $op_2$,
let $t_r$ be the start time of $r$, and let $\tau_r = ts^{rp}(op_2) = ts^{t_r}_{p_2}$.

Let $Q_w$ be the set of servers that $p_1$ hears from during $w$
(i.e., that sent messages causing $p_1$ to increment $rw\_counter$
on line~\ref{line:inc heard from ack} of Algorithm~\ref{algo:Client})
and $Q_r$ be the set of servers that $p_2$ hears from during $r$
(i.e., that sent messages causing $p_2$ to increment $rw\_counter$
on line~\ref{line:inc heard from resp} of Algorithm~\ref{algo:Client}).
Let $P_w = |Present_{p_1}^{t_w}| $ and $M_w = |Members_{p_1}^{t_w}| $
be the sizes of the {\em Present} and {\em Members} sets belonging
to $p_1$ at time $t_w$, and $P_r = |Present_{p_2}^{t_r}|$
and $M_r  = |Members_{p_2}^{t_r}| $ be the sizes of the {\em Present}
and {\em Members} sets belonging to $p_2$ at time $t_r$.

\medskip
\noindent
{\bf Case I:} $t_r > t_w + 2D$.
We start by showing that there exists $f+1$ correct servers  in $Q_r$ such that
$t_q^j \leq t_r-2D$. 

Each server $q \in Q_r$ receives and responds to
$r$'s query, so $q$ is joined by time $t_r+D$. By Theorem~\ref{thm:joins},
the number of servers that can join during $(t_r - 2D, t_r + D]$ is at
most the number of servers that can enter in $(\max\{0,t_r - 4D\}, t_r
+ D]$. By Lemma~\ref{lem:size}, the number of servers that can enter
during $(\max\{0,t_r - 4D\}, t_r + D]$ is at most $((1 + \alpha)^5
- 1) \cdot NS(\max\{0,t_r - 4D\})$. By
Lemma~\ref{lem:members-2D}, $(1-\alpha)^4 NS(\max\{0,t_r - 4D\}) -f \le
M_r$. 

From the code and Constraint~(\ref{constraint:F }), it follows that 
\begin{align*}
|Q_r| \ge \beta M_r& >\left[\frac{ (1 + \alpha)^5 - 1 +2f/NS_{min} }{(1 - \alpha)^4 -f/NS_{min}} \right]\cdot M_r
\\ &\geq \left[\frac{ (1 + \alpha)^5 - 1 +2f/NS_{min} }{(1 - \alpha)^4 -f/NS_{min}} \right]\\&\cdot ((1 - \alpha)^4  NS(\max\{0,t_r - 4D\}) -f)
\\ &\geq [(1+\alpha)^5-1]\cdot NS(\max\{0,t_r - 4D\}) +2f 
\end{align*}
which is $2f+1$ more than the maximum
 number of servers that can enter in $(\max\{0,t_r - 4D\}, t_r+ D)$. At most $f$ of these can be Byzantine. Thus, at least $f+1$ correct servers in $ Q_r$ join by time $t_r-2D$.

Suppose correct server $q\in Q_r$ receives $r$'s query message at time $t' \geq t_r \geq t_w
+ 2D$.  If $q \in S_0$, then $t_q^j = 0 \leq t_w$, so, by
Lemma~\ref{lem:lin1}, $ts_q^{t'} \ge ts^{wp}(op_1) =
\tau_w$.  Otherwise, $q \not\in S_0$, so $0 < t_q^j \leq t_r - 2D <
t'$.  Since $t_w+2D < t_r \leq t'$, Lemma~\ref{lem:lin4} implies that
$ts_q^{t'} \ge ts^{wp}(op_1) = \tau_w$.  In either case, $q$ responds
to $r$'s query message with a timestamp at least as large as $\tau_w$
and, hence, $\tau_r \geq \tau_w$.

\medskip
\noindent
{\bf Case II:} $t_r \le t_w + 2D$.
Let
$J = \{ p~|~t_p^j < t_r$ and $p$ is an active server at time $t_r\} \cup \{p~|~ t_r \leq t_p^j \leq t_r+D\}$, which contains the set of all servers that reply to $r$'s query. %
By Theorem~\ref{thm:joins},   all correct servers that are present at time $t_r-2D$ join by time $t_r$ if they remain active. Therefore all servers in $J$ are either active at time $\max\{0,t_r-2D\}$ 
or enter during $(\max\{0,t_r-2D\},t_r+D]$.
By Lemma~\ref{lem:size}, $|J| \leq (1+\alpha)^3NS(\max\{0,t_r-2D\})$.

Let $K$ be the set of all servers that are present at time $\max\{0,t_r-2D\}$ and do not leave 
 during $(\max\{0,t_r-2D\},t_r+D]$.
Note that $K$ contains all the servers in $Q_w$ that do not leave 
 during $[t_w, t_r + D] \subseteq [\max\{0,t_r -2D\},t_r+D]$.
By Lemma~\ref{lem:left2} and Constraint~(\ref{constraint:A }), at most $(1-(1-\alpha)^3)NS(\max\{0,t_r-2D\})$ servers leave during $[\max\{0,t_r -2D\},t_r+D]$. 

From the code, $|Q_r| \geq \beta M_r $ and, by Lemma~\ref{lem:members-2D}, $M_r \geq (1-\alpha)^4  NS(\max\{0,t_r-4D\})-f$. So,
\begin{align*}
|Q_r| \geq \beta \left[ (1-\alpha)^4  NS(\max\{0,t_r-4D\})-f\right].
\end{align*}
Similarly,
$$|Q_w| \geq \beta M_w  \geq  \beta \left[(1-\alpha)^4  NS(\max\{0,t_w-4D\})-f\right]$$
Therefore, the size of $K$ is at least
\begin{align*}
|K| &\ge |Q_w| -(1-(1-\alpha)^3 ) NS( \max\{0,t_r-2D\})-f\\
&\ge \beta \left[(1-\alpha)^4 NS(\max\{0,t_w-4D\}) -f \right] \\
&- (1-(1-\alpha)^3) NS( \max\{0,t_r-2D\}) - f. \label{inequality:k}
\end{align*}

Since $t_r - t_w \le 2D$, it follows that
$\max\{0,t_r - 4D\} - \max\{0,t_w - 4D\} \le 2D$.
By Lemma~\ref{lem:size}, $NS(\max\{0,t_r - 4D\}) \le (1+\alpha)^2 \cdot NS(\max\{0,t_w - 4D\})$. Thus we can replace $NS(\max\{0,t_w - 4D\})$ in the above expression with $(1+\alpha)^{-2} \cdot NS(\max\{0, t_r - 4D\})$ and get the following expression for $|Q_r| +|K|$:
\begin{align*}
&|Q_r|+ |K| \\
&\geq
   \beta [(1-\alpha)^4 NS(\max\{0,t_r-4D\}) -f]\\
    &+\beta [(1-\alpha)^4 (1+\alpha)^{-2} NS(\max\{0,t_r-4D\})-f] \\
    & \hspace{1in}- (1-(1-\alpha)^3)NS(\max\{0,t_r-2D\}) -f\\
&= \beta [(1-\alpha)^4 (1+\alpha)^{-2} (2+2\alpha+\alpha^2) NS(\max\{0,t_r-4D\})-2f] \\
    & \hspace{1in}-(1 - (1-\alpha)^3  )NS(\max\{0,t_r-2D\}) -f.
\end{align*}
By Lemma~\ref{lem:size},
$$NS(\max\{0,t_r-4D\}) \geq (1-\alpha)^{-2} NS(\max\{0,t_r-2D\}) . $$
Thus,
\begin{align*}
|Q_r|+ |K|& \geq \beta [(1-\alpha)^2 (1+\alpha)^{-2} \\&(2+2\alpha+\alpha^2)NS(\max\{0,t_r-2D\}) -2f]\\
& -(1 - (1-\alpha)^3 )NS(\max\{0,t_r-2D\}) -f
\end{align*}
By Constraint~(\ref{constraint:G }),
$$\beta  > \frac{(1 + \alpha)^3 - (1 - \alpha)^3 +1+ (1+3f)/NS_{min}}{\left[(2 + 2\alpha + \alpha^2)(1 - \alpha)^2(1 + \alpha)^{-2}\right] -2f/NS_{min}}  $$
Let $N_{tr2D} = NS(\max\{0,t_r-2D\})$
So,
\begin{align*}
&|Q_r| + |K|\\
&\ge\left( \frac{\left((1 + \alpha)^3 - (1 - \alpha)^3 +1+\frac{ (1+3f)}{NS_{min}}\right) }{\left((2 + 2\alpha + \alpha^2)(1 - \alpha)^2(1 + \alpha)^{-2}\right) -2f/NS_{min}} \right) \\ &\cdot \left((2+2\alpha+\alpha^2)(1-\alpha)^2 (1+\alpha)^{-2}   -\frac{2f}{N_{tr2D}}\right)N_{tr2D}\\&\hspace{1in} -(1-(1 - \alpha)^3)] N_{tr2D} -f  
 \end{align*}
 Since, $N_{tr2D} \ge NS_{min}$, we get
 \begin{align*}
&|Q_r| + |K|\\
&\ge\left((1 + \alpha)^3 - (1 - \alpha)^3 +1+ (1+3f)/NS_{min}\right)N_{tr2D}\\ &-f  -(1-(1 - \alpha)^3)] N_{tr2D} -f \\
&\ge (1 + \alpha)^3 N_{tr2D} + 2f+1 \ge |J| +2f+1.
\end{align*}
This implies that the intersection of $K$ and $Q_r$ has at least $2f+1$ servers. 
For each server $p$ in the intersection, $ts_p \ge \tau_w$
when $p$ sends its reply 
 to $r$. Since at most $f$ servers can be Byzantine, there are at least $f+1$ correct servers that reply with $ts_p \ge \tau_w$. Thus the timestamp of $valid\_val_p$ on Line number~\ref{line:valid val} is $\ge \tau_w$, thus, $\tau_w\le \tau_r$.
\end{proof}

The proof of Theorem~\ref{thm:atomicity} uses Lemma~\ref{lem:lin5} to show that
the timestamps of two non-overlapping operations respect real time ordering
and completes the proof of atomicity.
\begin{theorem}
 \label{thm:atomicity}
\AlgName{} ensures atomicity.
\end{theorem}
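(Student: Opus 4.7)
The plan is to construct a total order $\prec$ on the operations in $\cal T$ and then verify the two conditions of C3. I would order each operation by the pair (timestamp, kind), where the timestamp of a write is $ts^{wp}$ and that of a read is $ts^{rp}$, placing the (unique, by Observation~\ref{obs:unique timestemps}) write with timestamp $\tau$ immediately before all reads carrying timestamp $\tau$, and breaking remaining ties among reads by real-time invocation order. Reads whose timestamp is $(0,\bot)$ sit at the very start of $\prec$, representing the initial register value.

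The ``read returns latest preceding write'' condition is then immediate from Observation~\ref{obs:read values}: a read with timestamp $(0,\bot)$ returns $\bot$, matching the initial value since no write precedes it in $\prec$; a read with any other timestamp $\tau$ returns the value written by the unique write carrying timestamp $\tau$, which by construction is the latest write preceding it in $\prec$.

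For the real-time precedence condition, suppose $op_1$ finishes before $op_2$ starts. Lemma~\ref{lem:lin5} gives $ts^{wp}(op_1) \le ts^{rp}(op_2)$. Inspection of BeginWritePhase shows that for a read, $ts^{wp} = ts^{rp}$ (neither $num$ nor $w\_id$ changes), whereas for a write, $ts^{wp} > ts^{rp}$ because lines~\ref{line:new timestamp1}--\ref{line:new timestamp2} increment the sequence number. Combining the two: if $op_2$ is a write, its ordering timestamp $ts^{wp}(op_2)$ strictly exceeds $ts^{rp}(op_2) \ge ts^{wp}(op_1)$, which is at least the ordering timestamp of $op_1$, so $op_1 \prec op_2$; if $op_2$ is a read, its ordering timestamp $ts^{rp}(op_2)$ weakly exceeds $ts^{wp}(op_1) \ge$ the ordering timestamp of $op_1$, and when equality holds the tie-breaking rule (write-before-matching-reads, then invocation order among reads) still places $op_1$ before $op_2$.

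The main obstacle I anticipate is the bookkeeping around two subtle cases: (i) two reads $op_1$ and $op_2$ sharing a timestamp $\tau$ with $op_1$ finishing before $op_2$ starts, where I need the invocation-time tie-breaker to be well-defined and consistent with $\prec$ being a strict total order; and (ii) making sure that every read in $\cal T$ is matched to a \emph{write in $\cal T$} by Observation~\ref{obs:read values}. The latter is precisely why $\cal T$ is defined to include writes that have merely executed line~\ref{line:bcast update1} (rather than only completed writes): a Byzantine-tolerant read may return a value whose original writer never collected enough acks to complete, yet that writer did s-bcast the update, so the matching write lies in $\cal T$ and can be placed in $\prec$.
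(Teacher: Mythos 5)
Your proposal is correct and follows essentially the same route as the paper's proof: the same total order (writes by their unique timestamps, each read placed after the write it reads from with ties among reads broken by start time), the same reliance on Observation~\ref{obs:read values} for the ``reads return the latest preceding write'' condition, and the same use of Lemma~\ref{lem:lin5} together with the facts $ts^{rp}=ts^{wp}$ for reads and $ts^{wp}=ts^{rp}+1$ for writes to handle the four real-time-precedence cases. Your closing remark about why $\cal T$ must include writes that merely execute line~\ref{line:bcast update1} is a correct and worthwhile observation that the paper leaves implicit.
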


\begin{proof}
This proof is taken from from Theorem 23  in~\cite{AttiyaCEKW2018}. 
We show that, for every execution, there is a total order on the set
of all completed read operations and all write operations that execute
Line~\ref{line:bcast update1} of Algorithm~\ref{algo:Client} such
that every read returns the value of the latest preceding write
(or the initial value if there is no preceding write) and, if
an operation $op_1$ finishes before another operation $op_2$ begins,
then $op_1$ is ordered before $op_2$.

 We first order the write operations in order of their (unique)
timestamps.
 Then, we go over all reads in the ordering of the start times,
and place a read with timestamp $(0,\bot)$ at the beginning of the total order.
Place every other read  after the write operation it reads from ,
and after all the previous reads that read from this write operation.
By the Observation~\ref{obs:read values}, every read in the total
order returns the value of the latest preceding write
(or $\bot$ if there is no preceding write).

 We show that the total order respects the real-time
order of non-overlapping operations in the execution.
Let $op_1$ and $op_2$ be two operations in $\cal T$ such that
$op_1$ finishes before $op_2$ starts.
By the definition of timestamps,
$ts(op_1) \le ts^{wp}(op_1)$ and $ts^{rp}(op_2) \le ts(op_2)$.
By Lemma~\ref{lem:lin5}, $ts^{wp}(op_1) \le ts^{rp}(op_2)$.
Therefore, if $op_2$ is a read, then
\begin{equation}\label{lem:lin6a}
ts(op_1) \leq ts(op_2)
\end{equation}
If $op_2$ is a write, then $ts^{wp}(op_2) = ts^{rp}(op_2) + 1$,
and
\begin{equation}\label{lem:lin6b}
ts(op_1) < ts(op_2)
\end{equation}

We consider the following cases:
\begin{itemize}

\item  Suppose $op_1$ and $op_2$ are both writes.
  By~(\ref{lem:lin6b}), $ts(op_1) < ts(op_2)$ and
  thus the construction orders $op_1$ before $op_2$.

\item Suppose $op_1$ is a write and $op_2$ is a read.  By~(\ref{lem:lin6a})
  and the construction, $op_2$ is placed
  after the write $op_3$ that $op_2$ reads from.  If $ts(op_1) =
  ts(op_2)$ then $op_1 = op_3$ and $op_2$ is placed after $op_1$. If
  $ts(op_1) < ts(op_2)$ then $op_3$ is placed after $op_1$ as
  $ts(op_1) < ts(op_3)$ and thus $op_2$ is placed after $op_1$ in the
  total order.

\item Suppose $op_1$ is a read and $op_2$ is a write.   By~\ref{lem:lin6b},
  $ts(op_1) < ts(op_2)$.  Now, either
  $op_2$ is the first write in the execution and $op_1$'s timestamp is
  $(0,\bot)$ or there exists another write $op_3$ that $op_1$ reads
  from.  If $op_1$'s timestamp is $(0,\bot)$ then the construction
  orders $op_1$ before $op_2$. Otherwise, the construction orders
  $op_3$ before $op_2$.  Since $op_1$ is ordered after $op_3$ but
  before any subsequent write, $op_1$ precedes $op_2$ in the total
  order.

\item Finally, suppose that $op_1$ and $op_2$ are both reads.  By~\ref{lem:lin6a},
  $ts(op_1) \le ts(op_2)$. If $op_1$ and
  $op_2$ have the same timestamp, then they are placed after the same
  write (or before the first write) and the construction orders them
  based on their starting times. Since $op_1$ completes before $op_2$
  starts, the construction places $op_1$ before $op_2$.  If $op_2$ has
  a timestamp greater than that of $op_1$, then $ts(op_2)$ cannot be
  $(0,\perp)$ and so there is a write operation $op_3$ whose timestamp
  is greater than that of $op_1$ and equal to that of $op_2$. The
  construction places $op_1$ before $op_3$ and $op_2$ after $op_3$.
\end{itemize}
Thus, {\sc \AlgName{}} ensures atomicity.
\end{proof}



\section{Discussion}
\label{section:discussion}
Our paper provides an algorithm that emulates a Byzantine-tolerant atomic register in a dynamic system (i) where consensus is impossible to solve, (ii) that never stops changing and (iii) has no upper bound on the system size. We also provide an impossibility  proof that in our model, a uniform algorithm cannot be implemented. 

There are several directions for future work. The values of $\alpha$, $f$ and $NS_{min}$ that satisfy our algorithm are quite restrictive. It will be nice to see if such restrictions are necessary or if they can be improved either with a better algorithm or a tighter correctness analysis.

Currently our  model  tolerates the most severe end of the fault severity spectrum and paper~\cite{AttiyaCEKW2018} considered the most benign end of the fault severity spectrum. 
 In future, we would like to 
   to see if our impossibility result extends to the other failure models?

Our current way of restricting churn relies on the unknown upper bound $D$ on message delay. 
\sapta{An alternative may be to allow the upper bound on the message delays 
to be unbounded and to
   define the churn rate with respect to messages in transit.}
For example, at all times, the number of servers that can enter/leave the system when any message is in transit at most $\alpha$ times the system size when the message was sent. It may be possible to prove that the
two models are indeed equivalent, or to show that the algorithms still work, or can be modified to work, in
the new model.

\bibliography{references}
\bibliographystyle{abbrv}
\newpage
\end{document}